%% file: main.tex
\documentclass[a4paper,11pt]{article}
\pdfoutput=1
\usepackage[absolute,overlay]{textpos} 
\usepackage{tcs}
\usepackage{tikz}
\usepackage{dsfont}
\usepackage[font=footnotesize]{caption}

\DeclareUnicodeCharacter{1EF3}{\'{y}}
\usepackage{tocloft}

\begin{document}

\setlength{\TPHorizModule}{1cm}
\setlength{\TPVertModule}{1cm}
\begin{textblock}{5}(13.4,1.5)
    \noindent \raggedleft \footnotesize{MIT-CTP/6005}
\end{textblock}


\title{
Entanglement in quantum spin chains \\is strictly finite at any temperature}
\author{
Ainesh Bakshi \\
\texttt{aineshbakshi@nyu.edu} \\
NYU
\and
Soonwon Choi \\
\texttt{soonwon@mit.edu} \\
MIT
\and
Sa\'ul Pilatowsky-Cameo \\
\texttt{saulpila@mit.edu} \\
MIT
}
\date{}

\maketitle

\begin{abstract}

Entanglement is the hallmark of quantum physics, yet its characterization in interacting many-body systems at thermal equilibrium remains one of the most important challenges in quantum statistical physics. We prove that the Gibbs state of any quantum spin chain can be exactly decomposed into a mixture of matrix product states with a bond dimension that is independent of the system size, at any finite temperature. As a consequence, the Schmidt number, arguably the most stringent measure of bipartite entanglement, is strictly finite for thermal states, even in the thermodynamic limit. Our decomposition is explicit and is accompanied by an efficient classical algorithm to sample the resulting matrix product states.
\end{abstract}

\thispagestyle{empty}
\clearpage
\newpage

\microtypesetup{protrusion=false}
{
  \hypersetup{linkcolor=black}
\tableofcontents{}
}
\thispagestyle{empty}
\microtypesetup{protrusion=true}
\clearpage
\setcounter{page}{1}

\input{intro}
\input{prelims}

\input{separability}

\input{combinatorics}
\input{entanglement-bulk-decomp}

\input{prep_algorithm}

\section*{Acknowledgments}
\addcontentsline{toc}{section}{Acknowledgments}
The authors would like to thank Allen Liu for several valuable discussions in the early stages of this work. Part of this work was done while A.B. was visiting the Simons Institute.

\printbibliography

\end{document}

%% file: intro.tex
\newcommand{\dyad}[1]{|#1\rangle \langle #1|}

\definecolor{mpstensors}{RGB}{163, 205, 255}
\definecolor{Mmatrices}{RGB}{255, 195, 106}
\definecolor{statecircs}{RGB}{255, 255, 255}
\definecolor{gibbsstatecolor}{RGB}{255, 153, 0}
\definecolor{Xboxcolor}{RGB}{220, 220, 220}

\newcommand\tensorheight{0.8}
\newcommand\shorttensorheight{0.7}
\newcommand\tensorwidth{0.9}

\newcommand\shorttensorstaking{2*\shorttensorheight} 
\pgfdeclarelayer{bg}    
\pgfsetlayers{bg,main}  

\newcommand\tensorlegheight{1.3}
\newcommand\shorttensorlegheight{1.1}

\newcommand{\drawbox}[4]{
      \filldraw[fill=#3] (-#1,-#2) -- (-#1,#2) -- (#1,#2) -- (#1,-#2) -- (-#1,-#2);
        \draw (0,0) node {\small #4};
}

\newcommand{\drawboxnoright}[4]{
      \filldraw[fill=#3] (#1,-#2) -- (-#1,-#2) -- (-#1,#2) -- (#1,#2); 
        \draw (0,0) node {\small #4};

}
\newcommand{\tensorbox}[1]{ \drawbox{\tensorwidth}{\tensorheight}{mpstensors}{#1}}
\newcommand{\drawMbox}[2]{
\begin{scope}[shift={(#1)}]
      \drawbox{\Mboxwidth}{\shorttensorheight}{Mmatrices}{#2} 
 	\end{scope}
}

\newcommand{\drawgibbs}[3]{
\begin{scope}[shift={(#1)}]
      \drawboxnoright{#2}{\shorttensorheight}{gibbsstatecolor}{#3} 
 	\end{scope}
}
\newcommand{\drawXbox}[3]{
\begin{scope}[shift={(#1)}]
      \drawboxnoright{#2}{\shorttensorheight}{Xboxcolor}{#3} 
 	\end{scope}
}
\newcommand{\lefttensor}[2]{
	\begin{scope}[shift={(#1)}]
		\draw (0,0) -- (\tensorlegheight,0);
		\draw (0,\tensorlegheight) -- (0,0); 
        \tensorbox{#2};
	\end{scope}
}
\newcommand{\middletensor}[2]{
	\begin{scope}[shift={(#1)}]
		\draw (-\tensorlegheight,0) -- (\tensorlegheight,0);
		\draw (0,\tensorlegheight) -- (0,0);  
        \tensorbox{#2};
	\end{scope}
}
\newcommand{\righttensor}[2]{
	\begin{scope}[shift={(#1)}]
		\draw (-\tensorlegheight,0) -- (0,0);
		\draw (0,\tensorlegheight) -- (0,0); 
        \tensorbox{#2};
	\end{scope}
}
\newcommand{\lefttensordag}[2]{
	\begin{scope}[shift={(#1)}]
		\draw (0,0) -- (\tensorlegheight,0);
		\draw (0,-\tensorlegheight) -- (0,0); 
        \tensorbox{#2};
	\end{scope}
}
\newcommand{\middletensordag}[2]{
	\begin{scope}[shift={(#1)}]
		\draw (-\tensorlegheight,0) -- (\tensorlegheight,0);
		\draw (0,-\tensorlegheight) -- (0,0);  
        \tensorbox{#2};
	\end{scope}
}
\newcommand{\righttensordag}[2]{
	\begin{scope}[shift={(#1)}]
		\draw (-\tensorlegheight,0) -- (0,0);
		\draw (0,-\tensorlegheight) -- (0,0); 
        \tensorbox{#2};
	\end{scope}
}
\pgfmathsetmacro\MathAxis{height("$\vcenter{}$")}

\newcommand{\marginout}[0]{0.5}
\newcommand{\Mboxwidth}[0]{2.5}
\newcommand{\drawingsscale}[0]{0.3}

\section{Introduction}
Entanglement is the key resource that distinguishes quantum systems from their classical counterparts. In quantum information, entanglement enables tasks that are otherwise impossible, including quantum teleportation~\cite{Bennett1993}, superdense coding~\cite{Benett1992}, device-independent cryptography~\cite{Acin2007}, and highly sensitive quantum metrology~\cite{Degen2017, Giovannetti2011}. In many-body physics, entanglement characterizes 
topological phases of matter~\cite{Wen2017,Shi2020}, and its growth and spread in quenched dynamics governs thermalization and information scrambling~\cite{Gogolin2016}. In equilibrium, most quantum many-body systems are well-described by the thermal Gibbs state, parametrized by a few macroscopic quantities such as temperature.  Therefore, a foundational question that cuts across quantum information, computation and many-body physics is as follows:

\begin{quote}
\begin{center}
\emph{How much entanglement is contained in the thermal state of a quantum system?}
\end{center}
\end{quote}

Entanglement in mixed states, such as the thermal state, is notoriously hard to quantify \cite{Horodecki2009}. Unlike  pure states, whose entanglement is uniquely characterized by their subsystem entropy, mixed states can be decomposed into many different ensembles of pure states, each exhibiting widely different amounts of entanglement. The standard approach is then to seek the \emph{optimal} unraveling into pure states, one where the constituent states have low entanglement. This optimization task is however rendered intractable by the fact that the underlying Hilbert space grows exponentially in the system size. For example, even deciding whether the state is \emph{separable} (zero entanglement) is computationally hard~\cite{Gurvits2003}. One notable exception is the recent result of Bakshi, Liu, Moitra, and Tang~\cite{bakshi2024high}, which proves that thermal states of local Hamiltonians are entirely separable above a fixed constant temperature. A similar result for high-temperature Gibbs states of fermionic Hamiltonians was subsequently obtained by Ramkumar, Cai, Tong and Jiang~\cite{ramkumar2025high}.

In general, rigorously quantifying entanglement in thermal states below the separability threshold of~\cite{bakshi2024high} remains extremely challenging, even for one-dimensional systems. From a resource-theoretic standpoint, this difficulty can be ascribed to a lack of \emph{asymptotic reversibility}. 
Specifically, let us imagine asking how much entanglement is needed or can be extracted from multiple copies of a given quantum state.
For this purpose, given a bipartition of the state, local operations on each partition, as well as classical communication between the two parties is considered free.
For pure states, a seminal result of Bennett, Bernstein, Popescu, and Schumacher~\cite{Bennett1996} provides an operational interpretation of von Neumann entropy and shows that it coincides with the number of Bell pairs required to prepare the state under local operations (\emph{entanglement cost}) as well as the number of Bell pairs that can be extracted from a state (\emph{distillable entanglement}), as the number of copies tends to infinity. Therefore, every pure state can be converted into and from a finite number of copies of maximally entangled Bell states at a unique rate. Mixed states, however, can possess {\it bound entanglement}, having an entanglement cost which is strictly larger than the distillable entanglement~\cite{Horodecki1998,Vidal2001}. Therefore, there is no single canonical entanglement measure for mixed states.   

While quantifying entanglement in general mixed states is computationally intractable, correlations in Gibbs states have been extensively studied by the physics community. In particular, it is widely believed that for thermal states, the amount of correlations is dictated by their locality. A prominent example is the thermal area law derived by Wolf, Verstraete, Hastings, and Cirac~\cite{Wolf2008}, which roughly states that the quantum mutual information across a subsystem scales proportional to its boundary.\footnote{Wolf et al. prove an area law for local Hamiltonians in any spatial dimension and show the quantum mutual information scales linearly in the inverse temperature. This scaling was improved to sub-linear by~\cite{Kuwahara2021}.} While this result confirms the expectation that correlations should be local, it offers limited utility in quantifying quantum entanglement. The quantum mutual information fails to distinguish between classical and quantum correlations; furthermore, it can be negligible even when large amounts of entanglement are present~\cite{Hayden2006}. Consequently, deeper insight is required to specifically characterize the quantum properties of systems in thermal equilibrium.

One-dimensional systems are unique because their entanglement can be canonically quantified using the matrix-product-state (MPS) representation, as formalized by Vidal~\cite{Vidal2003,PerezGarcia2007}. Because the boundary of any connected bipartition in 1D is a single point, it must mediate all the entanglement between the two sides. A low-entangled pure state can therefore be represented by a sequence of tensor contractions, where the \textit{bond dimension} directly upper bounds the entanglement entropy. Due to the efficiency and numerical stability of MPS algorithms for simulating ground states~\cite{White1992,Schollwoeck2011} and finite-temperature states~\cite{Verstraete2004MPDO,Stoudenmire2010}, they have become the workhorse representation in condensed-matter physics and quantum chemistry~\cite{Schollwick2005,Chan2011}. This formalism extends to mixed states, where entanglement can be bounded by expressing the system as a mixture of MPSs or as matrix product operators (MPOs). Despite increasing efforts to provide such approximate decompositions for thermal states~\cite{Berta2018,Kuwahara2021,Hastings2006,Pirvu2010,Eisert2014,GuthJarkovsk2020,Huang2021,Alhambra2021},  existing approximation theorems by themselves do not entirely quantify the entanglement in thermal states: in existing constructions the required bond dimension diverges both in the thermodynamic limit (as system size grows) and as the approximation error is driven to zero. 

\section{Results}
In this work, we provide an exact decomposition for the Gibbs state of any local 1D spin-chain Hamiltonian as a mixture of MPSs, where the bond dimension is independent of the system size. Such a decomposition places a strong bound on the amount of entanglement. Specifically, it implies that the {\it Schmidt number} across any cut is finite, even in the thermodynamic limit. We also provide a classical algorithm to sample from the distribution of MPSs in polynomial time. Taken together, our results demonstrate that entanglement in one-dimensional thermal states is strictly finite and establish mixtures of MPSs as an exact and classically accessible description of thermal equilibrium. 

\subsection{Thermal states are exact mixtures of matrix product states} 
  We study a spin chain of $\qubits$ qubits labeled from $1$ to $\qubits$. Low-entangled pure states of such a system can be efficiently represented as an MPS. For each possible state of the $j$th qubit, $\ket{b_j}\in\{\ket{0},\ket{1}\}$, we  consider a matrix $A_j^{\smash{b_j}}$ and build an MPS via matrix multiplication:
\begin{align}
\label{eq:MPSdef}
    \ket{\phi}&=\frac{1}{\sqrt{\alpha}}\sum_{b_j\in\{0,1\}} A_1^{b_1}A_2^{b_2}\cdots A_\qubits^{b_\qubits} \ket{b_1 b_2\cdots  b_\qubits},
    \end{align}
where the number $\alpha>0$ ensures normalization. For $1<j<\qubits$, the matrices $A_j^{\smash{b_j}}$ have dimension $\chi\times \chi$, while $A_1^{\smash{b_1}}$ and $A_\qubits^{\smash{b_\qubits}}$ are $1\times \chi$ and $\chi \times 1$-dimensional, respectively. The parameter $\chi$ is called the {\it bond dimension}, and it strictly bounds the amount of entanglement in $\ket{\phi}$.  Specifically, $\chi$ upper bounds the Schmidt rank of $\ket{\phi}$, and hence $\log(\chi)$ bounds the von Neumann entanglement entropy across any connected bipartition.

We study the  Gibbs state of a geometrically $\locality$-local Hamiltonian in one dimension,
\begin{align}
    g_\beta&=\exp(-\beta H)/Z_\beta, & Z_\beta&=\tr(\exp(-\beta H)),
\end{align}
where $H=\sum_{j=1}^{\qubits-\locality} H_{j}$ and each local term $H_{j}$ has support between qubits $j$ and $j+\locality-1$ with bounded operator norm $\norm{H_j}\leq 1$.
The bound on the operator norm implicitly defines the unit of temperature, which we treat as a dimensionless quantity henceforth.  The parameter $\beta$ is the inverse temperature, which we will assume to be some arbitrary constant. We take open boundary conditions for simplicity (no interactions between the first and last qubits), but our methods can be generalized to allow for periodic boundary conditions. Our main result guarantees that $g_\beta$ can be expressed exactly as a mixture of MPSs with constant bond dimension, at any nonzero temperature. Below, we denote $\tilde\beta\coloneqq \max\{1,\beta\}$.

\begin{theorem}(1D Thermal states are exact mixtures of MPSs)
\label{th:GibbsStatesareMixturesOfMPS}
In any geometrically $\locality$-local Hamiltonian on a spin chain of length $\qubits$ and at any inverse temperature $0 \leq \beta<\infty$, the Gibbs state $g_\beta$ admits the following decomposition  (see Fig.~\ref{fig:01}):
\begin{equation}
    \label{eq:gibbsstatedecomp}
    g_\beta  =  \Paren{ M_1 \cdot M_2 \cdots M_{\qubits-m +1}} \cdot \sigma\cdot  ( M_1 \cdot M_2 \cdots M_{\qubits-m+1})^\dagger
\end{equation}
where each $M_i$ is supported between qubits $i$ and $i+m-1$, for $m={\exp\Paren{c\tilde{\beta}}}$ with $c \leq \exp(8 \locality)$ and $\sigma$ is separable over stabilizer product states\footnote{Stabilizer product states are of the form $\ket{\bm  s}=\ket{s_1}\otimes \cdots \otimes\ket{s_\qubits}$, where each label $s_j\in\mathcal{S}\coloneqq\{0,1,+,-,+i,-i\}$ identifies an eigenstate of one of the Pauli matrices. }. In particular, Eq.~\eqref{eq:gibbsstatedecomp} implies that $g_\beta$ can be exactly decomposed as a mixture
\begin{equation}
\label{eq:mixtureMPS}
    g_\beta=\sum_{\bm s\in \mathcal{S}^\qubits}  p_{\bm  s} \dyad{\phi_{\bm  s}}
\end{equation}
of MPSs $\ket{\phi_{\bm  s}}$ of bond dimension $\chi\leq \exp(\exp(c\tilde \beta))$.  
\end{theorem}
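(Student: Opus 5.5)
Our plan is to reduce the low-temperature case to the high-temperature separability result of~\cite{bakshi2024high} by peeling off a constant-width layer of local operators. We write $e^{-\beta H} = e^{-\beta H/2}\, e^{-\beta H/2}$ and aim to factor $e^{-\beta H/2} = (M_1 M_2\cdots M_{\qubits-m+1})\, e^{-\beta' H'/2}$. Here $\beta'$ lies below the separability threshold and $H'$ is an effective local Hamiltonian, possibly with terms of range up to $m$. If this works, then $\sigma \propto e^{-\beta' H'}$ is a high-temperature Gibbs state. We will then invoke the separability theorem of~\cite{bakshi2024high} in its stabilizer-product form, strengthened to allow $m$-local terms with suitably bounded norms. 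This gives $\sigma$ as a mixture of stabilizer product states. Once Eq.~\eqref{eq:gibbsstatedecomp} holds, Eq.~\eqref{eq:mixtureMPS} follows directly. Applying the staircase $M_1\cdots M_{\qubits-m+1}$ to a product state $\ket{\bm s}$ gives a sequential circuit of $m$-local gates. Any cut is crossed by at most $m$ gates, each acting on $m$ qubits, so the Schmidt rank is at most $2^{O(m^2)}$ or, more carefully, $2^{O(m)}\le \exp(\exp(c\tilde\beta))$. After normalization, $p_{\bm s}$ absorbs the norms.

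The key step is the factorization. First we will try a cluster/block construction. We split $\beta$ into $K=O(\tilde\beta)$ small slices, $e^{-\beta H/2} = (e^{-\beta H/2K})^K$. For each small slice, $e^{-\delta H}$ is written as a product of local factors times a remainder of the same form, using a Trotter/BCH-type expansion with exact correction terms. Exactness is essential, so no truncation error is allowed. We aim instead for an identity of the form $e^{-\delta H} = \prod_j e^{-\delta H_j}\cdot e^{-\delta \tilde H}$, where $\tilde H$ collects commutator corrections. These corrections are quasi-local, with decay controlled by $\delta$.

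An alternative, cleaner route is the following. Choose $M$ so that $\sigma = M^{-1} g_\beta M^{-\dagger}$ has a cluster expansion with high-temperature-like decay. Take $M = \prod_i M_i$ with $M_i$ built from $e^{-\beta h_i/2}$, where the $h_i$ are block Hamiltonians on windows of length $m$ arranged in a staircase. Then verify that the residual $M^{-1} e^{-\beta H} M^{-\dagger}$ equals $e^{-\beta' H'}$ with $H'$ having exponentially decaying interactions. The window length $m=e^{c\tilde\beta}$ is what makes boundary effects of imaginary-time evolution, which grow like $e^{O(\beta)}$ in 1D à la Araki, small enough for the residual to sit below the threshold.

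The main obstacle is making this exact with controlled quasi-locality. The residual $H'$ must be genuinely local up to range $m$, or have tails summable enough, that the separability theorem still applies. This will require an Araki-type bound on $\|e^{\beta H}\, e^{-\beta H_{[a,b]}}\|$ in 1D, which is finite independent of system size but $e^{e^{O(\beta)}}$-large. It will also require a combinatorial cluster expansion showing the residual's coefficients decay fast enough. These are presumably the contents of the later combinatorics and bulk-decomposition sections.
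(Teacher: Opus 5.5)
Your passage from Eq.~\eqref{eq:gibbsstatedecomp} to Eq.~\eqref{eq:mixtureMPS} is fine. The gates of the staircase that cross a cut act on only $m-1$ sites to its left, so $\chi\le 2^{m-1}$; that exact count, rather than an unspecified $2^{O(m)}$, is what gives $\exp(\exp(c\tilde\beta))$.

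The gap is the factorization itself, which is the whole content of the theorem. You name the ingredients (an Araki-type bound, a combinatorial expansion) but never construct $\mathcal{M}$, and the residual you aim for has the wrong form.
\begin{itemize}
\item \emph{Trotter/BCH route.} Moving the local factors of later slices to the left past the correction factors $e^{-\delta\tilde H}$ creates new corrections at each of the $K$ slices, under non-unitary conjugations. Nothing in your plan controls how they accumulate.
\item \emph{Window route.} A block factor $e^{-\beta h/2}$ only removes the evolution inside its window. If the window decoupled from the rest, $e^{\beta h/2}e^{-\beta H/2}$ would just be the low-temperature evolution of the complement, nowhere near the identity. The obvious staircase of overlapping window Hamiltonians counts each term of $H$ about $m$ times.
\item \emph{Form of the residual.} More fundamentally, $Y=\mathcal{M}^{-1}e^{-\beta H/2}$ is not Hermitian, so the residual is $\sigma=YY^\dagger$, not $e^{-\beta'H'}$ with $H'$ local and Hermitian. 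Nothing in your plan shows that $-\log\sigma$ is a small quasi-local Hamiltonian, and the Gibbs-state separability theorem of~\cite{bakshi2024high} does not apply to $YY^\dagger$ as stated.
\end{itemize}

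The missing idea is to peel off one site at a time: the entanglement bulk decomposition, \cref{th:entanglement-bulk-decomposition}.
\begin{itemize}
\item \emph{Convergence of the expansional.} Expand the single-term Araki expansional $e^{-\beta H}e^{\beta(H-H_1)}=\sum_t\frac{\beta^t}{t!}\sum_\ell E_{t,\ell}$. Here $E_{t,\ell}$ is supported on $[\ell]$ and $\|E_{t,\ell}\|\le 2^tP_{t,\ell}$, with $P_{t,\ell}$ the number of valid growth sets. The one-dimensional bound $\sum_\ell C^\ell P_{t,\ell}\le t!\,\bigl(4(\locality-1)/\log(t/C^{\locality-1}+\tfrac12)\bigr)^t$, obtained via Touchard polynomials, makes the series converge at every $\beta$, with norm at most $\Lambda=\exp(\exp(10\beta\locality))$.
\item \emph{Exact factorization.} Let $M$ be the truncation of the series to $\ell\le m$ and $R$ the remainder. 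Put $E=M^{-1}R$ and $X=e^{\beta(H-H_1)}(I+E)e^{-\beta(H-H_1)}$. Then $e^{-\beta H}=M\,e^{-\beta(H-H_1)}\,X$ holds exactly.
\item \emph{Quasi-locality of $X$.} The Hadamard formula, together with a nested-commutator bound in terms of $P_{\tau,\ell}$, shows that $X=I+\sum_{\ell\ge m}F_\ell$ with $F_\ell$ supported on $[\ell]$ and $\|F_\ell\|\le\gamma^\ell$, provided $m=\exp(c_\gamma\tilde\beta)$. This choice of $m$ is where the doubly exponential bond dimension comes from.
\item \emph{Iteration.} Iterating at $\beta/2$ yields $e^{-\beta H/2}=\mathcal{M}\mathcal{X}$, where $\mathcal{X}=X_{\qubits-m+1}\cdots X_1$ is a staircase of non-Hermitian quasilocal perturbations of the identity.
\item \emph{Separability.} Separability of $\sigma=\mathcal{X}\mathcal{X}^\dagger$ needs its own argument, \cref{lem:decay-implies-separable}. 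For decay $\gamma\le 1/56$, each conjugation $X(I+cA)X^\dagger$ is expanded in Pauli strings, symmetrized, and rewritten as $\sum_j w_j|s_j\rangle\langle s_j|\otimes(I+c_jB_j)$ with the same decay. This lets the sites be pinned one at a time.
\end{itemize}
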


 \begin{figure}[tb]
 \centering
\includegraphics{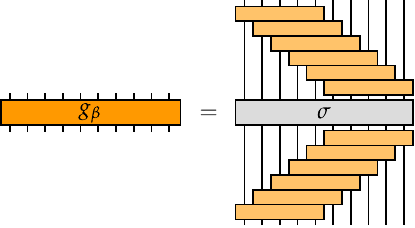}
\caption{Tensor network diagram of the decomposition of \cref{th:GibbsStatesareMixturesOfMPS}, where the Gibbs state $g_\beta$ is decomposed as a ladder of local operators acting on a separable state $\sigma$. Each vertical line represents a qubit, and each box represents an operator supported within. Vertically stacking boxes represents tensor contraction or equivalently matrix multiplication.}
\label{fig:01}
 \end{figure}

There exists a rich body of literature that rigorously constructs tensor network approximations to one-dimensional Gibbs states. Huang~\cite{Huang2021} and Alhambra and Cirac~\cite{Alhambra2021} obtain local approximations to Gibbs states via an MPO, where the approximation degrades exponentially in the locality. Kliesch, Gogolin, Kastoryano, Riera and Eisert~\cite{Eisert2014} obtain a global approximation in the high-temperature regime. Berta, Brand\~ao, Haegeman, Scholz and Verstraete~\cite{Berta2018} obtained the first global MPS approximation that works at any finite temperature. Kuwahara, Alhambra and Anshu~\cite{Kuwahara2021} have the best global approximation, where the bond dimension scales exponentially in $\beta$ and sub-linearly in system size $\qubits$ and the inverse accuracy $1/\eps$.   To the best of our knowledge, all prior work achieves approximate tensor-network decompositions, where the bond dimension grows with system size and the desired closeness to the target Gibbs state.  In contrast, our MPS mixture is an exact decomposition of the Gibbs state, with bond dimension that scales doubly-exponentially in $\beta$, but is independent of system size.   
See~\cref{table1} for a detailed comparison.

\begin{table}[tb]
    \centering
\begin{tabular}{ |c|c|c|c| } 
 \hline
 Reference& Type & Accuracy ($\varepsilon$) & Bond dimension \\ 
 \hline
  
    \cite{Huang2021}& MPO & Local  &  $\exp\Paren{\widetilde{\mathcal{O}}\big(\beta^{3/2}+\sqrt{\beta \log(1/\varepsilon)}\big)}$ \\ 
    \cite{Alhambra2021}& MPO & Local  &  $\left(\xi/\eps \right) \exp\left(\widetilde{\mathcal{O}} \paren{ \max\paren{\beta,\sqrt{\beta \log(\xi/\eps)} }}\right)$ \\ 
    \cite{Eisert2014}& MPO & Global  &  $\exp\Paren{\bigO{\ln({C(\beta)\qubits/\varepsilon})}}$ (high temperature) \\ 
    \cite{Molnar2015}& MPO & Global  &  $(\qubits/ \varepsilon)^{\bigO{\beta}}$ \\ 
   \cite{Kuwahara2021}& MPO & Global   & $\exp(q,\log(q)),  q=\bigO{\max(\beta^{2/3},[\beta\log(\beta \qubits/\varepsilon)]^{1/2}}$ \\ 
 \cite{Berta2018}& MPS  & Global  & $ \exp\Paren{\bigO{\log_2^2(\qubits/\varepsilon)}} \cdot \exp\exp\Paren{\bigO{\beta}} $ \\ 
  \cite{Kuwahara2021}& MPS & Global   & $\exp(\widetilde{\mathcal{O}}(\max(\beta^{2/3},[\beta\log(\beta \qubits/\varepsilon)]^{1/2}))$ \\ 
 \cref{th:GibbsStatesareMixturesOfMPS} & MPS  & Global, exact & $\exp\Paren{\exp\Paren{ \mathcal{O}\Paren{ \beta } }}$  \\ 
 \hline
\end{tabular}
\caption{Bond dimension for tensor network approximations to thermal states in 1D using matrix product operators (MPOs) or mixtures of matrix product states (MPSs), up to global or local trace distance $\varepsilon$, as specified. The symbol $\xi$ denotes correlation length, $\qubits$ number of qubits, and $\beta$ inverse temperature.}
\label{table1}
\end{table}
\subsection{Bound on thermal entanglement in 1D} 
Theorem~\ref{th:GibbsStatesareMixturesOfMPS} places a strong limit on the bipartite quantum entanglement of the thermal state, as measured by its \emph{Schmidt number}.  The Schmidt number of a mixed state, introduced by Terhal and Horodecki~\cite{Terhal2000} bounds the Schmidt rank of all pure states in an optimal unraveling. This is arguably the most stringent measure of bipartite entanglement in mixed states, since its logarithm bounds many well-studied entanglement measures~\cite{Horodecki2009}, including the R\'enyi entanglement of formation of every order~\cite{Kuwahara2021} and the entanglement of distillation~\cite{Bennett1996} (see \cref{sec:prelims} for formal definitions). Theorem~\ref{th:GibbsStatesareMixturesOfMPS} immediately implies a finite Schmidt number for the Gibbs state in 1D systems.

\begin{corollary}[Entanglement is strictly finite]
Given any inverse temperature $0\leq \beta < \infty$ and any $\qubits$-qubit, one-dimensional local Hamiltonian $H$, the Gibbs state $g_\beta$ has Schmidt number at most $$\mathrm{SN}({g_\beta}) \leq \bonddim\leq \exp(\exp\Paren{ c\tilde{\beta}})$$ across every contiguous bipartition of the system, independently of the system size.
\end{corollary}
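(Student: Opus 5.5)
The corollary follows from the mixture decomposition in \cref{th:GibbsStatesareMixturesOfMPS} together with the definition of the Schmidt number. Recall that $\mathrm{SN}(\rho)$ is the minimum, over all decompositions $\rho=\sum_k p_k \dyad{\psi_k}$ into pure states, of $\max_k \mathrm{SR}(\ket{\psi_k})$, where $\mathrm{SR}$ is the Schmidt rank across the chosen bipartition. Any explicit decomposition therefore upper bounds $\mathrm{SN}$. I would take the decomposition of Eq.~\eqref{eq:mixtureMPS}, $g_\beta=\sum_{\bm s} p_{\bm s}\dyad{\phi_{\bm s}}$, and bound the Schmidt rank of each $\ket{\phi_{\bm s}}$ across an arbitrary contiguous cut $\{1,\dots,k\}\,|\,\{k+1,\dots,\qubits\}$ by its bond dimension $\chi$.

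The key step is the standard fact that an MPS with bond dimension $\chi$ has Schmidt rank at most $\chi$ across every contiguous cut. Starting from Eq.~\eqref{eq:MPSdef}, I would split the matrix product at bond $k$ and insert a resolution of the identity on the $\chi$-dimensional bond index $a$:
\begin{equation*}
\ket{\phi_{\bm s}}=\frac{1}{\sqrt{\alpha}}\sum_{a=1}^{\chi}\Big(\sum_{b_1,\dots,b_k}\big[A_1^{b_1}\cdots A_k^{b_k}\big]_{a}\ket{b_1\cdots b_k}\Big)\otimes\Big(\sum_{b_{k+1},\dots,b_\qubits}\big[A_{k+1}^{b_{k+1}}\cdots A_\qubits^{b_\qubits}\big]_{a}\ket{b_{k+1}\cdots b_\qubits}\Big).
\end{equation*}
This writes $\ket{\phi_{\bm s}}$ as a sum of at most $\chi$ product vectors across the cut. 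The reduced state on the left block is then supported on a space of dimension at most $\chi$, so $\mathrm{SR}(\ket{\phi_{\bm s}})\le \chi$.

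Since this bound holds for every $\bm s$ and every cut $k$, we get $\mathrm{SN}(g_\beta)\le\chi\le\exp(\exp(c\tilde\beta))$, using the bond-dimension bound from \cref{th:GibbsStatesareMixturesOfMPS}. This bound does not depend on $\qubits$.

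There is no real obstacle here, since all the work is in the theorem. The only point to check is that the bond dimension $\chi$ stated in the theorem bounds the bond at every cut uniformly, including the boundary bonds near sites $1$ and $\qubits$, where the bond is even smaller. This holds directly from the dimensions of the matrices in Eq.~\eqref{eq:MPSdef}: every bond has dimension at most $\chi$.
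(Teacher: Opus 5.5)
Your proof is correct and matches the paper's argument. The paper obtains the corollary directly from the MPS-mixture decomposition of Theorem~\ref{th:GibbsStatesareMixturesOfMPS}, combined with the observation that a mixture of MPSs of bond dimension $\chi$ has Schmidt number at most $\chi$ across any contiguous cut. You spell out that observation explicitly by splitting the matrix product at the cut bond.
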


As an immediate consequence, we obtain an upper bound $\exp(\bigO{\beta})$ on the R\'enyi entanglement of formation of every order (including order zero), the distilled entanglement and the entanglement cost. 
Prior work of Kuwahara, Alhambra and Anshu~\cite{Kuwahara2021} only obtains a bound on R\'enyi entanglement of formation of order $\alpha$ ($\alpha$ strictly greater than $0$), albeit they achieve a better dependence on the inverse temperature. This reference also places a bound on the \emph{entanglement of purification}, but we note that this quantity is not a measure of entanglement, since it can be large even for separable states~\cite{Terhal2002}.

\subsection{Efficient classical simulation algorithm} 
Finally, in \cref{Sec:samplingalgo}, we provide an efficient classical simulation algorithm that outputs the description of the MPSs given by Eq.~\eqref{eq:gibbsstatedecomp}:

\begin{theorem}[Efficient classical simulation algorithm]
\label{thm:efficientAlgorithm}
Given $0\leq \beta <\infty$, $\eps>0$, and a local 1D Hamiltonian $H$ over $\qubits$ qubits, there exists a classical algorithm which runs in time $\poly\Paren{ \qubits , 1/\eps }$ and outputs the description of an MPS $\ket{\phi}$ of bond dimension $\chi\leq \exp({\exp(c\tilde\beta)})$, such that in expectation over the randomness of the algorithm, $\Norm{ g_\beta - \expecf{}{\dyad{\phi}} }_1 \leq \eps$.  
\end{theorem}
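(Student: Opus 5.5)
The plan is to make the decomposition of \cref{th:GibbsStatesareMixturesOfMPS} algorithmic: classically sample a stabilizer product state $\ket{\bm s}$ from $\sigma$, then apply the ladder $M_1\cdots M_{\qubits-m+1}$ to it. The output is $\ket{\phi}\propto M_1\cdots M_{\qubits-m+1}\ket{\bm s}$, so that $\expecf{}{\dyad{\phi}}\approx g_\beta$. Concretely, \eqref{eq:gibbsstatedecomp} gives
\begin{equation*}
g_\beta=\sum_{\bm s} q_{\bm s}\, M\dyad{\bm s}M^\dagger, \qquad M\coloneqq M_1\cdots M_{\qubits-m+1},
\end{equation*}
with $\sigma=\sum_{\bm s} q_{\bm s}\dyad{\bm s}$. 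Writing $p_{\bm s}=q_{\bm s}\norm{M\ket{\bm s}}^2$, we have $p_{\bm s}$ summing to $\tr g_\beta=1$. Therefore we need two things: (i) efficient classical access to the local operators $M_i$ to precision $\poly(\eps/\qubits)$, and (ii) an efficient sampler for the reweighted distribution $p_{\bm s}$.

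For (i), the construction of the $M_i$ in the later sections is presumably explicit. It should be built from local pieces of $e^{-\beta H/2}$, for example truncated cluster or Taylor expansions on windows of size $m=\exp(c\tilde\beta)$, together with a correction making $\sigma$ separable as in the high-temperature separability argument of \cite{bakshi2024high}. Each $M_i$ acts on $m=O(1)$ qubits, so it is a $2^m\times 2^m$ matrix. I would compute each one by following the proof's construction, with every series truncated at an order chosen so that the error per operator is $\eps/\poly(\qubits)$; the cost is $\poly(\qubits,1/\eps)$ since $\beta$ is constant. The product $M\ket{\bm s}$ is then an MPS. Contracting the staircase column by column gives bond dimension at most $2^{m}$ per overlap region, and hence $\chi\le\exp(\exp(c\tilde\beta))$. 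The MPS tensors are obtained by sweeping SVDs in time $\poly(\qubits)$.

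For (ii), I would sample $\bm s$ sequentially, one site at a time, from its conditional marginals. The marginal weight of a prefix $(s_1,\dots,s_k)$ is
\begin{equation*}
\sum_{s_{k+1},\dots}q_{\bm s}\bra{\bm s}M^\dagger M\ket{\bm s}.
\end{equation*}
If $q$ is itself a product distribution or a 1D Markov chain with local conditionals (as I expect from the construction of $\sigma$), this sum is a tensor-network contraction of a 1D MPO $M^\dagger M$ of constant bond dimension against product "sum-over-stabilizer" vectors. It can therefore be evaluated exactly in $\poly(\qubits)$ time by transfer-matrix contraction. Normalizing the marginals gives exact conditional probabilities. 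Each step costs $\poly(\qubits)$, so the whole sample costs $\poly(\qubits)$.

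The main obstacle is controlling how errors propagate, because $M$ is a product of $\sim\qubits$ non-unitary operators. Perturbing each $M_i$ by $\delta$ could, a priori, change $M\sigma M^\dagger$ by a factor of $\prod\norm{M_i}$, which is exponential in $\qubits$. I would first try to avoid this by working with the sampled distribution and the normalized output together. The trace-norm error of the final mixture is at most $\sum_{\bm s}\norm{\tilde p_{\bm s}\dyad{\tilde\phi_{\bm s}}-p_{\bm s}\dyad{\phi_{\bm s}}}_1$. This reduces to a relative error bound on $\tr(M\sigma M^\dagger)$ under local perturbations. For that I would use the fact that the decomposition is built so that partial products approximate $e^{-\beta H_{\le k}/2}$-type operators, whose norms are comparable to partition functions. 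I would then use local perturbation bounds of the form $\norm{e^{-\beta(H+V)}}\le e^{\beta\norm V}\norm{e^{-\beta H}}$ to get multiplicative errors $(1+\delta)^{\qubits}$, and pick $\delta=\eps/(4\qubits)$. If this relative-error control fails, the fallback is a rejection/importance-weighting step whose acceptance probability is lower bounded by the same local-to-global norm comparison.
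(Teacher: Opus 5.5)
\textbf{There is a genuine gap in step (ii).} Your exact transfer-matrix evaluation of the prefix marginals depends on guessing that the weights $q_{\bm s}$ of $\sigma$ form a product distribution or a finite-state Markov chain. That guess is false.

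The factors of $\sigma=\mathcal X\mathcal X^\dagger$ are only quasilocal: each $X_i$ has unbounded support on $[i,\qubits]$. After pinning the first $k$ sites, the remainder of $\sigma$ is $\mathcal X_{>k}(I+cB)\mathcal X_{>k}^\dagger$ with $\mathcal X_{>k}=X_{\tilde\qubits}\cdots X_{k+1}$. The iterated bulk decomposition gives $M_{k+1}\cdots M_{\tilde\qubits}\,\mathcal X_{>k}=e^{-\beta H_{[k+1,\qubits]}/2}$. Hence the weight of a pinned branch is proportional to $w\,\tr(G^{(k)})$, and a prefix marginal of $\bm s$ is a sum of such terms, where
\begin{equation*}
G^{(k)}=M_1\cdots M_k\,[\dyad{s_1\cdots s_k}\otimes e^{-\beta H_{[k+1,\qubits]}/2}(I+cB)\,e^{-\beta H_{[k+1,\qubits]}/2}]\,M_k^\dagger\cdots M_1^\dagger .
\end{equation*}
This is a generalized partition function of the remaining chain, not a constant-bond-dimension contraction. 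It can only be estimated. You therefore also need a sampler that still outputs the right distribution when the marginals are only approximate, and the proposal supplies neither piece. Truncating every $X_i$ to force $\sigma$ into a finite MPO does not fix this. Those truncation errors get pushed through $\sim\qubits$ non-unitary factors, which is exactly the blow-up you worried about.

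\textbf{The missing ideas are the ones the paper uses.}
\begin{itemize}
\item \emph{Reduction to a Gibbs expectation.} Since $\abs{c}\le 1/7$ and $B$ is a Pauli string, $\tfrac{6}{7} I\le I+cB\le\tfrac{8}{7} I$. Hence $\tfrac{6}{7} W\le\tr(G^{(k)})\le\tfrac{8}{7} W$ with $W=\tr(O\,e^{-\beta H_{[k+1,\qubits]}})$. Here $O$ is a positive semidefinite operator on the $m$ sites after the cut, obtained by contracting $M_1,\dots,M_k$ with the pinned product state.
\item \emph{Estimating $W$.} $W$ is estimated to constant multiplicative error from an MPO approximation of the 1D Gibbs state \cite{Molnar2015}.
\item \emph{Robust sampling.} These constant-factor estimates feed the Jerrum--Sinclair backtracking walk \cite{Sinclair1989}, via Theorem 5.8 of \cite{bakshi2024high}. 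That walk additionally needs only the exact leaf weights (the MPS normalization times the product of local weights). It also needs the ability to sample the local pinning weights $w^{(t)}$, which are made finite by truncating $\ell,\ell'$ at $O(\log(\qubits/\eps))$.
\item \emph{Error control.} Each local truncation perturbs a middle factor bounded above and below by constant multiples of $I$. It therefore becomes a relative error in the weights, with no $\prod_i\norm{M_i}$ factor.
\end{itemize}

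Your concern about perturbing the $M_i$ is in fact benign. They are constant-size, with $\norm{M_i},\norm{M_i^{-1}}\le\Lambda$. Computing them to $O(\qubits\log\Lambda+\log(\qubits/\eps))$ bits already makes their errors negligible. The partition-function norm comparison you sketch is therefore unnecessary, and as written it is not yet an argument.
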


 The idea of classically simulating 1D Gibbs states utilizing mixtures of low-entangled states goes back to the work of White~\cite{White2009}. Most prior works on global approximations to Gibbs states via MPSs or MPOs also present classical simulation algorithms. Our result establishes that this task can be achieved with a constant bond dimension independent of the system size.

Designing low-depth quantum circuits that prepare Gibbs states is a very active area of research~\cite{bakshi2024high,rouze2025efficient,bakshi2025dobrushin,chen2025quantum,Eisert2014, Molnar2015, brandao2019finite, kato2019quantum, Kuwahara2021, kuwahara2025clustering}. Given the many advanced numerical techniques to simulate these states in 1D, such quantum algorithms are not expected display either a practical or provable advantage over their classical counterparts in such setting. Nevertheless, efficient state-preparation algorithms for 1D Gibbs states might be useful for basic-physics studies or as subroutines of larger quantum algorithms. Our result can be combined with existing MPS preparation schemes to achieve such task. For instance, the results of Schon, Solani, Verstraete, Cirac, and Wolf~\cite{Schon2005} together with \cref{thm:efficientAlgorithm} imply that one can efficiently construct a linear-depth circuit that, in expectation, approximates the Gibbs state within trace distance $\varepsilon$.  

Furthermore, in practice, generic MPSs satisfy additional conditions, such as normality (short-range correlated).
Under such generic conditions, there are more modern logarithmic depth \cite{malz_preparation_2024} and  even adaptive constant depth \cite{Smith2024} MPSs preparation algorithms. By leveraging those techniques, our results could yield algorithms that are on par or better than the state-of-the-art result based on adiabatic simulation, proposed recently by Bergamaschi and Chen~\cite{bergamaschi2025quantum}. Preparing the MPSs in our decomposition is strictly more capable since it already provides the explicit full classical description,  whereas obtaining such description using the algorithm of~\cite{bergamaschi2025quantum} could involve prohibitively expensive tomography.
Furthermore, the randomness in our algorithm is entirely classical, allowing one to prepare the same pure state multiple times, while the procedure in \cite{bergamaschi2025quantum} requires post-selection for repeated preparation of the same sample. Such repeatability is useful, for example, to access higher statistical moments. 

\section{Technical overview}
 Our main technical tool to prove Theorem~\ref{th:GibbsStatesareMixturesOfMPS} is a general decomposition of the Gibbs state, which we call the {\it entanglement bulk decomposition}.
 We show that the Gibbs state on the full chain can be expressed as a product of a local operator $M$, the Gibbs state on the chain with the first qubit removed, and a correction $X$ which is a {\it quasilocal perturbation of the identity} (defined below).
\paragraph{Entanglement bulk decomposition.} For an $\qubits$-qubit spin chain Hamiltonian $H$ with locality $\locality$, and any $0< \gamma <1$, 
the unnormalized Gibbs state can be decomposed as
\begin{equation}
  e^{-\beta {H}}=M  \cdot e^{-\beta (H-H_1)} \cdot X. \label{eq:gibbsstatedecomposition}
\end{equation}
The local operator  $M$ is supported on the first $m = \exp(\tilde\beta c_\gamma)$ sites, with  $c_\gamma=\exp(\locality\log(20/\gamma))$, and $X$ is a quasilocal perturbation of the identity with decay $\gamma$, meaning that it can be expanded as a series
\begin{equation*}
  X=I+\sum_{\ell=1}^\qubits F_\ell,
\end{equation*}
where each $F_\ell$ is supported on the first $\ell$ sites and has an exponentially decaying operator norm $\norm{F_\ell}\leq \gamma^\ell$. The operator $H - H_1$ is the Hamiltonian with the terms acting on the first qubit removed. Equation~\eqref{eq:gibbsstatedecomposition} can be represented using the following tensor diagram:
\begin{equation}
\begin{tikzpicture}[scale=\drawingsscale, baseline={(0,-\MathAxis pt)}, thick]
		 \foreach \n in {-3,...,3}\draw {({\n},-\shorttensorlegheight) -- ({\n},\shorttensorlegheight)};
        \drawgibbs{(0,0)}{3.5}{ \footnotesize $e^{-\beta{H}}$};
    \end{tikzpicture}\cdots = 
    \begin{tikzpicture}[scale=\drawingsscale, baseline={(0,-\MathAxis pt)}, thick]
     \foreach \n in {-3,...,3}\draw {({\n},-\shorttensorstaking - \shorttensorlegheight) -- ({\n},\shorttensorstaking + \shorttensorlegheight)};
        \drawMbox{(-1,\shorttensorstaking)}{\small $M$};
        \drawgibbs{(0.5,0)}{3}{\footnotesize $e^{-\beta({H}-H_1)}$};
        \drawXbox{(0,-\shorttensorstaking)}{3.5}{\small $X$};
    \end{tikzpicture}\cdots.
    \label{eq:cuttinglemmaTens}
\end{equation}
This decomposition allows us to distill the quantum correlations involving the first qubit from those that are merely classical. As we explain below, $X$ is associated with a separable state, meaning that the local operator $M$ mediates all the entanglement between the first qubit and the rest of the chain. We provide a proof sketch of this decomposition in \cref{sec:proofoflemmaslice} below, and a full proof in \cref{sec:fullproofofmaintheorem}.

\subsection{Mixture of matrix product states}
Here, we describe how to derive \cref{th:GibbsStatesareMixturesOfMPS} from the entanglement bulk decomposition described above.

\paragraph{Ladder of local operators.} Iterating the entanglement bulk decomposition qubit by qubit, we obtain a ladder of $\tilde\qubits=\qubits-m+1$ local operators $M_j$ (each supported between qubit $j$ and $j+m-1$),
\vspace{-0.5em}
\begin{equation}
        \mathcal{M}\coloneqq M_1M_2M_3\cdots M_{\tilde\qubits}= \begin{tikzpicture}[scale=\drawingsscale, baseline={(0,-\MathAxis pt)}, thick]
      \foreach \n in {1,...,7}{\draw ({\n-4},-\shorttensorstaking-\shorttensorheight -0.2) -- ({\n-4},\shorttensorstaking + \shorttensorlegheight);}
        \drawMbox{(-1,\shorttensorstaking)}{\small $M_1$};
        \drawMbox{(0,0)}{\small $M_2$};
        \drawMbox{(1,-\shorttensorstaking)}{\small $M_3$};
               \draw[loosely dotted]  (1.2,-\shorttensorstaking-\shorttensorheight-0.3) -- (1.9,-\shorttensorstaking-\shorttensorheight-1.4);
    \end{tikzpicture} ,
\end{equation}
and a staircase of operators $X_j$ (each a quasilocal perturbation of the identity with decay $\gamma$, supported after qubit $j$),
\vspace{-1em}
 \begin{equation}
        \mathcal{X}=X_{\tilde\qubits} \cdots X_3X_2X_1 = \begin{tikzpicture}[scale=\drawingsscale, baseline={(0,-\MathAxis pt)}, thick]
      \foreach \n in {1,...,7}{\draw ({\n-4},-\shorttensorstaking - \shorttensorlegheight ) -- ({\n-4},\shorttensorstaking+\shorttensorheight+0.2);}
        \drawXbox{(1,\shorttensorstaking)}{2.5}{\small $X_3$};
        \drawXbox{(0.5,0)}{3}{\small $X_2$};
        \drawXbox{(0,-\shorttensorstaking)}{3.5}{\small $X_1$};
               \draw[loosely dotted]  (1.2,\shorttensorstaking+\shorttensorheight+0.3) -- (1.9,+\shorttensorstaking+\shorttensorheight+1.4);
    \end{tikzpicture}, 
\end{equation}
such that $\exp({-{\frac{\beta}{2}} H}) =\mathcal{M}\mathcal{X}$. We use $\beta/2$ instead of $\beta$ to be able to write a symmetric expression:
\begin{equation}
\label{eq:decomofGibbsstate}
    e^{-{\beta} H} = e^{-{\frac{\beta}{2}} H} (e^{-{\frac{\beta}{2}} H})^\dagger=\mathcal{M}\sigma\mathcal{M}^\dagger,
\end{equation}
where $\sigma=\mathcal{X}\mathcal{X}^\dagger$ is an unnormalized quantum state.

\paragraph{Separability from quasilocality.} The operator $\sigma$ is a (symmetrized) staircase of quasilocal perturbations of the identity. Generalizing the argument of~\cite{bakshi2024high}, we prove that  as long as the decay $\gamma\leq \frac{1}{56}$, such operators must be separable. Specifically, we prove in \cref{lem:sep-of-perturbations with bounded-support} that $\sigma$ can be decomposed as a non-negative linear combination of stabilizer product states $\ket{\bm  s}=\ket{s_1}\otimes \cdots \otimes\ket{s_\qubits}$,
\begin{equation}
\label{eq:decompofmiddle}
    \sigma= \sum_{\bm  s\in \mathcal{S}^{ \qubits}} w_{\bm s} \dyad{\bm  s},
\end{equation} where $w_{\bm  s}\geq 0$ and each label $s_j\in\mathcal{S}=\{0,1,+,-,+i,-i\}$ identifies an eigenstate of one of the Pauli matrices. 

\paragraph{Thermal state as a mixture of MPSs.} We can decompose the Gibbs state as a mixture of MPS from the decomposition  of Eq.~\eqref{eq:decomofGibbsstate} utilizing the separability of $\sigma$. 
We represent the matrices $A_j^{b_j}$ parameterizing a general MPS [Eq.~\eqref{eq:MPSdef}] as tensors with three indices, $$\begin{tikzpicture}[scale=\drawingsscale, baseline={([yshift=-2ex]current bounding box.center)}, thick]
        \middletensor{0,0}{$A_j$}
    	\draw (0,\tensorlegheight+0.5) node {\scriptsize $b_j$};
        \draw (\tensorlegheight+0.5,0) node {\scriptsize $a$};
        \draw (-\tensorlegheight-0.5,0) node {\scriptsize $c$};
    \end{tikzpicture},$$
    so the resulting MPS is given by their horizontal contraction
    \begin{equation}
 \ket{\phi}=\frac{1}{\sqrt{\alpha}} \,  \begin{tikzpicture}[scale=\drawingsscale, baseline={([yshift=-1ex]current bounding box.center)}, thick]
    	\lefttensor{0,0}{$A_1$}
        \middletensor{1+\tensorlegheight,0}{$A_2$}
       \path (1+\tensorlegheight,0) -- node[auto=false]{\ldots} (5.5+\tensorlegheight,0);
        \righttensor{5.5+\tensorlegheight,0}{$A_\qubits$}
    \end{tikzpicture}.\label{eq:MPSTensorNetwork}
\end{equation}
To construct the MPSs in the decomposition of the Gibbs state, we observe that Eqs.~\eqref{eq:decomofGibbsstate} and \eqref{eq:decompofmiddle} yield $g_\beta =\sum_{\bm s\in \mathcal{S}^{\qubits}}\frac{w_{\bm  s}}{Z_\beta} \mathcal{M} \dyad{\bm s}\mathcal{M}^\dagger$, which is a mixture of the (unnormalized) states
\newcommand{\drawstatecirc}[2]{\filldraw [fill=statecircs] #1 circle (0.4);
      \draw #1 node {\tiny #2};}
\begin{equation}
   \mathcal{M} \ket{\bm s}=
    \begin{tikzpicture}[scale=\drawingsscale, baseline={(0,-\MathAxis pt)}, thick]
     \foreach \n in {1,...,3}{\draw ({\n-4},- \shorttensorstaking*\n +2*\shorttensorstaking- \shorttensorlegheight) -- ({\n-4},\shorttensorstaking + \shorttensorlegheight);
      \drawstatecirc{({\n-4-0.1},- \shorttensorstaking*\n +2*\shorttensorstaking- \shorttensorlegheight-0.3)}{${\n}$};}
      \foreach \n in {4,...,7}{\draw ({\n-4},-1*\shorttensorstaking ) -- ({\n-4},\shorttensorstaking + \shorttensorlegheight);}
        \drawMbox{(-1,\shorttensorstaking)}{\small $M_1$};
        \drawMbox{(0,0)}{\small $M_2$};
        \drawMbox{(1,-\shorttensorstaking)}{\small $M_3$};
               \draw[loosely dotted]  (1,-\shorttensorstaking-\shorttensorheight-0.3) -- (1.7,-\shorttensorstaking-\shorttensorheight-1.4);

    \end{tikzpicture}, 
\end{equation}
where $\begin{tikzpicture}[scale=\drawingsscale, baseline={(0,-\MathAxis pt)}, thick]
     \draw (0,0) -- (0,0.8);
      \drawstatecirc{(0,0)}{$a$};
    \end{tikzpicture} =\ket{s_a}$.
We can represent each of these states as an MPS by collecting all-but-one vertical legs into a composite virtual bond which we bend sideways:
\newcommand{\drawtopbend}{
\draw \foreach \n in {-2,...,1}{({\n},\shorttensorheight) to[out=90,in=0] ({\n-\shorttensorlegheight+\shorttensorheight},\shorttensorlegheight)}
        -- (-\Mboxwidth,\shorttensorlegheight)
         to[out=180,in=90] ({-\Mboxwidth-\marginout},\marginout) to[out=-90,in=0] ({-\Mboxwidth-2*\marginout},0);
         \draw ({2},0) -- ({2},\shorttensorlegheight);
}\newcommand{\drawbottombend}{
        
		\draw \foreach \n in {-2,...,1}{({-\n},-\shorttensorheight) to[out=-90,in=180] ({-\n+\shorttensorlegheight-\shorttensorheight},-\shorttensorlegheight)}
        -- (\Mboxwidth,-\shorttensorlegheight)
         to[out=0,in=-90] ({\Mboxwidth+\marginout},-\marginout) to[out=90,in=180] ({\Mboxwidth+2*\marginout},0);
         \draw ({-2},0) -- ({-2},-\shorttensorlegheight);
         }
\begin{equation}
    \begin{tikzpicture}[scale=\drawingsscale, baseline={(0,-\MathAxis pt)}, thick]
        \middletensor{2,0}{$A_{a}$}
    \end{tikzpicture} \,=\,    \begin{tikzpicture}[scale=\drawingsscale, baseline={(0,-\MathAxis pt)}, thick]
    \drawtopbend
    \drawbottombend
         \drawstatecirc{({-2},-1.4)}{$\tilde a$};
                \drawMbox{(0,0)}{\small $M_{\tilde a}$};
    \end{tikzpicture},
\end{equation}
for $m<a<\qubits$
(where $\tilde a=a-m+1$), with special care  taken at the boundaries:  
\begin{align*}
        \begin{tikzpicture}[scale=\drawingsscale, baseline={(0,-\MathAxis pt)}, thick]
            	\lefttensor{0,0}{$A_1$}
       \path (0,0) -- node[auto=false]{\ldots} (4.5,0);
        \middletensor{4.5,0}{$A_{m}$}
    \end{tikzpicture} =    \begin{tikzpicture}[scale=\drawingsscale, baseline={(0,-\MathAxis pt)}, thick]
     \foreach \n in {-2,...,2}{\draw ({\n},0) -- ({\n},\shorttensorlegheight);}
    \drawbottombend
         \draw ({-2},0) -- ({-2},-1.0);
         \drawstatecirc{({-2},-1.4)}{${1}$};
        \drawMbox{(0,0)}{\small $M_{1}$};
    \end{tikzpicture},&& &&
    \begin{tikzpicture}[scale=\drawingsscale, baseline={(0,-\MathAxis pt)}, thick]
        \righttensor{2,0}{$A_{ \qubits}$}
    \end{tikzpicture} =    \begin{tikzpicture}[scale=\drawingsscale, baseline={(0,-\MathAxis pt)}, thick]
		   \drawtopbend
            \foreach \n in {-1,...,1}{\draw ({\n},0) -- ({\n},-\tensorlegheight);
        \drawstatecirc{({\n},-1.4)}{}}
         \draw ({2},\shorttensorheight) -- ({2},-1.4);
         \drawstatecirc{({2},-1.4)}{$\qubits$};
         \draw ({-2},\shorttensorheight) -- ({-2},-1.4);
        \drawstatecirc{({-2},-1.4)}{};
        \drawMbox{(0,0)}{\small $M_{{\tilde \qubits}}$};
    \end{tikzpicture}.
\end{align*}

  The resulting MPS $\ket{\phi_{\bm s}}$ given by Eq.~\eqref{eq:MPSTensorNetwork} has bond dimension $\chi=2^{m-1}$, and we push its normalization 
  \begin{equation}
\label{eq:normfactor}
    \alpha_{\bm s}= \begin{tikzpicture}[scale=\drawingsscale, baseline={([yshift=-0.5ex]current bounding box.center)}, thick]
    	\lefttensor{0,0}{$A_1$}
        \middletensor{1+\tensorlegheight,0}{$A_2$}
       \path (1+\tensorlegheight,0) -- node[auto=false]{\ldots} (5.5+\tensorlegheight,0);
        \righttensor{5.5+\tensorlegheight,0}{$A_\qubits$}
    	\lefttensordag{0,1+\tensorlegheight}{$A_1^\dagger$}
        \middletensordag{1+\tensorlegheight,1+\tensorlegheight}{$A_2^\dagger$}
       \path (1+\tensorlegheight,1+\tensorlegheight) -- node[auto=false]{\ldots} (5.5+\tensorlegheight,1+\tensorlegheight);

        \righttensordag{5.5+\tensorlegheight,1+\tensorlegheight}{$A_\qubits^\dagger$}
    \end{tikzpicture},
\end{equation}
 into the associated probability  \begin{equation}
     p_{\bm  s}=\frac{\alpha_{\bm  s} w_{\bm  s}}{Z_\beta}, \label{eq:probabilities}
 \end{equation}
 so that we obtain the convex combination of MPSs claimed in 
 Eq.~\eqref{eq:mixtureMPS}. 

\subsection{Proof sketch of the entanglement bulk decomposition}
\label{sec:proofoflemmaslice}
Next, we outline the main ideas in the proof of the entanglement bulk decomposition (see Section~\ref{sec:fullproofofmaintheorem} for a complete proof). To show the desired decomposition, $e^{-\beta H}=M \cdot e^{-\beta (H-H_1)} \cdot X$, we begin by considering the Taylor expansion of the {\it Araki expansional}~\cite{Araki1969}:
\begin{equation}
\label{eq:dysonseries}
    e^{- \beta H} \cdot e^{\beta(H -H_1)}  = \sum_{t = 0}^{\infty} { \frac{\beta^t }{t!} }  \; \sum_{\ell=0}^{\infty}  E_{t,\ell}\,, 
\end{equation}
where each $E_{t,\ell}$ is supported between sites $1$ and $\ell$. We collect all local terms which have support $\ell \leq m$ into the operator $M$, and call the rest $R$, so that $e^{- \beta H} \cdot e^{\beta(H -H_1)}=M+R$. By defining 
\begin{equation}
\label{eq:Xdefinitionoverview}
    X=e^{\beta(H -H_1)} \cdot (I+M^{-1} R) \cdot e^{-\beta(H -H_1)}\,,
\end{equation}
we immediately recover Eq.~\eqref{eq:gibbsstatedecomposition}:
\begin{equation*}
    e^{-\beta H } = e^{-\beta H } \cdot  e^{\beta(H -H_1)} \cdot  e^{-\beta(H -H_1)} = M \cdot (I + M^{-1} R) \cdot e^{-\beta(H -H_1)} = M \cdot e^{-\beta(H -H_1)} \cdot X.
\end{equation*}
The non-trivial part of the argument is showing that $X$ is a quasilocal perturbation of the identity,  to which we  devote the rest of this technical overview. For this, we will need to analyze the decay of the coefficients in the Araki expansional, as well as the effects of the imaginary time evolution in Eq.~\eqref{eq:Xdefinitionoverview}.

\paragraph{Araki expansional.} 
Understanding the behavior of Araki expansional is central to understanding properties of the Gibbs state and dates back to Araki's foundational result on the lack of phase transitions in one dimension~\cite{Araki1969}. This object also appears prominently in the proof of the separability of high-temperature Gibbs states by Bakshi, Liu, Moitra and Tang~\cite{bakshi2024high}. This reference shows that for all $\beta < \beta_c$, where $\beta_c$ only depends on the locality and degree of the Hamiltonian interactions, the Araki expansional is a quasilocal perturbation of the identity, by noting that it satisfies a clean recurrence relation:
\begin{equation}
\label{intro:recurrence}
    e^{- \beta H} \cdot e^{\beta(H -H_1)} = \sum_{t= 0}^{\infty} \frac{\beta^t}{t!} f_{t}(H, H_1) \quad \textrm{ such that }  f_t = -[H, f_{t-1}(H, H_1)] - f_t(H,H_1) H_1 \,, 
\end{equation}
where $[A,B]= AB - BA$ is the standard matrix commutator.  However, the series diverges for high-dimensional systems and large $\beta$ (or low temperature). We prove that the series always converges, for any finite $\beta$, as long as the underlying Hamiltonian is one dimensional. Observe, unrolling the recurrence above, an equivalent way to generate the terms is to either apply the commutator with $H$, which picks up an a local term in $H$ that overlaps with the current support (an edge in the hypergraph), or multiply on the right with $-H_1$ which does not change the support. It therefore suffices to track the support/locality as we recursively generate the terms of the series. 

\paragraph{Locality in uniform hypergraphs.} We show that to track the locality of the terms in the recurrence in Eq.~\eqref{intro:recurrence}, we can strip away the quantum operators and analyze the combinatorics of paths on a uniform hypergraph that satisfy the same recurrence relation. Consider a $\locality$\nobreakdash-uniform hypergraph on the line, where the vertices are labeled $[\qubits]$ and each edge $e_i = [i, i + \locality -1]$ is an interval of length $\locality$. We introduce the definition of a \emph{ valid growth set}:
\begin{definition}[Valid growth set (informal, see \cref{def:growth-sets})]
Given a $\locality$-uniform hypergraph on the line, a \emph{valid growth set} of size $t$ is an ordered set of edges $(e_1, e_2, \ldots , e_t)$, such that the edge $e_1$ contains vertex $1$, and for each $s \in [t]$, the edge $e_s$ intersects the support of the edges $e_1, e_2, \ldots , e_{s-1}$. 
\end{definition}

Observe, a valid growth set can increase the support by only adding edges that intersect the support constructed up to that point, which coincides with the commutator in \cref{intro:recurrence} being non-zero, and also allows for adding edges that do not change the support which accounts for right-multiplying by $-H_1$. We can therefore restrict our attention to a purely combinatorial object, the number of valid growth sets that contain $t$ edges and achieve a support $[\ell]$, which we denote by $P_{t,\ell}$, and focus on obtaining non-asymptotic bounds on it. 

\paragraph{Bound on path sums.} We can show  that $\sum_\ell P_{t,\ell}\leq \Paren{ \bigO{t/\log(t)} }^t$ by tracking the right end point of the valid growth set $(e_1, \ldots, e_s)$ after $s$ steps, which we denote by $r_s$. We define $\Delta_s = r_s - r_{s-1}$ to track the increment of $r_s$. If the growth set achieves a support $[\ell]$ after $t$ steps, then  $\sum_{s=2}^t \Delta_s = \ell - \locality$. We count the number of sequences with exactly $z$ advances, i.e. $\Delta_s \neq 0$ for exactly $z$ steps. To upper bound this quantity it suffices to first pick the $z$ locations where $\Delta_s \neq 0$, and there are at most $\binom{t}{z}$ ways to do so. Further, for each non-zero step, we have $\Delta_s \in [1, \locality-1]$, since the support cannot increase by more than $\locality$ by adding a single edge, and there are $\locality^z$ ways of picking the non-zero steps. For the remaining $t-z-1$ steps $\Delta_s =0$, and we have at most $(\locality z)^{t-z}$ many choices since the support after $z$ increments can be at most $\locality \cdot z$. Therefore, 
\begin{equation}
\begin{split}
    \sum_{\ell} P_{t,\ell}  \leq \sum_{z=0}^{t} \binom{t}{z} \cdot \Paren{ \locality }^z \cdot \Paren{\locality z}^{t-z} \leq t \cdot  \locality^{t}  \Paren{ \max_{z} \Paren{\frac{et}{z} }^z \cdot (z)^{t-z} }\,,
\end{split}
\end{equation}
where the second inequality follows from Stirling's approximation. To get a quick bound on the maximum, consider the log-concave function $f(x) = (et/x)^x \cdot (x)^{t-x}$, where $x$ takes values over the reals. The function $\log f$ is maximized between $t/(2\log(t))$ and $2t/\log(t)$, and therefore one can upper bound $\sum_{\ell} P_{t,\ell}$ by $\Paren{ \bigO{t/\log(t)} }^t$.

In \cref{thm:combinatorics}, we further show that re-weighted path sums continue to beat the naive bound of $\bigO{\locality^t}$. In particular, we prove that for any $C\geq 1$,   \begin{equation}
\label{eq:boundoniteratedsupportintro} 
    \sum_{\ell=1}^\infty C^\ell \cdot P_{t,\ell} \leq t!\left(\frac{4 {(\locality-1)}}{\log{(t/C^{\locality-1} +\frac{1}{2})}}\right)^{t}
\end{equation}  
for any $t\geq C^{\locality-1}$. We do this by first reducing to the case $\locality = 2$, where $P_{t,\ell}$ correspond to Stirling numbers of the second kind and known bounds on their weighted sums are available~\cite{Acu2024}.

\paragraph{Coefficients of the Araki expansional.} 
Equipped with our bound on the number of valid growth sets, we return to analyzing the coefficients of the Araki expansional in Eq.~\eqref{eq:dysonseries} and show that it admits they are bounded by $\Norm{E_{t,\ell}} \leq P_{t,\ell}.$
Therefore, for a fixed $t$ and  $\beta\geq 1$ the coefficient in the series can be bounded as follows:
\begin{equation*}
    {\frac{\beta^t}{t!}} \sum_{\ell=0}^{\infty} \Norm{E_{t,\ell}}  \leq   {\frac{\beta^t}{t!}} \cdot 2^t\sum_{\ell=0}^{\infty}    P_{t,\ell} \leq \Paren{  \frac{8\beta   \locality }{\log(t)} }^t,
\end{equation*}
where we use our combinatorial bound with $C=1$. Observe, the coefficients in this series increase rapidly for small values of $t$, until $\log t$ dominates $8 \beta \locality$, and after that point, they decay at least geometrically as a function of $t$. In particular, the series starts to behave like that of~\cite{bakshi2024high} once $t \ge \exp(16 \beta \locality)$, as the tail decays geometrically. Further, it is fairly straightforward to show that the operator norm of the series is bounded:
\begin{equation}
\label{eqn:op-norm-bound-intro}
    \Norm{ e^{-\beta H} \cdot e^{\beta\paren{H - H_1} } } \leq \exp\exp\Paren{ \bigO{\beta \locality} } \eqqcolon  \Lambda\,,
\end{equation}
by treating the contribution of the large terms and small terms separately. 

\paragraph{A bounded support and decaying tails decomposition.}
We split the series of the Araki expansional into \begin{align*}
    M = \sum_{t=0}^{\infty} {\frac{\beta^t}{t!}}  \sum_{\ell=0}^{m} E_{t,\ell}&&\text{and}&& R = \sum_{\ell = 0}^{\infty} {\frac{\beta^t}{t!}}  \sum_{\ell > m}   E_{t,\ell}.
\end{align*} Since $M$ is a truncation, the operator norm bound from \cref{eqn:op-norm-bound-intro} continues to hold, and also extends to $M^{-1}$. We can then rewrite the Araki expansional as $e^{-\beta H} \cdot e^{\beta\paren{H - H_1} } = M (I + E)$ for 
\begin{equation*}
    E\coloneqq M^{-1}R= \sum_{\ell = m/\locality }^{\infty}  \Paren{\frac{\beta^t}{t!}} \sum_{\ell = m}^{\infty} E_{t,\ell}' \quad \textrm{ where } \Norm{E'_{t,\ell}} \leq \Lambda \cdot  P_{t,\ell} \,.
\end{equation*}
Our choice of $m$ guarantees that $I+E$ is a quasilocal perturbation of the identity. We now explain how this implies that $X$ in Eq.~\eqref{eq:Xdefinitionoverview} also satisfies this property.

\paragraph{Imaginary-time evolution of the tail.} Naively, one might expect the imaginary-time evolution in  Eq.~\eqref{eq:Xdefinitionoverview} to blow up the coefficients of the first few terms in $E$ given that the operator norm is large.
We show that this is not the case and yet again bound the coefficients by the number of valid growth sets. Recalling the Hadamard formula, 
\begin{equation*}
    e^{\beta H} \cdot Y \cdot  e^{-\beta H} = \sum_{t=0}^{\infty} \frac{\beta^t}{t!} \; [H, Y]_t\,, 
\end{equation*}
where $[A,B]_t = [A, [ A, \ldots [A, B]\ldots]]$ is the $t$-th nested commutator, we can decompose $X$ as follows:
\begin{equation}
\label{eqn:im-time-evolution-e1-intro}
    X=I+ \sum_{\ell = m}^{\infty} F_\ell, \quad \text{ where } \quad F_\ell= \sum_{\tau = 0}^{(\ell - m)/\locality} \; \sum_{t = \ell/\locality -\tau}^{\infty} \; \frac{\beta^{\tau+t}}{\tau! \; t!} \; [H-H_1, E_{t, \ell-\tau \locality }']_\tau\,.
\end{equation}
To show that $X$ is a quasilocal perturbation of the identity with decay $\gamma$, we need to show that each $F_\ell$ is supported only on the first $\ell$ sites, and $\norm{F_\ell}\leq \gamma^\ell$. The bound on locality follows simply from the fact that each commutator with $H-H_1$ can only increase locality by $\locality-1$, so $F_\ell$ can only be supported on $\ell-\tau\locality + \tau(\locality-1)\leq \ell$ sites. 
 
 For the bound on the norm of $\norm{F_\ell}$, we show that for any operator $Y$ supported only in the first $\lambda$ sites, and one-dimensional Hamiltonian $H$,
\begin{equation}
\label{eqn:nested-commutator-bound-intro}
     \Norm{  [H, Y]_\tau } \leq \Paren{ 2^\tau  \cdot \sum_{\ell} P_{\tau,\ell} + (2\lambda)^\tau }  \Norm{Y} \,. 
\end{equation}
Equation \eqref{eqn:nested-commutator-bound-intro} is proven by noting that, at each step, the nested commutator adds an edge from the Hamiltonian $H$ that intersects with the previous support or an edge that remains within the previous support. We can split $[H, Y]_{\tau}$ into terms that remain in the support of $Y$ and those that escape. The former group contains at most $\lambda^\tau$ terms, and each one can contribute an operator norm of at most $2^\tau \cdot \Norm{Y}$, while the number of terms in the latter group is bounded by the total number of valid growth sets $\sum_{\ell} P_{\tau,\ell}$ of size $t$, and each one contributes an operator norm of at most $2^\tau \cdot  \Paren{ \sum_{\ell} P_{\tau,\ell}}  \cdot \Norm{Y}$  (see \cref{lem:nested-commutator-locality-and-norm} for a complete proof).  

Applying Eq.~\eqref{eqn:nested-commutator-bound-intro} to $Y=E'_{t,\ell-\tau\locality}$, which is supported in the first $\ell-\tau\locality$ sites and has norm $\norm{E'_{t,\ell-\tau\locality}}\leq \Lambda P_{t,\ell-\tau\locality}$, we obtain
\begin{equation}
\label{eqn:norm-split-intro}
\begin{split}
   \norm{F_\ell} & \leq \sum_{\tau = 0}^{(\ell - m)/\locality} \; \sum_{t = \ell/\locality -\tau}^{\infty} \; \frac{\beta^{\tau+t}}{\tau! \; t!} \;  \Paren{ 2^\tau \sum_{u=1}^{\infty} P_{\tau,u} + (2(\ell-\tau \locality))^\tau }\cdot \Lambda\; \cdot 2^t \cdot P_{t,\ell-\tau\locality}.
\end{split}
\end{equation}
 
We carefully manipulate this expression to show  it is overall bounded by $\gamma^\ell$. In the process, we bound the term $\sum_u P_{\tau,u}$ via Eq.~\eqref{eq:boundoniteratedsupportintro} with $C=1$, and further utilize the same result to  bound  $C^{\ell-\tau\locality} P_{t,\ell-\tau\locality}$ for a certain $C(\gamma)$.  See \cref{lem:img-time-evo-residual} for details.

\section{Discussion}

Our results provide an exact decomposition for the Gibbs state of any local 1D qubit Hamiltonian in terms of matrix product states, placing a rigorous and universal bound on the amount of entanglement present in 1D, as measured by the Schmidt number. Our work leaves several interesting directions open, which we enumerate below.

 \paragraph{Tightness of the bond dimension.} 
 Our bound on the bond-dimension scales doubly exponentially in $\beta$. Is this scaling improvable in general, or are there explicit 1D models that asymptotically saturate it?

\paragraph{Decay of entanglement.} What is the range of entanglement in 1D thermal states? While our results place a strong bound on the {\it amount} of thermal entanglement, understanding its typical range remains open. This question goes beyond studying correlations in the Gibbs state, which may also be classical in nature.
Generic MPSs have a finite correlation length, more specifically, injective MPSs (those whose transfer matrix has a nondegenerate maximal eigenvalue) exhibit exponentially decaying correlations~\cite{PerezGarcia2007}. While we expect that the MPSs in our decomposition are injective, we leave a more careful study of their typical correlation length for future work. Proving injectivity also has two other interesting implications, explained next.

\paragraph{Gibbs state preparation.}{ Via the results of Malz, Styliaris, Wei and Cirac~\cite{malz_preparation_2024}, if the MPS in our decomposition are explicitly shown to be injective, this implies the existence of a log-depth quantum circuit that can unitarily prepare them, thus yielding a new method to efficiently prepare 1D Gibbs states in a quantum computer. 

\paragraph{Quantum thermalization.} A question perhaps seemingly unrelated to the present context is that of quantum thermalization. In a nutshell, quantum thermalization asks whether a pure, low-complexity initial state evolving unitarily under a local, time independent Hamiltonian eventually becomes locally indistinguishable from the Gibbs state at some temperature. Pilatowsky-Cameo and Choi~\cite{PilatowskyCameo2025} proved that any pure state sampled from an ensemble of low-complexity states whose average is the Gibbs state should thermalize under generic translationally-invariant Hamiltonians. Specifically, if the ensemble of MPSs we exhibit here is shown to be preparable by a polylog-depth unitary circuit (and therefore their circuit complexity is explicitly bounded), their typical quantum thermalization is guaranteed upon the unitary evolution generated by a generic translationally-invariant Hamiltonian in 1D.

\paragraph{Higher dimensions.} In one-dimensional systems, thermal fluctuations veil quantum entanglement, only allowing a finite amount to propagate along the chain. In higher dimensional settings, a more nontrivial interplay between locality and thermal fluctuations may occur. Quantifying entanglement in higher-dimensional systems, even in two dimensions, remains a significant challenge. Previous works~\cite{Hastings2006,Molnar2015} have approximated Gibbs states using projected entangled pair states (PEPS) or operators~\cite{Verstraete2004}, but these approaches typically require a bond dimension that scales with system size. We believe that our methods could yield exact decompositions of the Gibbs state as mixtures of PEPS with constant bond dimension within specific temperature ranges, since above a certain threshold, thermal states are known to be separable~\cite{bakshi2024high}, and we anticipate that a constant bond dimension suffices to represent states at moderately lower temperatures. However, as the temperature decreases further, phase transitions may induce long-range correlations.  Whether such states remain representable as mixtures of tensor network states is non-trivial and answering this question would help elucidate whether these correlations are fundamentally classical or quantum in nature.

%% file: prelims.tex
\section{Preliminaries}
\subsection{Basic notations and setting}

\label{sec:prelims}
 Our physical setting is a one-dimensional, open chain of $\qubits$ qubits. Mathematically, this means that for every integer $j\in[\qubits]\coloneq\{1,2,\dots,\qubits\}$, we consider a local Hilbert space $\mathcal{H}_j=\mathbb{C}^2$. The global Hilbert space is the tensor product $\mathcal{H}=\bigotimes_{j=1}^\qubits \mathcal{H}_j$ of dimension $2^\qubits$. We say that an operator $O$  (a linear function acting on $\mathcal{H}$) is supported on a region $A\subseteq [\qubits]$ if there is another operator $O_A$ acting on $\mathcal{H}=\bigotimes_{ a\in A} \mathcal{H}_{ a}$, such that $O=O_A\otimes \bigotimes_{ j\not \in A} I_j$, where $I_j$ is the identity operator on $\mathcal{H}_j$. Henceforth, we will adopt the standard abuse of notation and do not distinguish between $O_A$ and~$O$.  For a matrix $A$, we use $A^\dagger$ to denote its conjugate transpose and $\norm{A}$ to denote its operator norm. 

\begin{definition}[Pauli matrices, Pauli strings, and stabilizer product states] \label{def:paulis}
    The single-qubit Pauli operators consist of the identity $I$ and the traceless matrices $\sigma_x, \sigma_y, \sigma_z$:
\begin{equation}
    \sigma_x = \begin{pmatrix} 0 & 1 \\ 1 & 0 \end{pmatrix}, \quad
    \sigma_y = \begin{pmatrix} 0 & -i \\ i & 0 \end{pmatrix}, \quad
    \sigma_z = \begin{pmatrix} 1 & 0 \\ 0 & -1 \end{pmatrix}.
\end{equation}
These matrices are Hermitian, unitary, and involutory (square to identity). We define \textit{Pauli strings} as tensor products of the form $P = P_1 \otimes \dots \otimes P_n$, where each $P_k \in \{I, \sigma_x, \sigma_y, \sigma_z\}$. The set of all Pauli strings forms an orthogonal basis for the vector space of $2^\qubits \times 2^\qubits$  matrices. The eigenstates of these operators are denoted as $\ket{0}, \ket{1}$ for $\sigma_z$; $\ket{\pm} = (\ket{0} \pm \ket{1})/{\sqrt{2}}$ for $\sigma_x$; and $\ket{\pm i} = (\ket{0} \pm i\ket{1})/{\sqrt{2}}$ for $\sigma_y$. We define a \textit{stabilizer product state} as any tensor product $\ket{\psi} = \bigotimes_{j=1}^n \ket{s_j}$ where each $\ket{s_j}$ is one of these six eigenstates.
\end{definition}

\paragraph{Locality and quasilocality}

    For an operator $O$ its \emph{support} $\supp(O) \subseteq [\qubits]$ is the subset of qubits that $O$ acts non-trivially on, i.e., the smallest set such that $O = O_{\supp(O)} \otimes I_{[n] \setminus \supp(O)}$. We say an operator is {\it local} if its support is contained within a connected region of constant size. Furthermore, we study operators whose locality is not strictly bounded, but decays exponentially, quantified as follows.

\begin{definition}[Quasilocal operators and quasilocal perturbations of the identity]
An operator $Q$ is {\it quasilocal with decay $\gamma<1$} if it can be expanded as a series
\begin{equation}
    Q=\sum_{j=1}^\qubits F_j,
\end{equation}
where $F_j$ is supported between sites $1$ and $j$ and has an exponentially decaying operator norm $\norm{F_j}\leq \gamma^j$. Further, for such $Q$, we say that $X=I+Q$ is a {\it quasilocal perturbation of the identity with decay $\gamma$}
\end{definition}

\paragraph{Hamiltonian and Gibbs state}

A \emph{Hamiltonian} on $\qubits$ qubits is represented by any Hermitian operator $H$.  We study a $\locality$-local Hamiltonian on a line, which is a sum of \emph{local terms} $H_1, \dots , H_{n}$, with $H = \sum_{j = 1}^n H_j$ and each operator $H_j$ is supported on sites $\{j, j+1, \dots , j + \locality - 1 \}$.  For normalization, we assume that $\norm{H_j} \leq 1$. For such a Hamiltonian, we denote its {\it Gibbs state} by $g_\beta=\exp(-\beta H)/Z_\beta$, where $\beta\geq 0$ is the {\it inverse temperature} and $Z_\beta=\tr(\exp(-\beta H))$ is the {\it partition function}.

\paragraph{Matrix product state} A {\it matrix product state with bond dimension $\bonddim$} is a state represented by a set of matrices  $ A_1^{b_1}A_2^{b_2}\cdots A_\qubits^{b_\qubits}$, where $b_j\in\{0,1\}$ and $j\in[\qubits]$. For $j\neq1,\qubits$,  $A_j^{b_j}$ has dimension $\bonddim\times \bonddim$, while $A_1^{b_1}$ and $A_\qubits^{b_\qubits}$ are $1\times \bonddim$ and $\bonddim \times 1$-dimensional, respectively. The associated MPS is given by $$\ket{\phi}=\frac{1}{\sqrt{\alpha}}\sum_{b_j\in\{0,1\}} A_1^{b_1}A_2^{b_2}\cdots A_\qubits^{b_\qubits} \ket{b_1 b_2\cdots  b_\qubits}$$
where $\ket{b}$ is the eigenstate of Pauli $\sigma_z$ and the number $\alpha>0$ ensures normalization $\langle\phi|\phi\rangle =1$.

\subsection{Entanglement in mixed states}
Here, we introduce quantum information theoretical measures of quantum entanglement in mixed states. For a comprehensive review, see \cite{Horodecki2009}. Recall the notion of separability:
\begin{definition}[Separable operator]
    We say that an operator $O$ is {\it separable} if it can be written as a positive linear combination of product states,    
    \begin{equation}
        O=\sum_i w_i \ket{\psi_i}\bra{\psi_i},
    \end{equation}
    i.e., $\ket{\psi_i}=\ket{\psi_i^{(1)}}\otimes\cdots \otimes \ket{\psi_i^{(\qubits)}}$, and $w_i>0$.
    We say a state is separable if its density matrix $\rho$ is separable.
\end{definition}

When a state is separable, it possesses no quantum entanglement. Measures of entanglement serve to quantify how much a state deviates from separability. For a density matrix $\sigma$ we represent by $\mathcal{U}(\sigma)$ the set of all its {\it unravelings} $U$, which is a set of tuples $(p_\psi,\ket{\psi})$ such that $\sum_{(p_\psi,\ket{\psi})\in U} p_\psi \dyad{\psi}=\sigma$, where $p_\psi\geq 0$. We denote a {\it bipartition} of the chain by $A:B$, where $A,B\subseteq [n]$ are disjoint and $A\cup B=[n]$.
\begin{definition}[R\'enyi entanglement of formation] 
    Given a density matrix $\sigma$ acting on a bipartition $A:B$, its {\it R\'enyi entanglement of formation of order $\alpha>0$} is given by
    \begin{equation}
        E_{f,\alpha}(\sigma_{AB})=\min_{U\in \mathcal{U}(\sigma)} \sum_{p_\psi \in U} p_\psi S_\alpha(\psi_A),
    \end{equation}
    where $S_\alpha(\psi_A)=\frac{1}{1-\alpha}\tr(\log(\psi_A^\alpha))$ is the R{\'e}nyi entropy of $\psi_A=\tr_B(\dyad{\psi})$ ($\alpha\neq 1$), with  $S_1(\rho)=-\tr(\rho \log\rho)$ being the usual Von Neumann entropy.
\end{definition}
The entanglement of formation  quantifies how many Bell pairs are required, on average, to prepare the mixed state. It upper bounds the so-called entanglement of distillation, which measures the amount of Bell pairs that can be extracted from the state. In this work, we bound a more stringent measure of entanglement, defined as follows.
\begin{definition}[Schmidt number]
    Given a density matrix $\sigma$ acting on a bipartition $A:B$, its {\it Schmidt number} is
    \begin{equation}
        \mathrm{SN}(\sigma_{AB})=\min_{U\in \mathcal{U}(\sigma)} \max_{p_\psi \in U} \mathrm{rank}(\psi_A),
    \end{equation}
    where the rank of $\psi_A$ corresponds to the Schmidt rank of the pure state $\ket{\psi}$.
\end{definition}

\begin{fact}
    For any $\alpha \geq 0$, any mixed state $\sigma$, and any bipartition $A:B$,
    \begin{equation}
       E_{f,\alpha}(\sigma_{AB})\leq \log(\mathrm{SN}(\sigma_{AB})).
    \end{equation}
\end{fact}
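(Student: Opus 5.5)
We prove the inequality by fixing a single unraveling of $\sigma$ and bounding everything in terms of the largest Schmidt rank in it. Let $r=\mathrm{SN}(\sigma_{AB})$. By definition there is an unraveling $U=\{(p_\psi,\ket{\psi})\}\in\mathcal{U}(\sigma)$ in which every pure state $\ket{\psi}$ with $p_\psi>0$ has Schmidt rank $\mathrm{rank}(\psi_A)\leq r$. The minimum in the definition is attained, since Schmidt ranks are integers.

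Since $E_{f,\alpha}$ is a minimum over unravelings, it suffices to evaluate the objective on this one $U$:
\begin{equation*}
E_{f,\alpha}(\sigma_{AB})\leq \sum_{(p_\psi,\ket\psi)\in U} p_\psi\, S_\alpha(\psi_A).
\end{equation*}
The key pointwise fact is that any R\'enyi entropy of order $\alpha\geq 0$ is bounded by the logarithm of the rank, i.e.\ $S_\alpha(\rho)\leq \log\mathrm{rank}(\rho)$. Write $\lambda_1,\dots,\lambda_k$ for the nonzero eigenvalues of $\rho$, with $k=\mathrm{rank}(\rho)$, and argue by cases on $\alpha$:
\begin{itemize}
\item For $\alpha=0$, $S_0=\log k$, so there is equality.
\item For $0<\alpha<1$, concavity of $x\mapsto x^\alpha$ and Jensen's inequality give $\sum_i\lambda_i^\alpha\leq k^{1-\alpha}$. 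Dividing $\log$ of this by $1-\alpha>0$ gives $S_\alpha\leq\log k$.
\item For $\alpha>1$, $S_\alpha\leq S_1$ because R\'enyi entropies are nonincreasing in $\alpha$.
\item For $\alpha=1$, $S_1\leq\log k$ is the standard maximum-entropy bound for a distribution on $k$ points.
\end{itemize}
Here we read the definition's $S_\alpha$ as the standard $\frac{1}{1-\alpha}\log\tr(\rho^\alpha)$, with $\alpha=0$ understood as the limit $\log\mathrm{rank}$.

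Applying the pointwise bound to each $\psi_A$ gives $S_\alpha(\psi_A)\leq\log\mathrm{rank}(\psi_A)\leq \log r$. Since $\sum p_\psi=1$, the average is also at most $\log r$, which yields $E_{f,\alpha}(\sigma_{AB})\leq\log\mathrm{SN}(\sigma_{AB})$.

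The only step needing care is the uniform rank bound on R\'enyi entropies across all orders, handled by the cases above. The boundary case $\alpha=0$ is definitional, so no step is a real obstacle.
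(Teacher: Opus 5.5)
Your proof is correct and is the argument the paper has in mind. The paper states this as a Fact without proof, justifying it only by the remark that the entanglement of formation minimizes an average while the Schmidt number minimizes the worst case; evaluating $E_{f,\alpha}$ on an SN-optimal unraveling and using the pointwise bound $S_\alpha(\rho)\leq\log\mathrm{rank}(\rho)$ makes that remark precise. You were also right to read $S_\alpha$ as $\frac{1}{1-\alpha}\log\tr(\rho^\alpha)$, since the paper's displayed $\frac{1}{1-\alpha}\tr(\log(\psi_A^\alpha))$ is a typo, and to handle $\alpha=0$ via $S_0=\log\mathrm{rank}$, since the paper only defines $E_{f,\alpha}$ for $\alpha>0$.
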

The Schmidt number is a strictly more stringent measure than the R\'enyi entanglement of formation of any order. The entanglement of formation minimizes an average, while the Schmidt number minimizes the worst case. We will utilize matrix product states to control the Schmidt number. Specifically, one can show the following.
\begin{observation}
Let $\rho=\sum_i p_i \dyad{\psi_i}$ be a mixture of matrix product states of bond dimension less than $\bonddim$. Then, for any contiguous bipartition, i.e., $A=\{1,2,\dots,\ell\}$ and $B=\{\ell+1,\ell+2,\dots \qubits \}$, the Schmidt number of $\rho$ is upper bounded as $\mathrm{SN}(\sigma)\leq \bonddim.$ More generally, for an arbitrary bipartition of the chain, $\mathrm{SN}(\sigma)\leq \bonddim^M,$
where $M$ is the number of boundaries generated by the bipartition.
\end{observation}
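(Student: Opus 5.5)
Since $\rho=\sum_i p_i\dyad{\psi_i}$ is already an unraveling of $\rho$, the definition of Schmidt number (a minimum over unravelings of the maximal Schmidt rank) gives $\mathrm{SN}(\rho_{AB})\le \max_i \mathrm{rank}\big((\psi_i)_A\big)$. The proof therefore reduces to bounding the Schmidt rank of a single MPS $\ket{\phi}$ of bond dimension $\bonddim$ across the cut. Note that this ensemble need not be optimal; any unraveling suffices for an upper bound.

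For a contiguous cut $A=\{1,\dots,\ell\}$, $B=\{\ell+1,\dots,\qubits\}$, split the matrix product at the bond between sites $\ell$ and $\ell+1$:
\begin{equation*}
\ket{\phi}=\frac{1}{\sqrt{\alpha}}\sum_{k=1}^{\bonddim}\Big(\sum_{b_1,\dots,b_\ell}\big(A_1^{b_1}\cdots A_\ell^{b_\ell}\big)_{k}\ket{b_1\cdots b_\ell}\Big)\otimes\Big(\sum_{b_{\ell+1},\dots,b_\qubits}\big(A_{\ell+1}^{b_{\ell+1}}\cdots A_\qubits^{b_\qubits}\big)_{k}\ket{b_{\ell+1}\cdots b_\qubits}\Big).
\end{equation*}
Here the first bracket is a $1\times\bonddim$ row vector and the second a $\bonddim\times 1$ column vector. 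This writes $\ket{\phi}=\sum_{k=1}^{\bonddim}\ket{u_k}_A\otimes\ket{v_k}_B$. The vectors $\ket{u_k},\ket{v_k}$ need not be orthogonal or nonzero, but the rank of the reduced state $\phi_A$ equals the rank of the coefficient matrix of $\ket{\phi}$ viewed as a map $B^*\to A$. That matrix is a sum of $\bonddim$ rank-one terms, so $\mathrm{rank}(\phi_A)\le\bonddim$. If some bond has dimension smaller than $\bonddim$ (for example, near the boundary), the bound only improves.

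For an arbitrary bipartition $A:B$, decompose the chain into maximal contiguous blocks, alternating between $A$ and $B$. The $M$ boundaries between blocks are bonds where $A$ meets $B$. Contract all bonds internal to each block, and expand every boundary bond index into an explicit sum. Each fixed assignment $(k_1,\dots,k_M)\in[\bonddim]^M$ of the boundary indices produces a product vector: the $A$-blocks contract to a vector on $A$ and the $B$-blocks to a vector on $B$. Hence $\ket{\phi}$ is a sum of at most $\bonddim^M$ product terms across $A:B$, and $\mathrm{rank}(\phi_A)\le\bonddim^M$. Combined with the first step, this gives $\mathrm{SN}(\rho)\le\bonddim^M$, with the contiguous case corresponding to $M=1$.

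None of these steps is a real obstacle. The only points needing care are the normalization, since $1/\sqrt{\alpha}$ does not affect rank, and the observation that the Schmidt rank equals the minimal number of product terms in any decomposition, not just an orthogonal one.
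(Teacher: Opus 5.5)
Your argument is correct and is the standard reasoning the paper relies on; the paper states this as an observation without a written proof. The given mixture is itself an unraveling, so $\mathrm{SN}(\rho)\le\max_i \mathrm{rank}\big((\psi_i)_A\big)$, and cutting each MPS at the $M$ boundary bonds writes it as a sum of at most $\chi^M$ product terms across $A:B$ (you also silently correct the statement's $\mathrm{SN}(\sigma)$ to $\mathrm{SN}(\rho)$).
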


%% file: separability.tex
\section{Separability from quasilocality}

Here, we state and prove several sufficient conditions for separability that can be inferred from \cite{bakshi2024high}. We leverage these results for our decomposition of the thermal state in one-dimensional systems; however, we believe some of them may be of independent interest.
\subsection{Separability of quasilocal perturbations of the identity}
It is well known that a perturbation of the identity $X=I+\Delta$ is separable as long as $\Delta$ is sufficiently small. Gurvits and Barnum \cite{Gurvits2002,Gurvits2003sep} proved that this general statement is only true if $\Delta$ has a norm which is exponentially small in the system size. However, it was recognized in  \cite{bakshi2024high} that if one additionally requires that $\Delta$ is quasilocal, then this result can be greatly improved:

\begin{proposition}[Quasilocal perturbations of the identity are separable]
\label{prop:quasi-local-perturbations}
Let $X$ be Hermitian and such that
\begin{equation*}
    X =I+\sum_{\ell=1}^{\infty} F_\ell  , \quad \textrm{ where }  \abs{\supp(F_\ell)}  \leq \ell   \; \textrm{ and } \Norm{F_\ell} \leq \gamma^{\ell}. \,
\end{equation*}
for $\gamma\leq \frac{1}{8}$.
 Then $X$ can be decomposed into a non-negative combination of stabilizer product states $\dyad{s_1}\otimes \dyad{s_2} \otimes \ldots \otimes \dyad{s_n}$, and in particular $X$ is separable.
\end{proposition}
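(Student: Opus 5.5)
Our plan is to expand everything in the Pauli basis and then absorb each Pauli string into a nonnegative combination of stabilizer product states, using a share of the identity as the budget. The approach follows the strategy of \cite{bakshi2024high}.

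\textbf{Pauli expansion.} Since $X$ is Hermitian, we may replace each $F_\ell$ by $(F_\ell+F_\ell^\dagger)/2$. This keeps the support and the norm bound, so we may assume every $F_\ell$ is Hermitian. Expanding $F_\ell=\sum_{P} c_{\ell,P} P$ over Pauli strings $P$ supported in $\supp(F_\ell)$ gives real coefficients with $|c_{\ell,P}|\le \|F_\ell\|\le\gamma^\ell$, because $c_{\ell,P}=2^{-n}\tr(PF_\ell)$. There are at most $4^{\ell}$ such strings.

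\textbf{Key one-string lemma.} For any Pauli string $P$ with nontrivial support $S$, $|S|=k$, and any real $c$ with $|c|\le 1$, the operator
\[
I_S + c\,P_S
\]
is a nonnegative combination of stabilizer product states on $S$ (tensored with $I$ elsewhere, and $I=\dyad{0}+\dyad{1}$ is itself such a combination). To see this, write $P_S=\bigotimes_{j\in S}\sigma_{a_j}$ and $\sigma_a=\dyad{+_a}-\dyad{-_a}$. Then
\[
I_S\pm P_S = \sum_{\epsilon\in\{\pm\}^S}\Bigl(1\pm\prod_j\epsilon_j\Bigr)\bigotimes_j \dyad{\epsilon_j{}_a}.
\]
All coefficients are $0$ or $2$, and $I_S+cP_S$ is a convex combination of $I_S\pm P_S$. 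Equivalently, $I+cP$ is positive semidefinite and diagonal in the product eigenbasis of $P$. We will also need a shifted version: more generally, $\mu I+\nu P$ is such a combination whenever $\mu\ge|\nu|$.

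\textbf{Distributing the identity.} We write
\[
X=\Bigl(1-\sum_{\ell,P}|c_{\ell,P}|\Bigr)I+\sum_{\ell,P}\bigl(|c_{\ell,P}|I+c_{\ell,P}P\bigr).
\]
Each bracket is a good combination by the lemma, so it suffices that $\sum_{\ell,P}|c_{\ell,P}|\le1$. Here lies the main obstacle: the count of strings must not depend on $n$. Since $\supp(F_\ell)$ has size at most $\ell$, there is exactly one operator $F_\ell$ per $\ell$ and it has at most $4^\ell$ Pauli components. The crude bound is therefore
\[
\sum_\ell 4^\ell\gamma^\ell = \frac{4\gamma}{1-4\gamma}\le 1 \quad\text{for }\gamma\le \tfrac18.
\]
The nontrivial input is that the sum runs over one operator per scale, with no extra positional multiplicity factor. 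That is exactly how the hypothesis is stated, and it is why the final threshold is $1/8$.

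I would double-check one point: the Pauli coefficient bound $|c_{\ell,P}|\le\|F_\ell\|$, which holds because $|\tr(PF)|/2^n\le\|F\|$. I would also check that the remaining identity coefficient $1-\sum|c|$ is nonnegative, which is exactly the inequality just verified. Hermiticity is used only to make the $c_{\ell,P}$ real, so that $|c|I+cP$ is well defined and positive.
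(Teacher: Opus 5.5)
Your proof is correct and is essentially the paper's argument. You symmetrize the $F_\ell$, Pauli-expand them with coefficients bounded by $\gamma^\ell$ and at most $4^\ell$ strings per scale, and pay for each term $c\,P$ with a share of the identity, using that $I+cP$ has nonnegative weights in a stabilizer product basis. The paper instead splits the identity with fixed weights $2^{-\ell}\cdot 4^{-\ell}$ (an unbiased estimator over scales followed by its local-perturbation lemma), whereas you assign weight $|c_{\ell,P}|$ to each term and keep the leftover; this only needs $\sum_{\ell,P}|c_{\ell,P}|\le 1$ and gives the same threshold $\gamma\le 1/8$.
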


\begin{proof}
Consider a distribution over positive integers that draws $\ell \in \mathbb{Z}^{+}$ with probability $p_{\ell} = \frac{1}{2^\ell}.$ This is a valid probability distribution since $\sum_{\ell =1}^{\infty} \frac{1}{2^\ell} = 1$. 
Draw a $\ell \in \mathbb{Z}^{+}$ from this probability distribution and output the operator $Z_\ell = I + F_{\ell}/p_{\ell}$. Observe $Z_\ell$ is an unbiased estimator of $X$, i.e. $\expecf{}{Z_{\ell}} = I + \sum_{\ell=1}^{\infty} F_\ell  = X$. Further,  
\begin{equation*}
    \frac{\Norm{F_\ell}}{p_\ell} \leq    \Paren{2 {\gamma} }^\ell \leq 4^{-\ell}\,.
\end{equation*}
Further note that $F_\ell$ can be taken to be Hermitian (otherwise, since $Q$ is Hermitian, we can replace $F_\ell$ with $(F_\ell+F_\ell^\dagger)/2$ in the decomposition of $Q$).
We prove in \cref{lem:sep-of-perturbations with bounded-support} below that such $Z_\ell$ must be a convex combination of stabilizer product states.  Then $X$ is also a convex combination of stabilizer product states. 
\end{proof}

\begin{lemma}[Local perturbations of the identity are separable]
\label{lem:sep-of-perturbations with bounded-support}
Let $X = I + Y$ where $Y$ is Hermitian, $|\supp(Y)| \leq \ell$ and $\Norm{Y} \leq 4^{-\ell}$. Then, $X$ can be decomposed into a non-negative combination of stabilizer product states.
\end{lemma}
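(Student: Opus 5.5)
Since $X$ acts as the identity outside $S=\supp(Y)$, I would first reduce to the $\ell$ qubits in $S$. Suppose $I_S + Y$ is a non-negative combination of stabilizer product states on $S$. For each qubit outside $S$ I would write $I = \dyad{0}+\dyad{1}$ and tensor these in; this gives the claimed decomposition of $X$ on all $\qubits$ qubits. So from now on I take $\qubits=\ell$, $\Norm{Y}\le 4^{-\ell}$, and $Y$ Hermitian.

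Next, expand $Y$ in the Pauli basis, $Y=\sum_{P} c_P P$, where the sum runs over Pauli strings on $\ell$ qubits and the $c_P=\tr(PY)/2^\ell$ are real because $Y$ is Hermitian. Each coefficient is small: $|c_P|\le \Norm{P}\,\Norm{Y}\le 4^{-\ell}$. The key observation is that $I+cP$ is a non-negative combination of stabilizer product states whenever $P$ is a Pauli string and $|c|$ is small. For a single qubit, $I\pm\sigma_a = 2\dyad{\pm_a}$, where $\ket{\pm_a}$ are the eigenvectors of $\sigma_a$. For a string $P=\bigotimes_{j\in T}\sigma_{a_j}$ with nontrivial support $T$, I would write $\sigma_{a_j}=\Pi_j^+-\Pi_j^-$ and $I=\Pi_j^++\Pi_j^-$, with $\Pi_j^\pm$ the eigenprojectors, which are stabilizer states. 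On the qubits outside $T$ I would write $I=\dyad{0}+\dyad{1}$. Then
\[
I+cP=\sum_{\epsilon\in\{\pm\}^{T}}\Bigl(1+c\prod_{j\in T}\epsilon_j\Bigr)\bigotimes_{j\in T}\Pi_j^{\epsilon_j}\otimes I_{S\setminus T},
\]
and every coefficient is at least $1-|c|$, which is non-negative as soon as $|c|\le 1$.

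Now split the identity evenly across the strings. There are at most $4^\ell$ nonidentity Pauli strings, so I would write
\[
I+Y=c_I I+\sum_{P\neq I}\Bigl(4^{-\ell}I+c_PP\Bigr).
\]
Here $c_I\ge 1-4^{-\ell}\ge 0$, and the leftover share $(1-(4^\ell-1)4^{-\ell})I\succeq 0$ can be absorbed into the first term. Each summand equals $4^{-\ell}\bigl(I+4^{\ell}c_PP\bigr)$, and $|4^\ell c_P|\le 1$, so by the previous paragraph each is a non-negative combination of stabilizer product states. The identity term decomposes in the computational basis, and summing everything gives the lemma.

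The only delicate point is the normalization. The count of $4^\ell$ strings must be matched against the bound $|c_P|\le\Norm{Y}$. The crude estimate $|c_P|\le \Norm{Y}$, which follows from $|\tr(PY)|\le 2^\ell\Norm{Y}$, is exactly what makes the hypothesis $\Norm{Y}\le 4^{-\ell}$ suffice with no slack. I would double-check that constant carefully, but no deeper obstacle is expected.
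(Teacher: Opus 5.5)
Your proposal is correct and follows the paper's proof: expand $Y$ in Pauli strings with $|\alpha_P|\le\|Y\|\le 4^{-\ell}$, write $I+Y=4^{-\ell}\sum_P (I+4^{\ell}\alpha_P P)$, and diagonalize each $I+cP$, $|c|\le 1$, in a stabilizer product basis so that all coefficients $1\pm c$ are non-negative. The one blemish is the identity bookkeeping in your displayed split, which as written double-counts $I$; the correct leftover weight is $1+\alpha_I-(4^{\ell}-1)4^{-\ell}=4^{-\ell}+\alpha_I\ge 0$, and the paper sidesteps this by including $P=I$ among the $4^{\ell}$ strings so that every string, the identity included, receives weight $4^{-\ell}$.
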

\begin{proof}
Let $\calS = \supp(Y)$ and let $\calP_{\calS}$ be the set of Pauli matrices supported on $\calS$ and $I$ otherwise. Then, we can expand $Y$ in the Pauli-string basis $Y = \sum_{P \in \calP_{\calS} } \alpha_P  P$,  where $\alpha_P = \frac{1}{2^{\qubits}} \tr\Paren{ Y P }$ is real since $Y$ is Hermitian. Observe, for any Pauli string $P$, its trace norm is $\Norm{P}_{\tr} = 2^\qubits$, and thus    $\abs{\alpha_{P}} \leq \frac{1}{2^\qubits}  \cdot \Norm{Y} \cdot \Norm{P}_{\tr} \leq \Norm{Y}\leq 4^{-\ell}.$ Next, observe that $\abs{\calP_\calS} =4^\ell$, and therefore 
\begin{equation*}
    I + Y = \frac{1}{4^\ell}\bigg( \sum_{P \in \calP_\calS } I + c_P  P  \bigg),
\end{equation*}
where $c_P=4^\ell \cdot \alpha_P$ satisfies $|c_P| \leq 1$.

We only need to show that $I + c P$ is a convex combination of stabilizer product states for any $|c| \leq 1$. Since $P$ is a tensor product of Pauli matrices supported on $\mathcal{S}$, it can be diagonalized in a basis of stabilizer product states $\ket{\bm s}_{\mathcal{S}} = \bigotimes_{j \in \mathcal{S}} \ket{s_j}$, with corresponding eigenvalues $\lambda_{\bm s} \in \{ \pm 1\}$. We can therefore write the spectral decomposition directly as:
\begin{equation*}
I + c P = \sum_{\bm s} (1 + c \lambda_{\bm s}) \dyad{\bm s}_{\mathcal{S}} \otimes I_{\overline{\mathcal{S}}} .
\end{equation*}
Since $|c| \leq 1$, the coefficients $w_{\bm s} = 1 + c \lambda_{\bm s}$ are non-negative. Furthermore, the identity on the complementary subsystem $\overline\calS$ can be trivially expanded as a sum of stabilizer product states.
\end{proof}

\subsection{Separability of staircases of quasilocal perturbations of the identity}
So far, we have shown that a single Hermitian quasilocal perturbation of the identity is separable. Even more generally, we show that the staggered product (a {\it staircase}) of quasilocal perturbations of the identity, not necessarily Hermitian themselves, but appropriately symmetrized, is also separable.
\begin{theorem}[Staircases of quasilocal perturbations of the identity are separable]\label{lem:decay-implies-separable}
Given an $\qubits$ qubit system, let $X_1, X_2, \ldots, X_\qubits$ be a sequence of quasilocal perturbations of the identity with decay $\gamma\leq 1/56$, such that  $\supp(X_i) \subseteq [i, \qubits]$. That is, each $X_i$ admits the following decomposition:
\begin{equation*}
    X_i = I + \sum_{\ell=1}^{\infty} F^{(i)}_\ell \quad \textrm{ where } \; \supp(F^{(i)}_\ell) \subseteq [i, i + \ell -1],   \; \textrm{ and } \|F^{(i)}_\ell\| \leq \gamma^{\ell+i-1}\,.
\end{equation*}
Then, the operator $\Paren{ X_\qubits \cdot X_{\qubits-1} \cdots X_1 } \Paren{ X_\qubits \cdot X_{\qubits-1} \cdots X_1 }^\dagger$ can be written as a non-negative linear combination of stabilizer product states.
\end{theorem}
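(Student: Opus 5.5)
The plan is to expand the product $\mathcal{X}\mathcal{X}^\dagger$, with $\mathcal{X}=X_\qubits\cdots X_1$, into a sum of local terms and show that the result is a Hermitian operator of the form $I+\sum_T G_T$. Here each $G_T$ has support in a contiguous interval $T$ and norm at most $\gamma'^{|T|}$ for some $\gamma'\le 1/8$, up to summation over the choices of intervals. We then finish with a randomized unbiased-estimator argument, as in \cref{prop:quasi-local-perturbations}, reducing to \cref{lem:sep-of-perturbations with bounded-support}.

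\textbf{Step 1: expand the product.} Writing $X_i=I+\sum_{\ell}F^{(i)}_\ell$ and expanding, we get
\[
\mathcal{X}\mathcal{X}^\dagger=\sum_{S,S'}\sum_{\bm\ell,\bm\ell'} \Big(\prod_{i\in S}^{\leftarrow} F^{(i)}_{\ell_i}\Big)\Big(\prod_{i\in S'}^{\leftarrow}F^{(i)}_{\ell'_i}\Big)^\dagger,
\]
where $S,S'\subseteq[\qubits]$ are the sets of indices at which a nontrivial $F$ is chosen.

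\textbf{Step 2: bound each term.} A single term with chosen pairs $(i,\ell_i)$ has norm at most $\prod\gamma^{\ell_i+i-1}$, and its support lies in $\bigcup_i [i,i+\ell_i-1]$. The key observation is that the factor $\gamma^{i-1}$ pays for the distance from site $1$. Each factor therefore carries weight $\gamma^{i+\ell_i-1}$, which is at most $\gamma$ raised to the right endpoint of its interval.

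\textbf{Step 3: group by support.} Group terms according to the right endpoint $r$ of their total support, so that the support is contained in $[1,r]$. I would show that the total norm of all terms with maximal right endpoint $r$ is at most $(c\gamma)^r$ for a modest constant $c$. The counting goes as follows. The factor attaining the maximal endpoint contributes $\gamma^r$. Every other factor with endpoint $r_j\le r$ contributes $\gamma^{r_j}$, summed over its at most $r_j$ choices of $(i,\ell)$, and each factor appears at most twice (once in $S$, once in $S'$). The sum over subsets is then bounded by a product of the form
\[
\prod_{j}\bigl(1+ j\gamma^{j}\bigr)^2\le e^{2\sum_j j\gamma^j}=O(1).
\]
So the group for each $r$ is a perturbation supported in $[1,r]$ with norm at most $C\cdot 2r\,\gamma^r\le (c\gamma)^r$. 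This leaves $\mathcal{X}\mathcal{X}^\dagger=I+\sum_r G_r$ with $\supp(G_r)\subseteq[1,r]$ and $\norm{G_r}\le (c\gamma)^r$.

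\textbf{Step 4: finish.} The whole product is Hermitian, so each $G_r$ can be replaced by its Hermitian part, which preserves the bounds. If $c\gamma\le 1/8$, then \cref{prop:quasi-local-perturbations} applies directly: sample $r$ with probability $2^{-r}$ and invoke \cref{lem:sep-of-perturbations with bounded-support}. The threshold $\gamma\le1/56$ suggests a constant $c=7$, coming from $1/8\cdot 1/7$. I would tune the counting to reach that constant, for instance by absorbing the factor-of-two and multiplicity losses into $(7\gamma)^r$ using $\gamma\le 1/56$ so that the geometric sums are tiny.

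\textbf{Main obstacle.} The hard part is Step 3: doing the counting rigorously enough to get a clean per-site constant. Several subtleties arise:
\begin{itemize}
\item the product has non-commuting factors, although norms are submultiplicative, so only counting matters;
\item every multi-index must be assigned to exactly one support group;
\item the sum over arbitrarily many factors converges only because of the $\gamma^{i-1}$ weights.
\end{itemize}
My first attempt would be a generating-function bound: write $\sum_r \norm{G_r}x^r\le \prod_{i}\big(1+\sum_\ell \gamma^{i+\ell-1}x^{i+\ell-1}\big)^2-1$ and evaluate it at $x=1/(7\gamma)$.
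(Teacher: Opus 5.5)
Your argument is correct for the statement as written, but it follows a different route from the paper's. The paper never flattens the staircase. It iterates \cref{lem:separabilityconjbyquasilocalperturbations}, conjugating successively by $X_1, X_2, \ldots$. At each stage it keeps the form $\sum_j w_j |s_j\rangle\langle s_j|\otimes(I+c_jB_j)$, with $|c_j|$ exponentially small in the reach of $B_j$, and so pins one qubit per step. This is where the symmetrization lemma for $(I+zP)(I+cA)(I+z'P')^\dagger$ and the $\pm$-eigenstate splitting enter, and where $1/56=1/(8\cdot 7)$ comes from.

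You instead spend the factor $\gamma^{i-1}$ to make $\mathcal{X}\mathcal{X}^\dagger-I$ a single quasilocal operator anchored at site $1$, then invoke \cref{prop:quasi-local-perturbations} once. This is shorter and avoids all the pinning machinery. The cost is that you need that factor essentially. The paper's proof uses only $\|F^{(i)}_\ell\|\le\gamma^{\ell}$, after relabeling site $i$ as site $1$. That weaker bound is what iterating \cref{th:entanglement-bulk-decomposition} actually provides: $X_i$ is built from the chain $[i,\qubits]$, so its decay is measured from site $i$.

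Under the weaker hypothesis your flattening cannot work at all. With $X_i=I+\gamma\sigma_z^{(i)}$ one gets $\|\mathcal{X}\mathcal{X}^\dagger-I\|=(1+\gamma)^{2\qubits}-1$. By contrast, any operator of the form in \cref{prop:quasi-local-perturbations} has $\|X-I\|\le\sum_\ell 8^{-\ell}=1/7$. So the paper's multiplicative, site-by-site structure is what makes its version usable for \cref{th:GibbsStatesareMixturesOfMPS}. It is also the structure that the sampling algorithm of \cref{Sec:samplingalgo} exploits.

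One repair in your Step 3. The product $\prod_j(1+j\gamma^j)^2$ allows at most one factor per right endpoint on each side. Several factors can share an endpoint, e.g.\ $F^{(1)}_2$ and $F^{(2)}_1$, so the justification as stated does not cover all configurations. Index by the label $i$ instead: one side is bounded by $\prod_i\bigl(1+\sum_{\ell\ge 1}\gamma^{i+\ell-1}\bigr)=\prod_i\bigl(1+\gamma^i/(1-\gamma)\bigr)\le e^{\gamma/(1-\gamma)^2}$. This is exactly your $e^{\sum_j j\gamma^j}$.

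The constant then closes immediately. Choosing the factor attaining the maximal endpoint $r$ (at most $2r$ ways) gives $\|G_r\|\le 2r\gamma^r e^{2\gamma/(1-\gamma)^2}\le 2.1\,r\gamma^r$. Since $2.1\,r\le 7^r$, this is at most $(7\gamma)^r\le 8^{-r}$ for $\gamma\le 1/56$. Your generating-function variant is correctly indexed by $i$ and also works: at $x=1/(7\gamma)$ it yields $\|G_r\|\le 0.48\,(7\gamma)^r$.
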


We prove this result via a sequence of Lemmas. 

\begin{lemma}
    Let $c\in [0,1]$ and $P=P_1\otimes P_2\otimes \cdots\otimes P_n$ be a Pauli string. Then
    \begin{equation}
    \label{lemm:plusminuspauli}
        (I+cP)=\sum_{\pm}  \dyad{s_{\pm}}\otimes (I+ c_\pm P_{[2,\qubits]}),
    \end{equation}
    where $\ket{s_\pm}$ are the orthogonal eigenstates of $P_1$, $\abs{c_\pm}=c$ and $P_{[2,\qubits]}=P_2\otimes P_3\otimes \cdots P_\qubits$.
\end{lemma}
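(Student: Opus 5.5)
Write $P_1$ in its spectral decomposition. Since $P_1\in\{I,\sigma_x,\sigma_y,\sigma_z\}$ is Hermitian and involutory, it has an orthonormal eigenbasis $\{\ket{s_+},\ket{s_-}\}$ consisting of stabilizer states, with eigenvalues $\lambda_\pm\in\{\pm1\}$. Then
\[
P_1=\lambda_+\dyad{s_+}+\lambda_-\dyad{s_-}.
\]
If $P_1=I$, any orthonormal basis works; we take $\ket{0},\ket{1}$, so that $\lambda_\pm=1$. If $P_1$ is one of $\sigma_x,\sigma_y,\sigma_z$, we take its two Pauli eigenstates, so that $\lambda_\pm=\pm1$.

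Next, resolve the identity on the first qubit as $I_1=\dyad{s_+}+\dyad{s_-}$. Together with $P=P_1\otimes P_{[2,\qubits]}$ this gives
\[
I+cP=\sum_{\pm}\dyad{s_\pm}\otimes I_{[2,\qubits]}+c\sum_{\pm}\lambda_\pm\dyad{s_\pm}\otimes P_{[2,\qubits]}=\sum_{\pm}\dyad{s_\pm}\otimes\bigl(I+c\lambda_\pm P_{[2,\qubits]}\bigr).
\]
Setting $c_\pm=c\lambda_\pm$ gives the claimed form, and $\abs{c_\pm}=c$ because $\abs{\lambda_\pm}=1$.

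The argument has no real obstacle. The only point needing care is the case $P_1=I$, where the eigenstates are not unique. The statement only asks for orthogonal eigenstates, so choosing the computational basis settles it and keeps everything within stabilizer product states, as the later separability argument requires.
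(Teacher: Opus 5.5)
Your proof is correct and is essentially the paper's argument. The paper treats $P_1=I$ separately and, for nonidentity $P_1$, uses the projectors $\tfrac{1}{2}(I\pm P_1)=|s_\pm\rangle\langle s_\pm|$, which is exactly your spectral decomposition with the resolution of the identity; your version just handles both cases at once through the eigenvalues $\lambda_\pm$.
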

\begin{proof}
   If $P_1=I$ the statement is trivially true taking $\ket{s_\pm}$ to be any pair of orthogonal stabilizer qubit states, and $c_\pm=c$. If $P_1$ is a nonidentity Pauli, then observe that
     \begin{equation}
        (I+cP)=\frac{1}{2}(I+P_1)\otimes (I+c P_{[2,\qubits]}) +\frac{1}{2}(I-P_1)\otimes (I-c P_{[2,\qubits]}).
    \end{equation}
    Further, $\frac{1}{2}(I\pm P_1)=\dyad{s_\pm}$ where $\ket{s_\pm}$ is the $\pm$ eigenstate of $P_1$.
\end{proof}

\begin{lemma}
\label{lemm:symmetrizedproductPAPpisseparable}
Let $P,P'$ and $A$ be Pauli strings with support contained on $[\ell]$, $[\ell']$ and $[a]$, respectively. Assume that $z,z'$ are complex numbers with $|z| \leq \delta^\ell$, $|z'| \leq \delta^{\ell'}$ and $c\in [-\delta^a,\delta^a]$, for $0<\delta\leq1/7$. Then 
\begin{equation}
   \tfrac{1}{2}[(I+zP)(I+cA)(I+z'P')^\dagger+(I+z'P')(I+cA)(I+zP)^\dagger] =\sum_{j=1}^N w_j \dyad{s_j} \otimes (I+c_jB_j),
\end{equation}
where $N$ is some integer, $B_j$ is a Pauli string supported in $[2,b_j]$, weighted by $c_j\in[0,\delta^{b_j-1}]$, and the states $\ket{s_j}\in\{\ket{0},\ket{1},\ket{+},\ket{-},\ket{+i},\ket{-i}\}$ are single-qubit stabilizer states, with weights $w_j\geq 0$.
\end{lemma}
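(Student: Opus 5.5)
The plan is to expand the product completely, rewrite the symmetrized sum as the identity plus a few real multiples of Pauli strings, and then split off qubit $1$ using the previous lemma (Eq.~\eqref{lemm:plusminuspauli}).

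\textbf{Expansion.} Since $P'$ is a Hermitian Pauli string, $(I+z'P')^\dagger = I+\bar z' P'$. Expanding $T\coloneqq (I+zP)(I+cA)(I+\bar z'P')$ gives the identity plus seven terms:
\[
zP,\quad cA,\quad \bar z'P',\quad zc\,PA,\quad z\bar z'\,PP',\quad c\bar z'\,AP',\quad zc\bar z'\,PAP'.
\]
The second summand in the statement is exactly $T^\dagger$, because $(I+cA)$ is Hermitian and the order reverses. So the left-hand side equals $\tfrac12(T+T^\dagger)$.

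\textbf{Reduction to real Pauli coefficients.} A product of Pauli strings is $i^k R$ for a Pauli string $R$. Its support lies in the union of the supports of the factors, hence in $[\max]$ of the relevant indices among $\ell,\ell',a$. For each term $\alpha\, i^k R$ the Hermitian part is $r R$ with $r\in\mathbb{R}$ and $|r|\le|\alpha|$. The coefficient $|\alpha|$ is a product of the weights of the factors involved. For example, $|zc|\le\delta^{\ell+a}\le\delta^{\max(\ell,a)}$. Hence every resulting Pauli string $R_k$ with support in $[b_k]$ comes with $|r_k|\le\delta^{b_k}$.

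This gives
\[
\tfrac12(T+T^\dagger)=I+\sum_{k=1}^{7} r_kR_k=\frac17\sum_{k=1}^7\left(I+7r_kR_k\right),
\]
with $|7r_k|\le 7\delta^{b_k}\le\delta^{b_k-1}\le 1$, using $\delta\le 1/7$. If some $R_k$ happens to equal $I$, as can occur for $PP'$ when $P=P'$, the corresponding summand $(1+7r_k)I$ has a nonnegative scalar coefficient. It is then handled like the trivial case below.

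\textbf{Splitting off qubit $1$.} I apply the previous lemma to each $I+7r_kR_k$. That lemma is stated for $c\in[0,1]$, but its proof only uses $|c|\le1$. This yields
\[
I+7r_kR_k=\sum_\pm \dyad{s_\pm}\otimes\bigl(I+c_\pm (R_k)_{[2,\qubits]}\bigr),\qquad |c_\pm|=7|r_k|\le\delta^{b_k-1},
\]
where $\ket{s_\pm}$ are single-qubit stabilizer states and $(R_k)_{[2,\qubits]}$ is supported in $[2,b_k]$. Collecting all these terms with weights $w_j=1/7$ gives the claimed form.

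\textbf{Degenerate cases.} If $b_k=1$, the remainder on $[2,\qubits]$ is trivial. The factor is then the scalar $(1\pm 7r_k)\ge 0$, which I absorb into $w_j$ with $B_j=I$, $c_j=0$. For identity terms, I insert $I=\sum_\pm\dyad{s_\pm}$ on qubit $1$.

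\textbf{Main obstacle.} The one delicate point is the sign convention $c_j\in[0,\delta^{b_j-1}]$. When $c_\pm<0$, I write $I+c_\pm B=I+|c_\pm|(-B)$, so $B_j$ becomes a signed Pauli string. Every later use only needs $B_j$ to be $\pm$ a Pauli string, since signed Pauli strings are still diagonal in stabilizer product bases. Otherwise the work is bookkeeping: checking that each of the seven supports is bounded by the maximal index, and that the factor $7$ is exactly absorbed by $\delta\le 1/7$ to produce $\delta^{b_j-1}$.
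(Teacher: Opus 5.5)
Your proposal is correct and is essentially the paper's proof: you expand $\tfrac12(T+T^\dagger)$ into $I$ plus seven real multiples of Pauli strings, write it as $\tfrac17\sum_k (I+7r_kR_k)$, use $7\delta^{b_k}\le\delta^{b_k-1}$, and then pin qubit $1$ with the $\pm$-eigenstate splitting of the preceding lemma, a step the paper leaves implicit. Your remark on signs is apt: the paper's argument likewise only yields $|c_j|\le\delta^{b_j-1}$, since $c_j\ge 0$ with an unsigned Pauli string $B_j$ is not achievable in general, and $|c_j|\le\delta^{b_j-1}$ is all that \cref{lem:separabilityconjbyquasilocalperturbations} needs because it accepts $c\in[-\delta^a,\delta^a]$.
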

\begin{proof}
    
Using the general expansion (for $B$ Hermitian)
    \begin{align}
        (I+Y)(I+B)&(I+Y')^\dagger+(I+Y')(I+B)(I+Y)^\dagger =2(I+B) + (Y^\dagger +Y)+ (Y'^\dagger +Y')\\&+ (Y B +B Y^\dagger)+(Y' B +B Y'^\dagger) + (Y' Y^\dagger +Y Y'^\dagger)+ (Y'B Y^\dagger +YB Y'^\dagger),
    \end{align}
    we can write
        \begin{align}\frac{1}{2}\left [(I+zP)(I+cA)(I+z'P')^\dagger+(I+z'P')(I+cA)(I+zP)^\dagger\right ] =I +\sum_{j=1}^7 O_j,\label{eq:lemm_full_expression_IpOj}
        \end{align}
        where 
        \begin{align*}
            O_1&=cA,& O_2&=\mathrm{Re}(z)P,&  O_3&=\mathrm{Re}(z')P',\\
O_4&= c(z P A +z^*A P)/2,& O_5&=c(z'P' A +z'^*A P')/2,& O_6&=(z'^* z P' P +z'z^* P P')/2,\\  O_7&= c(z'^* z P'A P +z'z^*PA P')/2.
\end{align*}

Since Pauli strings always commute or anticommute, we have $O_i=x_i P_i$ for a Pauli string  $P_i$ and real $x_i$. For example, $x_1=c$ and $P_1=A$, while $x_7=c\mathrm{Re}(z'^* z)$ if $P'AP=PAP'=P_7$ and $x_7=c\mathrm{Im}(z'^* z)$ if $P'AP=-PAP'=-P_7$. Let $w_i=1/7$ for $i\in \{1,\dots,7\}$ and $Z_j=I+\tilde{c}_i P_i$, with $\tilde{c}_i=x_i/w_i$.
We have
\begin{equation}
    \eqref{eq:lemm_full_expression_IpOj}=\sum_{j=1}^7 w_i Z_i.
\end{equation}
We only need to show that  $|\tilde{c}_i|\leq \delta^{s_i-1}$ with $\supp(P_i)\subseteq [s_i]$. We analyze the cases $i=1$ and $i=7$ (the other cases are analogous). For $i=1$, note that $|\tilde{c}_1|=7c\leq 7 \delta^a\leq \delta^{a-1}$ (where we used $\delta\leq 1/7$), and $\supp(P_1)=\supp(A)\subseteq [a]$. For $i=7$, note that $|x_7|\leq c|z||z'|$, and hence $|\tilde{c}_7|\leq 7 c|z||z'|\leq \delta^{a+\ell+\ell'-1}\leq  \delta^{\max(a,\ell,\ell')-1}$, while $\supp(P_7)= \supp(P'AP)\subseteq \supp(P')\cup\supp(A)\cup\supp(P)\subseteq [\ell']\cup[a]\cup[\ell]=[\max(a,\ell,\ell')]$.
\end{proof}

\begin{lemma}
\label{lem:separabilityconjbyquasilocalperturbations}
Let  $A$ a Pauli string supported in $[a]$, $c\in [-\delta^a,\delta^a]$, with $\delta=1/7$ and $X$ a quasilocal perturbation of the identity with decay $\gamma\leq 1/56$. Then
\begin{equation}
    X (I+cA)X^\dagger =\sum_{j} w_j \dyad{s_j} \otimes (I+c_jB_j),
\end{equation}
with $B_j$, $c_j$, $w_j$, and $\ket{s_j}$ as in \cref{lemm:symmetrizedproductPAPpisseparable}.
\end{lemma}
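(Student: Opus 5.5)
The plan is to reduce the statement to \cref{lemm:symmetrizedproductPAPpisseparable} by rewriting $X$ as a probabilistic mixture of single-Pauli perturbations, exactly as in the proof of \cref{prop:quasi-local-perturbations}, and then symmetrizing.

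\emph{Step 1: random single-Pauli unraveling of $X$.} Write $X=I+\sum_{\ell\ge1}F_\ell$ with $\supp(F_\ell)\subseteq[\ell]$ and $\norm{F_\ell}\le\gamma^\ell$. Expand each $F_\ell$ in Pauli strings supported in $[\ell]$: $F_\ell=\sum_{P\in\calP_{[\ell]}}\alpha_P P$. The coefficients are complex, since $X$ need not be Hermitian, and satisfy $|\alpha_P|\le\norm{F_\ell}\le\gamma^\ell$. Assign to each pair $(\ell,P)$ the probability $q_{\ell,P}=2^{-\ell}4^{-\ell}$; these sum to one because $\sum_\ell 2^{-\ell}=1$ and $|\calP_{[\ell]}|=4^\ell$. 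Then
\[
X=\sum_{\ell,P} q_{\ell,P}\,(I+z_{\ell,P}P),\qquad z_{\ell,P}=\alpha_P/q_{\ell,P},
\]
so $|z_{\ell,P}|\le(8\gamma)^\ell\le 7^{-\ell}=\delta^\ell$, using $\gamma\le1/56$. This is the step where the constant $1/56=1/(8\cdot7)$ is consumed, and it matches the hypothesis $|z|\le\delta^\ell$ of \cref{lemm:symmetrizedproductPAPpisseparable}.

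\emph{Step 2: symmetrize the double sum.} Substituting the unraveling on both sides gives
\[
X(I+cA)X^\dagger=\sum_{(\ell,P),(\ell',P')} q_{\ell,P}\,q_{\ell',P'}\,(I+zP)(I+cA)(I+z'P')^\dagger .
\]
The left-hand side is Hermitian because $I+cA$ is, and the weights $q_{\ell,P}q_{\ell',P'}$ are symmetric under swapping the two indices. So the sum equals its average with its adjoint. Pairing each term with its swapped partner then turns the sum into a nonnegative combination of the symmetrized expressions
\[
\tfrac12\big[(I+zP)(I+cA)(I+z'P')^\dagger+(I+z'P')(I+cA)(I+zP)^\dagger\big].
\]
The diagonal terms, where $(\ell,P)=(\ell',P')$, are already of this form.

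\emph{Step 3: apply the earlier lemma and regroup.} \cref{lemm:symmetrizedproductPAPpisseparable} applies to each symmetrized expression, since $|z|\le\delta^\ell$, $|z'|\le\delta^{\ell'}$ and $c\in[-\delta^a,\delta^a]$. It writes each one as $\sum_j w_j\dyad{s_j}\otimes(I+c_jB_j)$, with $c_j\in[0,\delta^{b_j-1}]$ and $B_j$ supported in $[2,b_j]$. Collecting all these terms with nonnegative weights gives the claimed form.

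\emph{Main obstacle: convergence.} The sum over $(\ell,P)$ is countably infinite, and $\ell$ is effectively bounded by $\qubits$. I would handle this by noting that $\sum q\,q'\,w_j\le\sum q\,q'\cdot O(1)$, since by the lemma's construction the weights $w_j$ for each pair sum to a bounded constant. The combination therefore converges absolutely. One can also simply truncate: $F_\ell=0$ for $\ell>\qubits$, and the leftover probability mass can be placed on $z=0$.
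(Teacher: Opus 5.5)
Your proof is correct and is essentially the paper's argument. You unravel $X$ into single-Pauli perturbations $I+zP$ with $|z|\le(8\gamma)^\ell\le 7^{-\ell}$, use Hermiticity of $X(I+cA)X^\dagger$ to pass to the symmetrized terms, and apply \cref{lemm:symmetrizedproductPAPpisseparable} termwise. Your explicit weights $q_{\ell,P}=8^{-\ell}$ with $z_{\ell,P}=8^{\ell}\alpha_P$ are in fact the consistent form of the paper's bookkeeping, which writes weights $2^{-\ell}$ and $z=2^{\ell}\alpha_P$ in an intermediate line but reaches the same bound $|z|\le(8\gamma)^\ell$.
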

\begin{proof}
We expand $X=I +\sum_{\ell=1}^\qubits F_\ell$ with $\norm{F_\ell}\leq \gamma^\ell$ and $\supp(F_\ell)\subseteq [1,\ell]$.
    For each $\ell\in[\qubits]$ we define $Z_\ell=I+2^\ell F_\ell$ and an associated probability $p_\ell =1/2^\ell$, which forms an unbiased estimator for $X$. Further, we expand $F_\ell =\sum_P \alpha_P^{(\ell)} P$ (here $\alpha_P^{(\ell)}$ can be complex in general), but since $\norm{F_\ell}\leq \gamma^\ell$, we have $|\alpha_P^{(\ell)}|\leq (4\gamma)^\ell$. Then
    \begin{align}
    \label{eq:firstdecompXicAX}
        X(I+cA)X^\dagger &=\sum_{\ell,\ell'} p_\ell p_{\ell'} Z_\ell (I+cA)Z_{\ell'}^\dagger\\&=\sum_{\ell,\ell'} \frac{1}{2^{\ell+\ell'}} \sum_{P,P'} (I+ 2^{\ell}\alpha_P^{(\ell)}P) (I+cA)(I+ 2^{\ell'}\alpha_{P'}^{(\ell')}P')^\dagger\nonumber\\&=\sum_{\ell,\ell'} \frac{1}{2^{\ell+\ell'}} \sum_{P,P'} (I+ z_P^{(\ell)}P) (I+cA)(I+ z_{P'}^{(\ell')}P')^\dagger\nonumber
        \\&=\sum_{\ell,\ell',P,P'} \frac{1}{2\cdot8^{\ell+\ell'}} \Big[(I+ z_P^{(\ell)}P) (I+cA)(I+ z_{P'}^{(\ell')}P')^\dagger  \nonumber\\&\hspace{10em}+ (I+ z_{P'}^{(\ell')}{P'}) (I+cA)(I+ z_{P}^{(\ell)}P)^\dagger \Big], \nonumber
    \end{align}
    where $z_P^{(\ell)}=2^\ell\alpha_P^{(\ell)}$ satisfies $|z_P^{(\ell)}|\leq (8\gamma)^\ell \leq (\tfrac{1}{7})^\ell$, and we use the fact that the left-hand side is Hermitian in the last equality. We obtain the desired result by \cref{lemm:symmetrizedproductPAPpisseparable}.
\end{proof}

\begin{proof}[Proof of \cref{lem:decay-implies-separable}]
Simply iterate \cref{lem:separabilityconjbyquasilocalperturbations}. Start with $X=X_1$ and $c=0, A=I$, yielding that $X_1X_1^\dagger =\sum_j w_j \dyad{s_j} \otimes (I+c_jB_j).$ This {\it pins} the first site. Then
\begin{equation}
    X_2X_1X_1^\dagger X_2^\dagger =\sum_j w_j \dyad{s_j} \otimes X_2(I+c_jB_j)X_2^\dagger,
\end{equation}
since $X_2$ has support only after the second site. Now, since   $|c_j|\leq 1/7^{b}$, with $\supp(B_j)\subseteq [2,b+1]$ we can again apply \cref{lem:separabilityconjbyquasilocalperturbations} to each term $ X_2(I+c_jB_j)X_2^\dagger$, relabeling the $i$th qubit to $i-1$ and ignoring the first site. This {\it pins} the second site,
\begin{equation}
    X_2X_1X_1^\dagger X_2^\dagger =\sum_{j,j'} w_j w'_{j,j'} \dyad{s_j} \otimes \dyad{s'_{j,j'}}\otimes (I+c'_{j,j'}B'_{j,j'}),
\end{equation}
and we may keep recursing to obtain the required decomposition. Here it is worth noting that the primed variables depend both on $j$ and $j'$: this is capturing the potential classical correlations in the separable operator.
\end{proof}

%% file: combinatorics.tex
\section{Valid growth sets on the line}
\label{sec:combinatorics}

We introduce the notion of a {\it valid growth set} on the uniform hypergraph on a line. Non-asymptotic estimates of the number of such sets are crucial to our proof, as they will directly bound the norm of the higher order terms in the Araki expansional and under imaginary time evolution.

We consider an $n$-vertex $\locality$-uniform hypergraph on the line, meaning the vertex set is $[n]$ and the edges consist of \(\locality\) consecutive vertices, i.e., $\calE=\{ \{i,i+1,\dots,i+\locality\}:\ i\in [n-\locality+1]\}$. On such a hypergraph, we define \emph{valid growth sets}:
\begin{definition}[Valid growth set]
\label{def:growth-sets}
A \emph{valid growth set} of size $t\in \mathbb{Z}$ is as an ordered set of edges $e_1, \dots , e_t\in \calE$ with the following properties:
\begin{itemize}
\item \textbf{Starting vertex.} The hyperedge $e_1$ contains vertex $1$.
\item \textbf{Connected Component.} For all $s \leq t$, $e_s$ intersects the union of $e_1, \dots , e_{s-1}$
\end{itemize}   
We define $P_{t,\ell}^{\smash{(\locality)}}$ to be the number of distinct valid growth sets that consist of $t$ edges and such that their union is equal to $[\ell]$.   We sometimes omit the $\locality$ in the superscript and just write $P_{t,\ell}$ when it is clear from context.

\end{definition}

\DeclareRobustCommand{\stirling}{\genfrac\{\}{0pt}{}}

For $\locality=2$, the numbers $P_{t,\ell}^{\smash{(2)}}$ are simply the Stirling numbers of the second kind $\stirling{t}{\ell-1}$, which count the number of ways to partition a set of $t$ objects into $\ell-1$ nonempty sets. In our setting, the objects are the $t$ time-steps, and the $\ell-1$ nonempty sets correspond to the $\ell-1$ distinct graph edges that must be covered to form the connected component of length $\ell$. For higher $\locality>2$, there is no such combinatorial interpretation, but nevertheless numbers $P^{\smash{(\locality)}}_{t,\ell}$ can be bounded in terms of $P^{\smash{(2)}}_{t,\ell}$.
\begin{lemma}[Coarse graining in $\locality$]
\label{lemm:coarse-graining}
    For any $\locality\geq 2$ and $j\in \{0,1,\dots,\locality-2\}$,
    $$P_{t,(\locality-1)\ell-j}^{(\locality)}\leq (\locality-1)^{t-1} P_{t,\ell}^{(2)}.$$
\end{lemma}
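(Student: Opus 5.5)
The plan is to build an explicit many-to-one map, \emph{coarse graining}, from valid growth sets on the $\locality$-uniform line to valid growth sets on the $2$-uniform line. The map should preserve validity and send union $[(\locality-1)\ell-j]$ to union $[\ell]$, and each fiber should have size at most $(\locality-1)^{t-1}$. Counting fibers then gives the inequality directly.

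\emph{Conventions and blocks.} I take the edges of the $\locality$-uniform hypergraph to be the intervals $e_i=[i,i+\locality-1]$ of $\locality$ vertices, as in the technical overview. Edges of the $2$-uniform coarse line are $\{b,b+1\}$. Partition $[\qubits]$ into consecutive blocks $\mathcal{B}_b=[(\locality-1)(b-1)+1,(\locality-1)b]$, each of size $\locality-1$.

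\emph{Key geometric fact.} Every fine edge $e_i$ with $i\in\mathcal{B}_b$ meets exactly the two blocks $\mathcal{B}_b$ and $\mathcal{B}_{b+1}$. It starts in $\mathcal{B}_b$, has $\locality>\locality-1$ vertices so it cannot stay inside one block, and ends at $i+\locality-1\le(\locality-1)b+\locality-1$, which lies in $\mathcal{B}_{b+1}$. Define $\pi(e_i)=\{b,b+1\}$ with $b=\lceil i/(\locality-1)\rceil$. Each coarse edge then has exactly $\locality-1$ preimages, and the only preimage of the coarse edge $\{1,2\}$ that contains vertex $1$ is $e_1$.

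\emph{Steps.} Given a fine valid growth set $(e_1,\dots,e_t)$, I apply $\pi$ edgewise.
\begin{enumerate}
\item The starting condition is preserved: the first fine edge must be $e_1$, and $\pi(e_1)=\{1,2\}\ni 1$.
\item The connectivity condition is preserved: if $e_s$ meets $e_1\cup\dots\cup e_{s-1}$ at some vertex $v$, then the block containing $v$ lies in $\pi(e_s)$ and also in $\pi(e_r)$ for the $r<s$ with $v\in e_r$.
\item Unions are preserved. Since each fine edge meets exactly the blocks of its image, the coarse union is the set of blocks meeting the fine union. If the fine union is $[(\locality-1)\ell-j]$ with $0\le j\le\locality-2$, its last vertex lies in $\mathcal{B}_\ell$, so the coarse union is $[\ell]$.
\end{enumerate}
Hence $\pi$ maps the sets counted by $P^{(\locality)}_{t,(\locality-1)\ell-j}$ into those counted by $P^{(2)}_{t,\ell}$. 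In a fiber, the first edge is forced to be $e_1$, and each of the remaining $t-1$ edges has at most $\locality-1$ preimages. So every fiber has at most $(\locality-1)^{t-1}$ elements, which gives the claim.

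\emph{Main obstacle.} The only delicate point is the block-geometry claim that a length-$\locality$ interval meets exactly two consecutive blocks of size $\locality-1$, together with the resulting union identification. The edge-indexing convention in \cref{def:growth-sets} is slightly ambiguous, since $\{i,\dots,i+\locality\}$ has $\locality+1$ elements, so I would fix the convention $e_i=[i,i+\locality-1]$ before running the argument. The right boundary of the chain causes no trouble: the coarse line just has $\lceil \qubits/(\locality-1)\rceil$ vertices, and every image of a fine edge is a genuine coarse edge on it.
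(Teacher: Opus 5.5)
Your proof is correct and follows essentially the same coarse-graining argument as the paper: bin the chain into blocks of $\locality-1$ sites, send each fine edge to a nearest-neighbour edge of the coarse line, and bound each fiber by $(\locality-1)^{t-1}$ because the first edge is forced to be $e_1$. You also fix the edge convention $e_i=[i,i+\locality-1]$ and define $\pi(e_i)$ by the block containing the left endpoint, which makes precise a point the paper states loosely: its ``$\locality-1$ edges whose support intersects $B_s$'' should really be the edges starting in $B_s$.
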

\begin{proof}
    Given a line with $\qubits$ vertices, $\{1,2,\dots,\qubits\}$, and a $\locality$-uniform hypergraph connecting every set of $\locality$-consecutive vertices, we consider a {\it coarse graining} $\{B_1,B_2,\dots,B_{n/(\locality-1)}\}$, obtained by binning every $\locality-1$ sites, $B_s=\{(s-1)(\locality -1)+1,(s-1)(\locality -1)+2,\dots,s(\locality -1) \}$. We define a graph on this coarse-grained line where each edge connects only nearest neighbors $\{B_s,B_{s+1}\}$ and is identified with the set $\widetilde{e}$ containing the $\locality-1$ edges in the original hypergraph whose support intersects $B_s$.  One can then map a valid growth set $({e}_1,\dots,{e}_t)$ supported on $[(\locality-1)\ell-j]$ ($j\in\{0,\dots,\locality-2\}$) in the original hypergraph to a valid growth set $(\widetilde{e}_1,\dots, \widetilde{e}_t)$ on the coarse-grained chain with support $\{B_1,\dots B_\ell\}$. This mapping has degeneracy $(\locality-1)^{t-1}$, corresponding to the $(\locality-1)$ choices of $e_{s}\in \widetilde{e}_s$ for each $s\in \{2,3,\dots,t\}$ (not counting the first edge which is uniquely set by the valid-growth-set starting-vertex condition). Hence, we obtain the stated bound. 
\end{proof}
We show that if the uniform hypergraph is formed by taking adjacent vertices on the line, the number of distinct valid growth sets cannot grow too quickly. More generally, we provide a bound on the weighted sums of the number of valid growth sets.

\begin{theorem}[Bound on weighted partial sums of numbers of valid growth sets]
\label{thm:combinatorics}
Given a $\locality$-uniform hypergraph over the line, and any  $C\geq 1$, the weighted partial sum of all valid growth sets of size $t\geq C^{\locality-1}$ is bounded as follows:
\begin{equation}
\label{eq:boundoniteratedsupport} 
    \sum_{\ell=0}^\infty C^\ell \cdot P_{t,\ell} \leq t!\left(\frac{4 {(\locality-1)}}{\log{(t/C^{\locality-1} +\frac{1}{2})}}\right)^{t}.
\end{equation} 

\end{theorem}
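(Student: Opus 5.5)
The plan is to reduce to the case $\locality=2$ via \cref{lemm:coarse-graining}, and then use known bounds on weighted sums of Stirling numbers of the second kind.

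\textbf{Step 1: coarse graining.} Write each support length as $\ell=(\locality-1)\ell'-j$ with $j\in\{0,\dots,\locality-2\}$. \cref{lemm:coarse-graining} gives
\[
\sum_{\ell}C^{\ell}P^{(\locality)}_{t,\ell}\le(\locality-1)^{t-1}\sum_{\ell'}\sum_{j=0}^{\locality-2}C^{(\locality-1)\ell'-j}P^{(2)}_{t,\ell'}\le(\locality-1)^{t}\sum_{\ell'}D^{\ell'}\stirling{t}{\ell'-1},
\]
where $D\coloneqq C^{\locality-1}\ge1$. The sum over $j$ has at most $\locality-1$ terms, each at most $C^{(\locality-1)\ell'}$ since $C\ge1$; this absorbs the missing factor of $\locality-1$. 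The $(\locality-1)^t$ then accounts for the $(\locality-1)^t$ in the numerator of \cref{eq:boundoniteratedsupport}. It therefore suffices to prove
\[
\sum_k D^{k+1}\stirling{t}{k}\le t!\,\bigl(4/\log(t/D+\tfrac12)\bigr)^t\qquad\text{for } t\ge D.
\]
The extra factor $D$ from the index shift $k=\ell'-1$ will have to be absorbed by the slack in the constant $4$.

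\textbf{Step 2: the Touchard polynomial.} The remaining quantity is $D\cdot T_t(D)$, where $T_t(x)=\sum_k\stirling{t}{k}x^k$ is the Touchard (Bell) polynomial. Dobinski's formula gives $T_t(x)=e^{-x}\sum_{j\ge0}j^t x^j/j!$. One could quote the bound of \cite{Acu2024} directly, but I would rederive it from the exponential generating function
\[
\sum_t T_t(x)\frac{u^t}{t!}=\exp\bigl(x(e^u-1)\bigr).
\]
Since all coefficients are nonnegative, for every $u>0$,
\[
T_t(x)\le t!\,u^{-t}\exp\bigl(x(e^u-1)\bigr).
\]

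\textbf{Step 3: choosing $u$.} Take $u=\log(t/x+\tfrac12)$, or a slightly smaller multiple of it. Then $x(e^u-1)=t-x/2\le t$, so the exponential factor is at most $e^t$. This gives
\[
T_t(x)\le t!\,\bigl(e/\log(t/x+\tfrac12)\bigr)^t .
\]
Multiplying by the extra $D\le t$ and using $e\cdot t^{1/t}\le 4$ for moderately large $t$ yields the constant $4$. The small cases, where $t$ is close to $D$ and $\log(t/D+\tfrac12)\ge\log 1.5$, need a direct check. There the right-hand side $t!\,(4/\log1.5)^t$ is enormous compared with the crude bound $D\cdot T_t(D)\le D\,(D+t)^t$, which follows from $T_t(D)\le\sum_k\binom{t}{k}\dots$ or from Dobinski's formula.

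\textbf{Main obstacle.} The main difficulty is bookkeeping the constants: the index shift $\ell'=k+1$, the ceiling in the coarse graining, and the extra factor $D$ must all fit under the constant $4$ uniformly in the regime $t\ge D$. If the $e$ versus $4$ margin turns out too thin, the fallback is to optimize $u$ more carefully, for instance $u=\log(t/x+\tfrac12)-\log\log$ corrections.
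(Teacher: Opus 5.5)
Your proof is correct. Its coarse-graining step is exactly the paper's: you bound the $\locality-1$ values of $j$ by $C^{(\locality-1)\ell'}$ and reach $(\locality-1)^t D\,T_t(D)$ with $D=C^{\locality-1}$.

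Where you differ is the estimate on the Touchard polynomial. The paper imports the Acu--Adell--Rasa bound, which effectively sits at the optimal saddle point $ue^u=t/C$ and gives the sharper denominator $\log(t/C)-\log\log(t/C)$. It then checks $t=C$ by hand and relaxes to the constant $4$ via $C\,t^te^{-t-C}\le t!$ and an auxiliary inequality stated only for $t/C\ge 2$. You instead take the one-line bound $T_t(x)\le t!\,u^{-t}\exp(x(e^u-1))$ from the exponential generating function, with the cruder choice $u=\log(t/x+\tfrac12)$.

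The paper's extra sharpness is discarded in the final constant anyway. Your route therefore gives a self-contained argument that is uniform over the whole range $t\ge D$. That includes $D<t<2D$, which the paper's auxiliary inequality does not cover as stated.

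The margin you worry about is not tight. Since $D(e^u-1)=t-D/2$, you actually have
\[
D\,T_t(D)\le De^{-D/2}\,t!\,(e/u)^t\le \tfrac{2}{e}\,t!\,(e/u)^t .
\]
Even if you discard $e^{-D/2}$, the inequality $e\,t^{1/t}\le e^{1+1/e}<4$ holds for every $t\ge1$. Together with $u\ge\log\tfrac32>0$ (from $t\ge D$), no small cases need checking and $u$ needs no refinement.
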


\begin{proof}

    We prove the case of $\locality=2$ first, which will imply the general case by Lemma~\ref{lemm:coarse-graining}. For $t=C$, the inequality becomes $C^2\leq (C^2)!(4/\log(3/2))^{C^2}$, which is seen to hold, so we assume $t>C$. Since  $P^{(2)}_{t,\ell}=\stirling{t}{\ell-1}$ are the Stirling numbers of the second kind, the desired weighted sum can be expressed in terms of the so-called Touchard polynomials, 
    defined by $T_t(x)\coloneqq \sum_{\ell}{\stirling{t}{\ell}} x^\ell$.
Specifically,
    \begin{equation*}
        \sum_{\ell=0}^\infty C^{\ell}P_{t,\ell}^{(2)}=CT_t(C).
\end{equation*}

Acu, Adell, and Rasa~\cite{Acu2024} prove the bound
\begin{equation*}
    T_t(C)\leq \frac{t^t}{e^{t+C}}\exp\Paren{ t\Paren{ {\frac{1}{\log(t/C)}-\log(\log(t/C)-\log\log(t/C))}}}
\end{equation*}
for any $t>C$.
We can simplify this expression to 
\begin{align*}
    \sum_{\ell=0}^\infty C^{\ell}P^{(2)}_{t,\ell}&\leq C\frac{t^t}{e^{t+C}}\exp\Paren{ t\Paren{ {\frac{1}{\log(t/C)}-\log(\log(t/C)-\log\log(t/C))}}}
    \\ & \leq  t!\exp\Paren{t\Paren{ {\frac{1}{\log(t/C)}-\log(\log(t/C)-\log\log(t/C))}}}
    \\ & \leq  t!\exp\Paren{t\log\Paren{ {\frac{4}{\log(t/C+1/2)}}}}
    \\&= t! \Paren{\frac{4}{\log(t/C+1/2)}}^t,
\end{align*}
where we used that $1/\log(x) - \log[\log(x) - \log(\log(x))] \leq \log(4/\log(x+1/2))$ for $x\geq 2$. This completes the proof for $\locality=2$.

Now for the case of $\locality\geq 3$, we simply apply Lemma~\ref{lemm:coarse-graining}. For $t\geq C^{\locality-1}$,
\begin{align*}
    \sum_{\ell=0}^\infty C^\ell \cdot P^{(\locality)}_{t,\ell}&=\sum_{\ell=0}^\infty \sum_{j=0}^{\locality-2} C^{(\locality-1)\ell-j}\cdot P^{(\locality)}_{t,(\locality-1)\ell-j}\\&\leq \sum_{\ell=0}^\infty C^{(\locality-1)\ell}\sum_{j=0}^{\locality-2} P^{(\locality)}_{t,(\locality-1)\ell-j}\\&\leq (\locality-1)^t  \sum_{\ell=0}^\infty (C^{\locality-1})^{\ell} \cdot P^{(2)}_{t,\ell} \leq t!\left(\frac{4 {(\locality-1)}}{\log{(t/C^{\locality-1}+\frac{1}{2})}}\right)^{t}.\qedhere
\end{align*}
\end{proof}

%% file: entanglement-bulk-decomp.tex
\section{Entanglement bulk decomposition}

\label{sec:fullproofofmaintheorem}
In this section, we prove the entanglement bulk decomposition, which is the main technical tool behind our main result.
\begin{theorem}[Entanglement bulk decomposition]
\label{th:entanglement-bulk-decomposition}
 For any geometrically $\locality$-local Hamiltonian on a spin chain of length $\qubits$ and any $0< \gamma <1$, 
the unnormalized Gibbs state at any inverse temperature $0 \leq \beta<\infty$ can be decomposed as
\begin{equation}
  e^{-\beta {H}}=M  \cdot e^{-\beta (H-H_1)} \cdot X, \label{eq:gibbsstatedecomposition2}
\end{equation}
where  $M$ is supported on the first $m = \exp(\max\{\beta,1\} c_\gamma)$ sites, with  $c_\gamma=\exp(\locality\log(20/\gamma))$, and $X$ is a quasilocal perturbation of the identity with decay $\gamma$.
\end{theorem}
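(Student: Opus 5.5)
We follow the route of the technical overview, working throughout with the Araki expansional $\mathcal{A}\coloneqq e^{-\beta H}e^{\beta(H-H_1)}$. We write $H_1$ for the sum of all terms acting on qubit $1$, so $H_1$ is supported in $[\locality]$.

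\textbf{Step 1: expand $\mathcal{A}$ into growth-set terms.} Differentiating in $\beta$ gives $\frac{d}{d\beta}\mathcal{A}=-e^{-\beta H}H_1e^{\beta(H-H_1)}$. Equivalently, the coefficients obey the recurrence \eqref{intro:recurrence}, $f_t=-[H,f_{t-1}]-f_{t-1}H_1$ with $f_0=I$, so that $\mathcal{A}=\sum_t\frac{\beta^t}{t!}f_t$. Unrolling the recurrence, $f_t$ is a sum over ordered choices of local terms. The first step must use a term of $H_1$, i.e.\ an edge containing vertex $1$. Each later step either commutes with an edge $h_e$ meeting the current support (other commutators vanish) or right-multiplies by a term of $H_1$, which lies inside the support.

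Grouping the sequences by final support $[\ell]$ yields $f_t=\sum_\ell E_{t,\ell}$, with $E_{t,\ell}$ supported in $[\ell]$. Each elementary product has norm at most $1$, and a commutator carries a factor $2$. Hence $\norm{E_{t,\ell}}\le 2^tP_{t,\ell}$ up to an $\locality^t$-type factor from the choice among terms of $H_1$. I would absorb that factor into the constants, since $c_\gamma=\exp(\locality\log(20/\gamma))$ leaves plenty of room.

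Theorem~\ref{thm:combinatorics} with $C=1$ then gives $\frac{\beta^t}{t!}\sum_\ell\norm{E_{t,\ell}}\le(8\beta\locality/\log t)^t$. These terms decay geometrically once $\log t\gtrsim 16\beta\locality$. Summing the growth phase crudely gives $\norm{\mathcal{A}}\le\Lambda=\exp\exp(O(\beta\locality))$.

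\textbf{Step 2: split off $M$ and bound its inverse.} Set $M=\sum_t\frac{\beta^t}{t!}\sum_{\ell\le m}E_{t,\ell}$, which is supported in $[m]$, and $R=\mathcal{A}-M$. Then define $X=e^{\beta(H-H_1)}(I+M^{-1}R)e^{-\beta(H-H_1)}$, so the identity \eqref{eq:gibbsstatedecomposition2} is automatic. We need $M$ invertible with $\norm{M^{-1}}\le\Lambda'$.

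For this, use that $\mathcal{A}^{-1}=e^{-\beta(H-H_1)}e^{\beta H}$ is again an expansional (with $\beta\to-\beta$), with the same growth-set bounds. Hence $\norm{\mathcal{A}^{-1}}\le\Lambda$. Next, $\norm{R}\le\sum_{\ell>m}\sum_t\cdots$ is tiny. A support-$\ell$ term needs $t\ge(\ell-1)/(\locality-1)$, and $m=\exp(\tilde\beta c_\gamma)$ lies deep in the geometric regime, so $\norm{R}\le\gamma^m\ll\Lambda^{-1}$. A Neumann series then gives $M^{-1}=\mathcal{A}^{-1}(I-R\mathcal{A}^{-1})^{-1}$ with $\norm{M^{-1}}\le2\Lambda$.

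We then expand $E\coloneqq M^{-1}R=\sum_{t,\ell>m}\frac{\beta^t}{t!}M^{-1}E_{t,\ell}$. Since $M$ lives on $[m]\subset[\ell]$, each summand $E'_{t,\ell}$ is supported in $[\ell]$ with $\norm{E'_{t,\ell}}\le2\Lambda\cdot2^tP_{t,\ell}$.

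\textbf{Step 3: imaginary-time conjugation (the main obstacle).} Apply the Hadamard formula to each $E'_{t,\lambda}$ to get $\sum_\tau\frac{\beta^\tau}{\tau!}[H-H_1,E'_{t,\lambda}]_\tau$. A nested commutator of order $\tau$ whose support reaches exactly $[\ell]$ is assigned to $F_\ell$, which gives the locality claim for free. For the norms, the key auxiliary lemma is \eqref{eqn:nested-commutator-bound-intro}. Expand $[H,Y]_\tau$ term by term, split according to whether each added edge stays inside the current support or escapes, and count the escaping histories by valid growth sets. This gives $\norm{[H,Y]_\tau}\le(2^\tau\sum_uP_{\tau,u}+(2\lambda)^\tau)\norm{Y}$, and in fact a refined version tracking the final support.

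The danger is the $(2\lambda)^\tau$ term, because $\lambda\ge m$ is huge. To tame it, I would not conjugate term-by-term naively. Instead, use that reaching support $\ell$ from $\lambda$ forces $\tau\ge(\ell-\lambda)/(\locality-1)$ escapes, and the nonescaping factors are paired against the rapid decay of $P_{t,\lambda}$ in $t$. Concretely, apply Theorem~\ref{thm:combinatorics} with $C=C(\gamma)\approx20/\gamma$ to bound $\sum_\lambda C^\lambda P_{t,\lambda}$. This trades a factor $\gamma^{-\lambda}$ of room against $(2\lambda)^\tau/\tau!\le e^{2\lambda}$ and against $\Lambda$. The requirement $t\ge C^{\locality-1}$ is why $m\gtrsim\exp(\tilde\beta\, C^{\locality})$, matching $c_\gamma$.

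The remaining work is a careful double sum over $\tau,t$ in \eqref{eqn:norm-split-intro}, showing it is at most $\gamma^\ell$. I expect the calibration of $C$ and $m$ to absorb $\Lambda$, the factor $e^{O(\beta)}$ from $\sum_\tau\beta^\tau/\tau!$, and the $\lambda^\tau$ growth to be the delicate step.
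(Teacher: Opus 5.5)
Your plan follows the paper's architecture: the growth-set expansion of the Araki expansional $\mathcal{A}$, truncation to $M$ on $[m]$, $X=e^{\beta(H-H_1)}(I+M^{-1}R)e^{-\beta(H-H_1)}$, then the Hadamard formula combined with \cref{lem:nested-commutator-locality-and-norm} and the weighted bound of \cref{thm:combinatorics}. Your Step 2 is more careful than \cref{lem:truncating-dyson}. Writing $M^{-1}=\mathcal{A}^{-1}(I-R\mathcal{A}^{-1})^{-1}$ with $\norm{R}\,\norm{\mathcal{A}^{-1}}\le 1/2$ genuinely gives invertibility and $\norm{M^{-1}}\le 2\Lambda$. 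The paper instead just asserts that $M^{-1}$ is a truncation of the inverse expansional. (In Step 1 no $\locality^t$ factor arises: only the single term on $[1,\locality]$ contains site $1$.)

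The gap is Step 3. It carries the whole content of the theorem, you leave it as remaining work, and the one concrete choice you make there fails.

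\emph{The missing $\beta$.} The non-escaping part of the Hadamard series is $\sum_\tau\frac{\beta^\tau}{\tau!}(2\lambda)^\tau=e^{2\beta\lambda}$. You dropped the $\beta^\tau$ when writing $(2\lambda)^\tau/\tau!\le e^{2\lambda}$, and then treated the $\beta$-dependence as a $\lambda$-independent factor $e^{O(\beta)}$. A $\beta$-independent weight $C\approx 20/\gamma$ cannot absorb $e^{2\beta\lambda}\gamma^{-\lambda}$ once $e^{2\beta}>20$. Absorbing it instead via $\lambda\le\locality t$ turns the base into $8\beta\locality e^{2\beta\locality}/\log t$. That forces $\log m$ to be at least of order $e^{2\beta\locality}$, i.e.\ a triply exponential bond dimension.

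\emph{The missing idea.} Let $C$ grow exponentially in $\beta$. The paper writes $\frac{(2\beta\lambda)^\tau}{\tau!}=(\gamma/4)^{\locality\tau}\frac{(2\beta\lambda(4/\gamma)^{\locality})^\tau}{\tau!}\le(\gamma/4)^{\locality\tau}C^{\lambda}$ with $C=\exp(2\beta(4/\gamma)^{\locality})$. It then uses that in \cref{thm:combinatorics} the weight $C^\lambda$ costs only an additive shift $\log t\mapsto\log t-(\locality-1)\log C=\log t-O(\beta\locality(4/\gamma)^{\locality})$ in the denominator. That shift, together with the condition $t\ge C^{\locality-1}$, is what makes $\log m$ proportional to $\beta$. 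It is affordable precisely because $\log m=\beta(20/\gamma)^{\locality}$. With your $C$, the condition $t\ge C^{\locality-1}$ only asks for $m$ of order $\locality(20/\gamma)^{\locality-1}$, so it cannot explain the $\beta$ in $\log m$.

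\emph{Assigning terms to shells.} The paper assigns the $(\tau,\lambda)$ term to the shell $\ell=\lambda+\tau\locality$, so it only needs total counts of histories. Your assignment by exact final support needs a weighted count of histories started from $[\lambda]$ rather than from vertex $1$, which \cref{thm:combinatorics} does not supply directly. Relatedly, your per-edge split does not yield the quoted sum $(2\lambda)^\tau+2^\tau\sum_u P_{\tau,u}$. Internal and escaping steps interleave, giving a binomial convolution $\sum_j\binom{\tau}{j}(2\lambda)^{\tau-j}2^jQ_j$, with $Q_j$ counting $j$-step growth histories beyond $\lambda$. After weighting by $\beta^\tau/\tau!$ this factorizes into $e^{2\beta\lambda}$ times a growth-set series, which is harmless once $C$ is chosen as above.
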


We can derive \cref{th:GibbsStatesareMixturesOfMPS} by iterating the entanglement bulk decomposition of \cref{th:entanglement-bulk-decomposition} with $\beta\to\beta/2$ a total of $\tilde{\qubits}=\qubits -m+1$ times, yielding
\begin{equation*}
    e^{-\beta H} =   \Paren{ M_1 \cdot M_2 \cdots M_{\tilde{\qubits}} } \cdot \Paren{ X_{\tilde{\qubits}} \cdot X_{\tilde{\qubits}-1} \cdots X_1 }    \cdot \Paren{ X_{\tilde{\qubits}} \cdot X_{\tilde{\qubits}-1} \cdots X_1 }^{\dagger} \Paren{ M_1 \cdot M_2 \cdots M_{\tilde{\qubits}} }^{\dagger}.
\end{equation*}
Picking the decay $\gamma=1/56$ [so $c_\gamma=\exp(\log(20\cdot56)\locality)\leq \exp(8\locality)$], each $X_i$ satisfies the requirements of \cref{lem:decay-implies-separable},  and therefore the operator $\sigma=\Paren{ X_{\tilde{\qubits}} \cdot X_{\tilde{\qubits}-1} \cdots X_1 }    \cdot \Paren{ X_{\tilde{\qubits}} \cdot X_{\tilde{\qubits}-1} \cdots X_1 }^{\dagger}$ can be written as a non-negative linear combination of stabilizer product states, and in particular is separable.

To prove the entanglement bulk decomposition, we begin by showing that the Araki expansional can be decomposed into operators with bounded support whose locality can be related to the number of valid growth sets introduced in \cref{sec:combinatorics}. We will henceforth assume that $\beta\geq 1$ (otherwise all inequalities can be seen to hold replacing $\beta\in [0,1)$ with $1$).

\begin{lemma}[Locality of the Araki expansional]
\label{lem:dyson-series-1d}
Given $\beta \geq 1$ and a $\locality$-local Hamiltonian $H$, let $H_1$ be the terms that contain the first site. Then, the Araki expansional admits the following decomposition:
\begin{equation*}
    e^{-\beta H} \cdot e^{\beta\paren{H - H_1} } = I + \sum_{t = 1}^{\infty} { \frac{\beta^t }{t!} }  \; \sum_{\ell=0}^{\infty}  E_{t,\ell} \, 
\end{equation*}
where each $E_{t,\ell}$ is supported on $[\ell]$ and $\Norm{E_{t,\ell}} \leq 2^{t}\cdot P^{\locality}_{t,\ell}$, where $P^{\locality}_{t,\ell}$ is the number of distinct valid growth sets with $t$ edges, and support $[\ell]$.
\end{lemma}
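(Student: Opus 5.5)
The plan is to derive a recurrence for the Taylor coefficients of the Araki expansional and unroll it into a sum over words of Hamiltonian terms. Set $\Phi(\beta)=e^{-\beta H}e^{\beta(H-H_1)}$, where $H_1$ denotes the sum of the terms containing site $1$. Differentiating gives
\begin{equation*}
\Phi'(\beta)=-He^{-\beta H}e^{\beta(H-H_1)}+e^{-\beta H}(H-H_1)e^{\beta(H-H_1)}=-[H,\Phi(\beta)]-\Phi(\beta)H_1 .
\end{equation*}
With $\Phi(0)=I$ and $\Phi=\sum_t \frac{\beta^t}{t!}f_t$, this yields $f_0=I$ and
\begin{equation*}
f_t=-[H,f_{t-1}]-f_{t-1}H_1 .
\end{equation*}
This is the recurrence quoted in the overview. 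Convergence of the power series is automatic for finite $n$, since $\Phi$ is entire in $\beta$.

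Next I would unroll the recurrence. Write $H=\sum_j h_j$ with $h_j$ supported on the edge $e_j=[j,j+\locality-1]$, and write $H_1=\sum_{j\in J_1}h_j$, where $J_1$ indexes the edges containing site $1$. (For $\locality$-local terms on a line starting at site 1 this is only $h_1$, but I will allow a general $J_1$.) Each step of the recurrence picks one edge $e$ and one of two operations: $Y\mapsto -[h_e,Y]$, or $Y\mapsto -Yh_e$ with $e\in J_1$. Hence $f_t$ is a sum over sequences $(e_1,\dots,e_t)$ and operation labels.

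I would then track supports inductively. Suppose the current operator $Y$ is supported on the union $S$ of the edges chosen so far. A commutator $[h_e,Y]$ vanishes unless $e\cap S\neq\emptyset$ (or $S=\emptyset$, in which case $Y=I$ and the commutator is zero). The right-multiplication option only uses edges containing site $1$, and these intersect $S$ once $S$ is nonempty. At step $1$ the commutator with $I$ vanishes, so $e_1\in J_1$ and $e_1$ contains vertex $1$. Therefore every nonzero term is indexed by a valid growth set in the sense of \cref{def:growth-sets}, and the operator attached to it is supported on the union of its edges, which is an interval $[\ell]$ because it contains $1$ and is connected.

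Defining $E_{t,\ell}$ as the sum of all terms whose edge union equals $[\ell]$ gives the support claim. For the norm, each step multiplies the norm by at most $2\|h_e\|\le 2$ for the commutator or $\le 1$ for the right multiplication. I would combine the two options into a single factor $2$ per edge. Concretely, for $e\in J_1$ the combined map is $Y\mapsto -h_eY+Yh_e-Yh_e=-h_eY$, which has norm at most $1\le 2$. For $e\notin J_1$ only the commutator appears, with factor at most $2$. Thus each valid growth set contributes a term of norm at most $2^t$, and the triangle inequality gives $\|E_{t,\ell}\|\le 2^tP_{t,\ell}$.

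The main obstacle is the bookkeeping of combining the two operations for the same edge. Without this step the naive count would give $3^t$ rather than $2^t$. The identity $-[h_e,Y]-Yh_e=-h_eY$ resolves it cleanly. The other delicate point is the $t=0$ term, which is exactly $I$ and is pulled out of the sum in the statement.
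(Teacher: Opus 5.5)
Your proof is correct. It shares the paper's skeleton: the recurrence $f_t=-[H,f_{t-1}]-f_{t-1}H_1$, which the paper reads off from the product of Taylor series rather than by differentiating, is unrolled into edge sequences that are valid growth sets. The accounting step, however, is genuinely different.

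The paper keeps the two operations separate. It takes $r$ commutator steps whose edges form a growth set and $\binom{t}{t-r}$ placements of the right multiplications by $-H_1$. It bounds the number of terms by $\sum_r\binom{t}{t-r}P_{r,\ell}\le 2^tP_{t,\ell}$, implicitly using that $P_{r,\ell}$ is nondecreasing in $r$, and then takes each term to have norm at most $1$. Your identity $-[h_e,Y]-Yh_e=-h_eY$ instead makes every nonzero term correspond to exactly one valid growth set of length $t$. The factor $2^t$ then comes from the commutator norms.

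Your route is what actually delivers the stated constant. The paper's separate-operations count does not track the factor $2$ carried by each commutator. It also treats right multiplication by $-H_1$ as support-preserving at the first step, where it is in fact what creates the initial support. So the commutator edges alone need not contain vertex $1$, and the $H_1$ edge must be prepended. Done carefully, that bookkeeping yields $\sum_{r}\binom{t-1}{r}2^rP_{r+1,\ell}\le 3^{t-1}P_{t,\ell}$, which is the $3^t$-type loss you anticipated. Your route gives $2^tP_{t,\ell}$, and even $2^{t-1}P_{t,\ell}$, since the first map is always the norm-one $Y\mapsto -h_{e_1}Y$.
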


\begin{proof}
Taylor expanding each of the exponentials, we have
\begin{align}
\label{eqn:taylor-expand-dyson}
    e^{-\beta H}e^{\beta (H - H_1)} & = \Paren{ \sum_{ s = 0}^{\infty} \frac{\beta^s}{s!} (-H)^s } \Paren{ \sum_{\tau= 0}^{\infty} \frac{\beta^\tau}{\tau! } \Paren{H - H^{1}}^{\tau} } \\
    & = \sum_{t = 0}^{\infty} \frac{\beta^t}{t!}  \underbrace{ \sum_{\tau = 0}^t  \binom{t}{\tau} \Paren{ -H}^{\tau}\Paren{H - H_1}^{t-\tau } }_{f_t\Paren{H, H_1} } \,. \nonumber
\end{align}
Next, we observe that  $f_0\Paren{H,H_1} = I$ and  $f_t\Paren{H, H_1}$ satisfies the following recurrence relation: 
\begin{align*}
    f_t\Paren{H, H_1} & = - H \cdot f_{t-1}\Paren{H, H_1} + f_{t-1}\Paren{H, H_1} \cdot \Paren{H- H_1} \\
    & = - [ H,  f_{t-1}\Paren{H, H_1} ] - f_{t-1}\Paren{H, H_1} H_1.
\end{align*}
Observe, unrolling the recurrence above, an equivalent way to generate the terms is to either apply the commutator with $H$, which picks up a local term in $H$ that overlaps with the current support (an edge in the hypergraph), or multiply on the right with $-H_1$ which does not change the support. Now, let the number of steps where we apply the commutator be $r$ and the number of times we left multiply by $H_1$ be $t-r$. The number of valid growth sets with $r$ edges and support $\ell$ is $P_{r, \ell}$ and there are $\binom{t}{t-r}$ choices for when to left multiply by $-H_1$. So the number of $(t,\ell)$ terms is $\sum_{r=0}^{t} \binom{t}{t-r} P_{r, \ell} \leq 2^t \cdot P_{t,\ell}$. Therefore, $f_t(H,H_1)$ admits the following decomposition:
\begin{equation*}
    f_t(H,H_1) = \sum_{ \ell = 0}^{\infty} c_{t,\ell} F_{t,\ell}\,,
\end{equation*}
where $\abs{c_{t,\ell}} \leq 2^t \cdot P_{t,\ell}$, $F_{t,\ell}$ has support that is $[\ell]$, and $\Norm{F_{t,\ell}} \leq 1$. Plugging this back into \cref{eqn:taylor-expand-dyson} completes the proof.
\end{proof}
Next, we prove two bounds that we will require in our analysis

\begin{lemma}(Bound on the sum of a quotient of logarithms)
\label{lem:boundonquotientoflogs}
Let $\Delta\in \mathbb{Z}^+$. Then
        $$\sum_{\ell =1 }^{\Delta-1} \Paren{\frac{ \log( \Delta -  \ell ) }{\log(\ell+1/2) } }^{\ell} \leq \exp\Paren{ \Delta }$$
\end{lemma}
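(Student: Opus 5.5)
We need to bound $S=\sum_{\ell=1}^{\Delta-1} a_\ell$ with $a_\ell=\bigl(\log(\Delta-\ell)/\log(\ell+1/2)\bigr)^{\ell}$. The plan is to bound each term crudely, depending on whether the denominator $\log(\ell+1/2)$ is small or of order one, and then sum. Since $\ell\le \Delta-1$, we have $\Delta-\ell\le \Delta$, so the numerator is at most $\log\Delta$. Each term therefore satisfies $a_\ell\le (\log\Delta/\log(\ell+1/2))^\ell$, and it suffices to bound this simpler sum by $e^\Delta$.

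\textbf{Step 1: $\ell=1$.} Here $\log(3/2)\approx 0.405$. The term is $a_1=\log(\Delta-1)/\log(3/2)\le 2.5\log\Delta$, which is negligible compared with $e^\Delta$. For $\ell\ge 2$ we have $\log(\ell+1/2)\ge \log(5/2)>0.9$, so $a_\ell\le (1.1\log\Delta)^\ell$.

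\textbf{Step 2: summing the bulk.} Bounding $(1.1\log\Delta)^\ell$ over all $\ell$ up to $\Delta$ gives about $(1.1\log\Delta)^{\Delta}=e^{\Delta\log(1.1\log\Delta)}$, which exceeds $e^\Delta$ for large $\Delta$. So this crude bound must be refined for large $\ell$, using that the denominator grows. The key observation is $\ell\log\log\Delta-\ell\log\log(\ell+1/2)$. Write $x=\ell/\Delta\in(0,1)$. Then
$$\log a_\ell\le \ell\,\log\frac{\log\Delta}{\log(\ell+1/2)}\le \ell\,\log\frac{\log \Delta}{\log \ell}.$$
For $\ell\ge\sqrt{\Delta}$ we have $\log\ell\ge \tfrac12\log\Delta$, hence $a_\ell\le 2^\ell$. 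For $\ell<\sqrt\Delta$ we have $a_\ell\le (2.5\log\Delta)^{\sqrt\Delta}=e^{\sqrt\Delta\log(2.5\log\Delta)}$. The total is then at most
$$\sqrt\Delta\,e^{\sqrt\Delta\log(2.5\log\Delta)}+2^{\Delta}.$$
Since $2^\Delta\le e^{\Delta}/2$ for $\Delta\ge 2$, it remains to show that the first term is at most $e^{\Delta}/2$. This holds comfortably for large $\Delta$ because $\sqrt\Delta\log\log\Delta\ll\Delta$.

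\textbf{Step 3: small $\Delta$.} The main obstacle is making the constants work uniformly, including for small $\Delta$, where the asymptotic comparison in Step 2 is not yet clean. I would handle it in two ways. First, I would use the sharper pointwise bound, which avoids the lossy numerator estimate $\log(\Delta-\ell)\le\log\Delta$: for $\ell\ge \Delta/2$ the numerator is at most $\log(\Delta/2)\le\log(\ell+1/2)$, so $a_\ell\le 1$. This halves the effective range and gives
$$S\le (\Delta/2)+\sum_{\ell<\Delta/2}(2.5\log\Delta)^{\ell}.$$
Second, for the remaining finitely many small $\Delta$ (say $\Delta\le 20$) I would check the inequality directly, since $e^\Delta$ is then already much larger than $\Delta\cdot (2.5\log \Delta)^{\sqrt\Delta}$. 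For larger $\Delta$, the split at $\sqrt\Delta$ combined with $\sqrt\Delta\log(2.5\log\Delta)\le \Delta-\log(2\sqrt\Delta)$ closes the argument.
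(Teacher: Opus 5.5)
Your proposal is correct and is essentially the paper's proof: split at $\ell\approx\sqrt{\Delta}$, bound the base by $2$ above the split (since $\log(\ell+1/2)>\tfrac12\log\Delta$) and by a constant multiple of $\log\Delta$ below it (you use $2.5$, the paper uses $3$), then compare $\sqrt{\Delta}\,(c\log\Delta)^{\sqrt{\Delta}}+2^{\Delta}$ with $e^{\Delta}$. Your extra treatment of small $\Delta$ (the bound $a_\ell\le 1$ for $\ell\ge\Delta/2$ plus a finite check) is worth keeping, because the paper's final inequality $\sqrt{\Delta}(3\log\Delta)^{\sqrt{\Delta}}+2^{\Delta}\le e^{\Delta}$ actually fails at $\Delta=2,3$ (the lemma itself holds trivially there); note also that your claim $2^\Delta\le e^\Delta/2$ needs $\Delta\ge 3$, not $\Delta\ge 2$.
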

    \begin{proof}
We split the sum at $\ell_* = \lfloor \sqrt{\Delta} \rfloor$.
For the terms where $\ell > \ell_*$, we have $\log(\ell+1/2) > \frac{1}{2}\log \Delta$. Since $\log(\Delta - \ell) < \log \Delta$, the base of the exponent is upper bounded by $2$. Thus, the tail of the sum is bounded by a geometric series:
\[
\sum_{\ell=\ell_*+1}^{\Delta-1} \left(\frac{ \log( \Delta - \ell ) }{\log(\ell+1/2) } \right)^{\ell} < \sum_{\ell=\ell_*+1}^{\Delta-1} 2^\ell < 2^\Delta.
\]
For the initial terms $1 \leq \ell \leq \ell_*$, the denominator is at least $\log(3/2)\geq 1/3$, so the base is upper bounded by $3 \log \Delta$. The sum of these initial terms is bounded by:
\[
\sum_{\ell=1}^{\ell_*} (3 \log \Delta)^\ell \leq k (3 \log \Delta)^{\ell_*} \leq \sqrt{\Delta} \exp\left( \sqrt{\Delta} \ln(3 \log \Delta) \right).
\]
Combining both parts, the total sum is bounded by $\sqrt{\Delta} (3 \log \Delta)^{\sqrt{\Delta}} + 2^\Delta\leq \exp(\Delta)$.
\end{proof}

\begin{lemma}[Supremum of a function]
\label{lem:suprema-of-concave-functions}
Let $\Delta \geq 3$ and consider 
\begin{equation*}
    t_* = \argmax_{t\geq 3}\; \Paren{ \frac{\Delta}{ \log(t)} }^t \,.
\end{equation*}
Then, $t_* \leq \exp\Paren{\Delta}$ and the maximum value is bounded as $\Paren{\frac{\Delta}{\log(t_*)}}^{t_*} \leq \exp\Paren{\exp\Paren{\Delta}}$. 
\end{lemma}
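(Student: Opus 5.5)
We study $g(t) = t\log\Delta - t\log\log t$ on $t\ge 3$; the target quantity is $\exp(g(t))$. The plan is to find its critical point, show $t_*$ lies there, and then read off both bounds from the first-order condition.

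\textbf{Step 1: locate the maximizer.} The derivative is
\begin{equation*}
g'(t) = \log\Delta - \log\log t - \frac{1}{\log t}.
\end{equation*}
This is strictly decreasing in $t$ wherever $\log t > 1$, i.e. for $t > e$, which covers our range $t\ge 3$. So $g$ is concave on $[3,\infty)$. The maximizer is therefore either the unique root of $g'$ or the endpoint $t=3$.

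Because $g'(t)\le \log\Delta - \log\log t$, every $t$ with $\log t\ge \Delta$ gives $g'(t)\le 0$. Indeed, there $\log\log t\ge \log\Delta$. Hence the root, if it exists in range, satisfies $\log t_* < \Delta$, i.e. $t_*\le e^{\Delta}$. If instead $g'(3)\le 0$, then $t_*=3\le e^{\Delta}$ trivially, since $\Delta\ge 3$.

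\textbf{Step 2: bound the maximum value.} At an interior critical point, $\log(\Delta/\log t_*) = 1/\log t_*$. This gives
\begin{equation*}
\Paren{\frac{\Delta}{\log t_*}}^{t_*} = \exp\Paren{\frac{t_*}{\log t_*}} \le \exp(t_*) \le \exp(\exp(\Delta)),
\end{equation*}
using $\log t_*\ge \log 3 > 1$ and Step 1.

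In the endpoint case $t_*=3$, the value is $(\Delta/\log 3)^3 \le \Delta^3 \le \exp(\exp(\Delta))$, which is immediate.

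A simpler crude route also works. For any $t\le e^{\Delta}$ we have $(\Delta/\log t)^t \le \Delta^t \le \Delta^{e^\Delta}$, but that is too weak. So I would use the first-order identity above; it is the one point needing care.

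\textbf{Main obstacle.} The hardest part is justifying that the supremum over the unbounded range $t\ge 3$ is actually attained. Concavity of $g$ on $[3,\infty)$ together with $g'<0$ for $t>e^\Delta$ handles this: it shows $g$ decreases thereafter, so the argmax exists and lies in $[3,e^\Delta]$.

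A secondary point is whether $t$ is treated as a real or an integer variable. If integer, the integer maximum is at most the real supremum, so both bounds transfer; for the location claim, the integer maximizer is adjacent to the real one. The adjacent integer could exceed $e^{\Delta}$ by at most 1. This is harmless if I instead use the stronger fact that $g'<0$ already for $\log t\ge \Delta$ strictly minus a margin; if needed, I would replace the location bound by $\lceil e^\Delta\rceil$, or note that $\log t_*<\Delta$ holds with slack because $1/\log t_*>0$.
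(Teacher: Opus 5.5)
Your proposal is correct and follows essentially the same route as the paper: both arguments differentiate $t\log\Delta - t\log\log t$, use the first-order condition $\log\Delta=\log\log t_*+1/\log t_*$ to get $\Delta/\log t_* = e^{1/\log t_*}\le e$ (equivalently your $\exp(t_*/\log t_*)\le \exp(t_*)$), and combine this with $t_*\le e^{\Delta}$. Your version is slightly more complete, since you justify the unimodality (via concavity of $g$ for $t>e$) and the location bound $t_*<e^{\Delta}$ (via $g'<0$ once $\log t\ge\Delta$), both of which the paper only asserts.
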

\begin{proof}
The function $f:[3,\infty)\to\mathbb{R}$ given by $f(x)=(\Delta/\log(x))^x$ is unimodal, meaning that it is increasing to a unique maximum and then afterwards it is strictly decreasing (it is decreasing for large enough $x$ such that $\Delta/\log(x)\leq 1$).  By differentiating, we see that its unique maximum $x_*$ satisfies $\log(\Delta) =\log\log(x_*)+\frac{1}{\log(x_*)}$. Then, we have $t_*\leq  x_*-1 \leq \exp(\Delta)$ and
\begin{equation*}
    \Paren{\frac{\Delta}{\log(t_*)}}^{t_*} \leq \Paren{\frac{\Delta}{\log(x_*)}}^{x_*}\leq \Paren{\frac{\Delta}{\log(x_*)}}^{\exp(\Delta)} \leq \exp(\exp(\Delta)),
\end{equation*}
where in the last inequality we used that 
 $\Delta/\log(x_*)=\exp(1/\log(x_*))\leq e$, which holds because $x_*> e$. 
\end{proof}

Now we break up the Araki expansional into the part that acts on the first $m$ sites and the part that escapes beyond the first $m$ sites. We show that the latter series has rapidly decaying coefficients, since the denominator $a!$  eventually suppresses the number of valid growth sets, $\sum_{b} P_{a,b}$, which we upper bounded in \cref{thm:combinatorics}.

\begin{lemma}[Tail bounds for the Araki expansional]
\label{lem:truncating-dyson}
Given $\beta \geq 1$, $m\in\mathbb{Z}^+$, and a $\locality$-local Hamiltonian $H$, let $H_1$ be the terms that contain the first site and let $\Lambda =  \exp\Paren{\exp\Paren{10\,\beta \locality} } $. Then, 
\begin{equation*}
    e^{-\beta H} \cdot e^{\beta\paren{H - H_1} } = M (I + E) \,, 
\end{equation*}
where $M$ is supported on sites $[m]$.
Further, $\norm{M},\norm{M^{-1}}    \leq \Lambda$. Additionally, $E$ admits the following shell decomposition: 
\begin{equation*}
    E = \sum_{t = m/\locality }^{\infty}  {\frac{\beta^t}{t!}} \sum_{\ell = m}^{\infty} E_{t,\ell}' \quad \textrm{ where } \Norm{E'_{t,\ell}} \leq \Lambda \cdot  2^t P_{t,\ell} \textrm{ and } \supp(E_{t,\ell}')\subseteq [\ell]\,.
\end{equation*}
\end{lemma}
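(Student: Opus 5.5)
Using \cref{lem:dyson-series-1d}, I would write the Araki expansional as $e^{-\beta H}e^{\beta(H-H_1)} = I+\sum_{t\ge1}\frac{\beta^t}{t!}\sum_\ell E_{t,\ell}$ with $\supp(E_{t,\ell})\subseteq[\ell]$ and $\norm{E_{t,\ell}}\le 2^tP_{t,\ell}$. I then split it by support. Set
\begin{equation*}
M \coloneqq I+\sum_{t\ge1}\frac{\beta^t}{t!}\sum_{\ell\le m}E_{t,\ell},
\qquad
R\coloneqq \sum_{t}\frac{\beta^t}{t!}\sum_{\ell>m}E_{t,\ell}.
\end{equation*}
Then $M$ is supported on $[m]$ and the expansional equals $M+R$. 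Since every edge adds at most $\locality-1$ new sites, $P_{t,\ell}=0$ unless $\ell\le \locality t$. Hence $R$ only contains terms with $t\ge m/\locality$. Defining $E\coloneqq M^{-1}R$ gives $M(I+E)$, and $E'_{t,\ell}\coloneqq M^{-1}E_{t,\ell}$ has support in $[m]\cup[\ell]=[\ell]$ for $\ell\ge m$, with $\norm{E'_{t,\ell}}\le \norm{M^{-1}}\,2^tP_{t,\ell}$. This yields the shell decomposition, provided $\norm{M^{-1}}\le\Lambda$.

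For the bound on $\norm{M}$, I would apply the triangle inequality together with \cref{thm:combinatorics} at $C=1$:
\begin{equation*}
\norm{M}\le \sum_t \frac{(2\beta)^t}{t!}\sum_\ell P_{t,\ell}\le \sum_t \Big(\frac{8\beta(\locality-1)}{\log(t+1/2)}\Big)^t .
\end{equation*}
Taking $\Delta=8\beta\locality$, \cref{lem:suprema-of-concave-functions} bounds each term by $\exp(\exp(\Delta))$. Terms with $t\le e^{2\Delta}$ then contribute at most $e^{2\Delta}\exp(e^{\Delta})$. For $t> e^{2\Delta}$ the base is at most $1/2$, so the tail is geometric and contributes at most $1$. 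Small $t$, where the log is tiny, I would handle by the crude bound $(2\beta\locality)^t/t!\cdot$const. Altogether this gives $\norm{M}\le \exp(\exp(10\beta\locality))$. The same estimate bounds $\norm{M+R}$.

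The main obstacle is $M^{-1}$: truncating a convergent series need not preserve invertibility. I would first try an exact identity. The inverse of the expansional is $e^{-\beta(H-H_1)}e^{\beta H}$, which expands analogously. I would then show that $M$ coincides with $\tr_{[m]^c}$-type data, specifically $M=\frac{1}{2^{n-m}}\tr_{[m]^c}(e^{-\beta H}e^{\beta(H-H_1)})\otimes I$. This should hold because every term $E_{t,\ell}$ with $\ell>m$ contains a full growth set reaching beyond $m$, so it is plausibly traceless on the outer sites. However, that tracelessness fails in general, so this route may not work. My fallback is to exploit freedom in the choice of splitting: any $M$ with $M^{-1}$ bounded suffices, and $E=M^{-1}R$ inherits the norm bounds. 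I would choose $M=I+\sum_{\ell\le m}(\cdot)$ as above and argue invertibility via the Neumann-type factorization
\begin{equation*}
M = (M+R)\big(I-(M+R)^{-1}R\big).
\end{equation*}
Here $(M+R)^{-1}=e^{-\beta(H-H_1)}e^{\beta H}$ has norm at most $\Lambda^{1/2}$ by the same growth-set bound applied to its analogous expansion. Also $\norm{R}$ is tiny, at most $\sum_{t\ge m/\locality}(8\beta\locality/\log t)^t\le 2^{-m/\locality}$, once $m\ge \exp(c\beta)$, which the later theorem's choice of $m$ guarantees. This gives $\norm{M^{-1}}\le 2\Lambda^{1/2}\le\Lambda$. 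If $m$ is allowed to be small, I would state the lemma's inverse bound only for $m\ge \locality e^{2\Delta}$, which suffices downstream.
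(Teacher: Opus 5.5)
Your bound on $\|M\|$ and the shell decomposition (with $E'_{t,\ell}=M^{-1}E_{t,\ell}$, nonzero only for $t\ge m/\locality$) match the paper. Your Neumann argument $M=A(I-A^{-1}R)$, with $A=e^{-\beta H}e^{\beta(H-H_1)}$, is sound where it applies. (Note $\|R\|\le\sum_{t\ge m/\locality}2^{-t}\le 2^{1-m/\locality}$, a harmless factor of $2$.)

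The gap is that this argument only gives invertibility of $M$ and $\|M^{-1}\|\le\Lambda$ when $\|R\|$ is small, i.e.\ for $m\ge\locality e^{16\beta\locality}$. The lemma is stated for every $m\in\mathbb{Z}^+$. For intermediate $m$ your proposal leaves open whether $M$ is even invertible, so the factorization $M(I+E)$ itself is not established, and you resort to weakening the statement. That weaker version would suffice for the later choice $m=\exp(c_\gamma\beta)$. It is not a proof of the lemma as stated, though, and it rests on a worry that does not apply here.

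The missing observation is that this particular truncation is itself an exact expansional. Write $f_t=\mathcal{L}^t(I)$ with $\mathcal{L}(Y)=-\sum_j[H_j,Y]-YH_1$, and expand over sequences of operations. The terms with $\ell\le m$ are exactly the nonvanishing sequences that use only terms $H_j$ with $\supp(H_j)\subseteq[m]$. Adding back the vanishing sequences of that type changes nothing, since a commutator with a term disjoint from the current support is zero. Hence, for $m\ge\locality$, collecting the terms with $\ell\le m$ is the same as running the recurrence with $H$ replaced by $H_{[m]}=\sum_{\supp(H_j)\subseteq[m]}H_j$. This gives
\[
M=e^{-\beta H_{[m]}}\,e^{\beta(H_{[m]}-H_1)},\qquad M^{-1}=e^{-\beta(H_{[m]}-H_1)}\,e^{\beta H_{[m]}},
\]
and $M=I$ if $m<\locality$.

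The expression for $M^{-1}$ is precisely the $[m]$-truncation of the growth-set expansion of $e^{-\beta(H-H_1)}e^{\beta H}$. Its recurrence $g_t=-[H,g_{t-1}]+H_1g_{t-1}$ has the same structure, so the same triangle-inequality estimate gives $\|M^{-1}\|\le\Lambda$ for every $m$. This is what the paper's remark that ``$M^{-1}$ is a truncation of the series'' relies on. With it, you need neither smallness of $R$ nor any lower bound on $m$.
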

\begin{proof}
First, observe that the coefficients of the series expansion of $ e^{-\beta H} \cdot e^{\beta\paren{H - H_1} } $ for a fixed $t\geq 1$ satisfy the following:
\begin{equation*}
\begin{split}
 {\frac{\beta^t}{t!}} \sum_{\ell=0}^{\infty} \Norm{E_{t,\ell}} & \leq   {\frac{(2\beta)^t}{t!}} \sum_{\ell=0}^{\infty}   P_{t,\ell} \leq \Paren{     \frac{8 \beta \locality }{\log(t)} }^t  
\end{split}
\end{equation*}
where the last inequality follows from \cref{thm:combinatorics} with $C=1$. 
For $t=0$ the coefficient can be explicitly seen to be bounded by $1$. Therefore
\begin{equation}
        \Norm{ e^{-\beta H} \cdot e^{\beta\paren{H - H_1} } }  \leq 1 +  \sum_{t=1}^{\infty} \Paren{ \frac{8 \beta \locality}{\log(t)} }^t .
\end{equation}

For any $t\geq t_* \coloneqq \exp\Paren{ 16 \beta \locality }$, we have $8\beta \locality/\log(t)\leq 1/2$, so the sum decays geometrically after $t_*$. For the terms before $t_*$, we use the bound of Lemma~\ref{lem:suprema-of-concave-functions}, with $\Delta=8 \beta \locality$. 
\begin{equation}
\label{eqn:op-norm-dyson}
    \begin{split}
        \Norm{ e^{-\beta H} \cdot e^{\beta\paren{H - H_1} } } 
        & \leq 1 +\sum_{t=1}^{t_*} \Paren{ \frac{8 \beta \locality}{\log(t)} }^t + \sum_{t=t_*}^{\infty} \Paren{ \frac{8 \beta \locality}{\log(t)} }^t   \\
        &\leq 1 +  t_* \exp(\exp(8\beta \locality)) + 1
        \\
        &\leq 2 +  \exp\Paren{ 16 \beta \locality } \exp(\exp(8\beta \locality)) \\
        & \leq \exp\Paren{\exp\Paren{10 \beta \locality} } \,,
    \end{split}
\end{equation}
where we used $2+\exp(8x)\exp(\exp(4x))\leq \exp(\exp(5 x))$ for $x=2\beta\locality\geq 1$ in the last inequality.
We pick $M = \sum_{t=0}^{\infty} \Paren{\frac{\beta^t}{t!}}  \sum_{b=0}^{m} E_{t,\ell}$, i.e. the truncation of the series in \cref{lem:dyson-series-1d} to $\ell \in [m]$. It also follows from  \cref{lem:dyson-series-1d} that the support of $M$ is $[m]$, where $m > t_*$. We can then define the remainder as $R = \sum_{t = 0}^{\infty} \Paren{\frac{\beta^t}{t!}}  \sum_{\ell > m}   E_{t,\ell}$.  Since $M$ is a truncation of the Araki expansional,  the same operator norm bound continues to hold:
\begin{equation*}
    \Norm{ M } \leq  \exp\Paren{\exp\Paren{10 \beta \locality } }.
\end{equation*}
Further, observe that the expansional $ e^{-\beta(H-H_1)} \cdot e^{\beta H}$ admits the same expansion and therefore the bound in \cref{eqn:op-norm-dyson} applies to the inverse as well. $M^{-1}$ is a truncation of the series and therefore  $\Norm{M^{-1}} \leq  \exp\Paren{\exp\Paren{10 \beta \locality } }$.

Now, we focus on analyzing $E = M^{-1}R$. Let $\Lambda =  \exp\Paren{\exp\Paren{10 \beta \locality} } $. Observe, for the support to reach $m$, we require at least $m/\locality$ many steps and therefore, the terms in series are zero for $t < m/\locality$. Then, $E$ can be expanded as follows: 

\begin{equation*}
    E = M^{-1} \cdot \Paren{  \sum_{t = 0}^{\infty} {\frac{\beta^t}{t!}}  \sum_{\ell > m}   E_{t,\ell}  } = \sum_{t = m/\locality }^{\infty}  {\frac{\beta^t}{t!}} \sum_{\ell = m}^{\infty} E_{t,\ell}'   
\end{equation*}
where $E_{t,\ell}' = M^{-1} E_{t,\ell}$ acts only on $[\ell]$ since $M^{-1}$ acts only on $[m]$, and $\norm{ E_{t,\ell}' } \leq \norm{M^{-1}} \norm{E_{t,\ell}} \leq \Lambda  \cdot 2^t P_{t,\ell}$, which yields the claim. 
\end{proof}

Next, we bound the operator norm and locality of a nested commutator, defined by
$[A,B]_t = [A, [A,B]_{t-1}]$ with $[A,B]_0=B$.

\begin{lemma}[Locality of nested commutators]
\label{lem:nested-commutator-locality-and-norm}
Given a $\locality$-local 1D Hamiltonian $H$, let $X$ such that $\supp(X) \subseteq [\lambda]$ . Then, for any integer $t \geq 1$, $[H, X]_t$ is supported on sites within distance $t(\locality-1)$ from $\supp(X)$ and 
\begin{equation*}
    \Norm{  [H, X]_t } \leq \bigg( 2^t  \cdot \sum_{\ell} P_{t,\ell} + (2\lambda)^t \bigg)  \Norm{X} \,.
\end{equation*}
\end{lemma}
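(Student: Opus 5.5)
We expand the nested commutator multilinearly. Write $H=\sum_j H_j$, so that
\begin{equation*}
[H,X]_t=\sum_{j_1,\dots,j_t}[H_{j_t},[H_{j_{t-1}},\dots,[H_{j_1},X]\dots]].
\end{equation*}
A summand vanishes unless each $H_{j_s}$ overlaps the support accumulated so far, namely $\supp(X)\cup\bigcup_{r<s}\supp(H_{j_r})$. This is because $[H_j,Y]=0$ whenever $\supp(H_j)\cap\supp(Y)=\emptyset$. The locality claim follows by induction on $t$. At each step the surviving term touches the current interval and has length $\locality$, so it extends the support by at most $\locality-1$ sites on either side. After $t$ steps the support therefore lies within distance $t(\locality-1)$ of $\supp(X)$. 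Each surviving summand has norm at most $2^t\Norm{X}$, since $\Norm{[H_j,Y]}\le 2\Norm{H_j}\Norm{Y}\le 2\Norm{Y}$. The norm bound then reduces to counting the surviving sequences $(j_1,\dots,j_t)$.

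We split the surviving sequences into two classes.
\begin{itemize}
\item \textbf{Sequences that stay inside $\supp(X)$.} These are sequences in which every chosen edge is contained in $[\lambda]$. There are at most $\lambda$ such edges, so at most $\lambda^t$ sequences, contributing $(2\lambda)^t\Norm{X}$.
\item \textbf{Sequences in which some edge leaves $[\lambda]$.} Here we want an injection, or a bounded-multiplicity map, into valid growth sets of size $t$. The natural choice keeps the ordered edges $(e_1,\dots,e_t)$ themselves. Every edge $e_s$ intersects $[\lambda]\cup e_1\cup\dots\cup e_{s-1}$, and since $\supp(X)\subseteq[\lambda]$ is an initial segment, everything is anchored at the left end.
\end{itemize}

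The main obstacle is making the second class match Definition~\ref{def:growth-sets} exactly. A valid growth set needs $e_1\ni 1$ and connectivity to the previous edges alone, not to $[\lambda]$. I would handle this by coarse-graining the whole block $[\lambda]$ into the single starting vertex. Concretely, relabel the chain so that $[\lambda]$ is contracted to site $1$. Edges inside $[\lambda]$ then play no role in the support growth. Edges crossing the boundary of $[\lambda]$ become edges containing the new vertex $1$, with at most $\locality-1$ preimages each. Edges outside $[\lambda]$ keep their relative positions.

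Under this relabeling a sequence becomes a valid growth set once the leading steps inside $[\lambda]$ are stripped. The stripped steps, together with the interleaved inside steps, need to be charged. I would try to absorb them into the $\lambda^t$ term by bounding the total count by
\begin{equation*}
\sum_\ell P_{t,\ell}+\lambda^t.
\end{equation*}
If the relabeling is not clean, the fallback is the weaker bound via a binomial interleaving, $\sum_{r}\binom{t}{r}\lambda^{t-r}\sum_\ell P_{r,\ell}$, which is of the same flavour. Within the paper's conventions, the count of escaping sequences of length $t$ is then at most $\sum_\ell P_{t,\ell}$. Each contributes at most $2^t\Norm{X}$, which yields the claimed bound.
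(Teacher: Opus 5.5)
\textbf{Gap in the escaping-sequence count.} Your route is the paper's route. You expand $[H,X]_t$ over sequences $(j_1,\dots,j_t)$, bound each surviving summand by $2^t\Norm{X}$, charge the sequences that stay in $\supp(X)$ to $\lambda^t$, and charge the escaping ones to $\sum_\ell P_{t,\ell}$. The locality part and the per-summand bound are fine. The step you flag as the main obstacle is a genuine gap, though, and no relabeling can close it: the total number of surviving sequences can exceed $\lambda^t+\sum_\ell P_{t,\ell}$. So no split into an ``inside'' class plus an injection of the rest into valid growth sets of size $t$ can exist.

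Concretely, take $\supp(X)=[\locality]$ and $\lambda=\locality$. Prepending the edge $[\locality]$ gives a bijection between surviving sequences of length $t$ (each edge meets $\supp(X)\cup e_{j_1}\cup\dots\cup e_{j_{s-1}}$) and valid growth sets of size $t+1$. Hence there are exactly $\sum_\ell P_{t+1,\ell}$ surviving sequences. For $\locality=2$ these counts are Bell numbers. At $t=3$ there are $B_4=15$ surviving sequences, while $\lambda^t+\sum_\ell P_{3,\ell}=8+B_3=13$. Asymptotically $B_{t+1}/(B_t+2^t)\to\infty$.

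Your contraction of $[\lambda]$ to a point loses two things. First, inside steps (up to $\lambda$ choices each) can be interleaved between escaping steps, and these are counted neither by $\lambda^t$ nor by $P_{t,\ell}$. Second, the first boundary-crossing edge has up to $\locality-1$ choices, whereas a valid growth set has a forced first edge. The paper's proof makes the same assertion (``the number of ordered tuples \dots\ is exactly $P_{t,\ell}$''), so it has the same hole. Your closing sentence simply restates that assertion.

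\textbf{What the expansion actually gives.} A corrected version of your fallback works. Call a step inside if its edge has left endpoint in $[\lambda]$, so there are at most $\lambda$ choices. Every other edge must meet $[\lambda+\locality-1]$ or an earlier outside edge. After shifting by $\lambda$ and coarse-graining as in \cref{lemm:coarse-graining}, these edges form a nearest-neighbour valid growth set, with multiplicity $\locality-1$ per edge. Hence there are at most $\sum_{r}\binom{t}{r}\lambda^{t-r}(\locality-1)^rB_r$ surviving sequences, where $B_r=\sum_\ell P^{(2)}_{r,\ell}$.

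Log-convexity of the Bell numbers (with $B_0=1$) gives $B_r\le B_t^{r/t}$, so this count is at most $2^t\bigl(\lambda^t+(\locality-1)^tB_t\bigr)$. That yields $\Norm{[H,X]_t}\le\bigl(4^t(\locality-1)^tB_t+(4\lambda)^t\bigr)\Norm{X}$. This has the same shape as the stated bound but loses roughly a factor $2^t$ in both terms. It does not prove the inequality as stated, and using it downstream would require re-tuning the constants in \cref{lem:img-time-evo-residual}.
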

\begin{proof}
Let $S_0$ be the support of $X$. Expanding the $t$-th nested commutator, observe that 
\begin{equation*}
    [H, X]_t = \sum_{j_1, j_2, \ldots, j_t} \textrm{ad}_{H_{j_t}}\Paren{ \textrm{ad}_{H_{j_{t-1}}} \ldots \Paren{ \textrm{ad}_{H_{j_1}}(X) } }\,,
\end{equation*}
where $\textrm{ad}_{H_{j_t}}(X) = [H_{j_t}, X]$, i.e. the commutator with the $j_t$-th term in $H$. It is easy to check that the support at step $s \in [t]$ is $S_{s} = S_{s-1} \cup j_s$. If $j_s \cap S_{s-1} =\emptyset$, the summand vanishes, and since $H_{j_s}$ has locality at most $\locality$, $[H, X]_t$ acts on sites at most $(\locality-1) t$ away from $S_0$.

To bound the operator norm, we consider two kinds of sequences, the first kind never leave the support of $X$, and the second kind do. Then, we have
\begin{align*}
    \Norm{ [H, X]_t } & = \Norm{ \sum_{j_1,... j_t \in S_0 } \textrm{ad}_{H_{j_t}}\Paren{ \textrm{ad}_{H_{j_{t-1}}} \ldots \Paren{ \textrm{ad}_{H_{j_1}}(X) } }  + \sum_{   j_s \notin S_0 } \textrm{ad}_{H_{j_t}}\Paren{ \textrm{ad}_{H_{j_{t-1}}} \ldots \Paren{ \textrm{ad}_{H_{j_1}}(X) } }   }   \\
    & \leq  \Norm{ \sum_{j_1,... j_t \in S_0 } \textrm{ad}_{H_{j_t}}\Paren{ \textrm{ad}_{H_{j_{t-1}}} \ldots \Paren{ \textrm{ad}_{H_{j_1}}(X) } }} +  \Norm{ \sum_{   j_s \notin S_0 } \textrm{ad}_{H_{j_t}}\Paren{ \textrm{ad}_{H_{j_{t-1}}} \ldots \Paren{ \textrm{ad}_{H_{j_1}}(X) } }   }  
\end{align*}

For the first summand, since each $j_s \in S_0$, the number of ordered tuples that stay in $S_0$ is at most $\lambda^{t}$. Further, using the naive bound that $\norm{[H_{j_s},X]} \leq 2 \Norm{X}$, we can conclude that 
\begin{equation}
\label{eqn:norm-sequence-contained}
    \Norm{ \sum_{j_1,... j_t \in S_0 } \textrm{ad}_{H_{j_t}}\Paren{ \textrm{ad}_{H_{j_{t-1}}} \ldots \Paren{ \textrm{ad}_{H_{j_1}}(X) } }} \leq (2\lambda)^t \cdot \Norm{X}\,.
\end{equation}

Now, for the sequences that leave $S_0$, let $[\ell]$ be the eventual support. Observe the number of ordered tuples $j_{1}, j_2 , \ldots, j_t$ that overlap the current union at each step and attain a support of $[\ell]$ is exactly $P_{t, \ell}$ and the operator norm can again be bounded as 
\begin{equation}
\label{eqn:norm-sequence-escaping}
    \Norm{ \sum_{   j_s \notin S_0 } \textrm{ad}_{H_{j_t}}\Paren{ \textrm{ad}_{H_{j_{t-1}}} \ldots \Paren{ \textrm{ad}_{H_{j_1}}(X) } }   }  \leq 2^t \cdot \sum_{\ell} P_{t, \ell} \cdot \Norm{X}\, 
\end{equation}
Combining \cref{eqn:norm-sequence-contained} and \cref{eqn:norm-sequence-escaping} yields the claim.
\end{proof}

Next, we show that imaginary time evolution of the error term $E$ does not change the decay behavior, and continues to admit a shell decomposition as earlier.  Lemma~\ref{lem:img-time-evo-residual} below directly implies the entanglement bulk decomposition  stated in \cref{th:entanglement-bulk-decomposition}.

\begin{lemma}[Quasilocality of imaginary time evolution of the residual]
\label{lem:img-time-evo-residual}
Given $\beta \geq 1$, a $\locality$-local Hamiltonian $H$ and $0<\gamma<1$, let $M$ and $E$  be as defined in \cref{lem:truncating-dyson}, with $m = \exp( c_\gamma\beta) $, $c_\gamma=\exp(\locality\log(20/\gamma) )$. Then, $X\coloneq e^{\beta (H-H_1)}(I+ E )  e^{-\beta (H-H_1)}$ is a quasilocal perturbation of the identity with decay $\gamma$. Specifically, 
\begin{equation*}
    e^{\beta (H-H_1)} \; E \;  e^{-\beta (H-H_1)} = \sum_{\ell = m }^{\infty}  F_{\ell} , \quad \textrm{ where }  \supp(F_\ell) \subseteq [\ell],   \; \textrm{ and } \Norm{F_\ell} \leq \gamma^\ell \,. 
\end{equation*}
\end{lemma}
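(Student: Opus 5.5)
We plan to expand the conjugation by the Hadamard formula, $e^{\beta K} Y e^{-\beta K}=\sum_{\tau\geq 0}\frac{\beta^\tau}{\tau!}[K,Y]_\tau$ with $K=H-H_1$, and apply it termwise to the shell decomposition of $E$ from \cref{lem:truncating-dyson}. This gives
\begin{equation*}
e^{\beta K}Ee^{-\beta K}=\sum_{\tau\geq 0}\sum_{t\geq m/\locality}\sum_{\lambda\geq m}\frac{\beta^{\tau+t}}{\tau!\,t!}[K,E'_{t,\lambda}]_\tau .
\end{equation*}
We regroup by the final support. By \cref{lem:nested-commutator-locality-and-norm}, $[K,E'_{t,\lambda}]_\tau$ is supported on $[\lambda+\tau(\locality-1)]$. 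We therefore define $F_\ell$ as the sum of all terms with $\lambda+\tau(\locality-1)=\ell$, or more simply assign each $(\tau,\lambda)$ to $\ell=\lambda+\tau\locality$, which is an upper bound on the support. Then $\supp(F_\ell)\subseteq[\ell]$ and $F_\ell=0$ for $\ell<m$. Together with the identity term of $I+E$, which is unchanged by the conjugation, this gives the claimed form of $X$.

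Next we bound the norms. \cref{lem:nested-commutator-locality-and-norm} gives
\begin{equation*}
\|F_\ell\|\leq\Lambda\sum_{\tau}\sum_{t}\frac{\beta^{\tau+t}}{\tau!\,t!}\Big(2^\tau\textstyle\sum_u P_{\tau,u}+(2\lambda)^\tau\Big)2^tP_{t,\lambda},\qquad \lambda=\ell-\tau\locality.
\end{equation*}
To extract decay in $\ell$ we fix a constant $C=C(\gamma)$, say $C=4/\gamma$, and write $P_{t,\lambda}=C^{-\lambda}\cdot C^\lambda P_{t,\lambda}$. We bound $C^\lambda P_{t,\lambda}$ by \cref{thm:combinatorics}, which is valid since $t\geq m/\locality\geq C^{\locality-1}$ for our choice of $m$. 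This yields
\begin{equation*}
\frac{(2\beta)^t}{t!}C^\lambda P_{t,\lambda}\leq\Big(\frac{8\beta(\locality-1)}{\log(t/C^{\locality-1})}\Big)^t .
\end{equation*}
Because $t\geq m/\locality=e^{c_\gamma\beta}/\locality$, the logarithm is at least of order $c_\gamma\beta-\locality\log C$. The base is then at most $1/2$, even tiny, and the sum over $t$ is at most twice its first term. In particular it is at most $\varepsilon^{m/\locality}$ for a small constant $\varepsilon$, which is also small enough to absorb $\Lambda=\exp\exp(10\beta\locality)$. This absorption holds because $m/\locality\cdot\log(1/\varepsilon)\gg \exp(10\beta\locality)$ when $c_\gamma\geq e^{10\locality}$; this is where $c_\gamma=(20/\gamma)^\locality$ is used, together with $\beta\geq1$.

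The $\tau$-sum is the main obstacle. The commutator factor $\frac{\beta^\tau}{\tau!}\big(2^\tau\sum_uP_{\tau,u}+(2\lambda)^\tau\big)$ has two pieces. The first piece is controlled by \cref{thm:combinatorics} with $C=1$, giving $(8\beta\locality/\log\tau)^\tau$. By \cref{lem:suprema-of-concave-functions}, its sum over all $\tau$ is at most about $\exp\exp(8\beta\locality)$, which is absorbed as above. The second piece, $(2\beta\lambda)^\tau/\tau!$, is dangerous because $\lambda\geq m$ is large. Summed freely it would give $e^{2\beta\lambda}$, which cannot be beaten by the decay $C^{-\lambda}$ alone unless $C>e^{2\beta}$. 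Our first attempt is to choose $C$ depending on $\beta$, namely $C=e^{3\beta}\cdot 20/\gamma$. The constraint $t\geq C^{\locality-1}$ is still satisfied because $m=e^{c_\gamma\beta}$ with $c_\gamma\geq 3\locality+\locality\log(20/\gamma)$. The $\log(t/C^{\locality-1})$ in the denominator remains of order $c_\gamma\beta$, so the $t$-sum stays tiny.

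If this choice of $C$ fails, we will instead exploit the constraint $\tau\leq(\ell-m)/\locality$ together with $\lambda=\ell-\tau\locality$. We will bound $(2\beta\lambda)^\tau/\tau!\cdot C^{-\lambda}$ by splitting into $\tau\leq\lambda$, where the term is at most $(2e\beta)^{\lambda}C^{-\lambda}$ up to $(\lambda/\tau)^\tau\leq e^{\lambda}$, and $\tau>\lambda$, where $\ell\leq(1+\locality)\tau$ and $(2e\beta\lambda/\tau)^\tau$ is compared against the $t$-sum decay. Lemma~\ref{lem:boundonquotientoflogs}-type estimates would handle the cross terms in the ratio of logarithms. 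In either case we end with $\|F_\ell\|\leq\Lambda\cdot(\text{poly factors})\cdot(e^{3\beta}/C)^{\ell}\cdot\varepsilon^{m/\locality}\leq\gamma^\ell$. The remaining sum over $\tau\leq\ell$ costs only a factor $\ell$, which is absorbed by leaving a factor $2^{-\ell}$ slack in $C$.
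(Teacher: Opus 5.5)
\textbf{There is a genuine gap: your bounds give decay in $\lambda$ but not in $\ell$.} Your skeleton matches the paper's: Hadamard expansion, regrouping by $\ell=\lambda+\locality\tau$, \cref{lem:nested-commutator-locality-and-norm}, and separate treatment of the two commutator pieces via \cref{thm:combinatorics}. However, you only ever extract decay in $\lambda=\ell-\locality\tau$, never in $\tau$.

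\begin{itemize}
\item For the first piece, you bound the entire $\tau$-sum of $(8\beta\locality/\log\tau)^\tau$ by the constant $\exp\exp(8\beta\locality)$.
\item For the second piece, you bound $(2\beta\lambda)^\tau/\tau!\le e^{2\beta\lambda}$, which throws away the $1/\tau!$.
\end{itemize}
After these steps, the summands with $\tau=(\ell-m)/\locality$, i.e.\ $\lambda=m$, are controlled by a quantity independent of $\ell$. So your estimates cannot give $\norm{F_\ell}\le\gamma^\ell$ once $\ell\gg m$. The closing display with $(e^{3\beta}/C)^\ell$ does not follow from the preceding steps, which only produce $(\cdot)^{\lambda}$. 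The remark that the $\tau$-sum ``costs only a factor $\ell$'' presupposes that each summand is already of order $\gamma^\ell$.

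\textbf{Your choice of $C$ is also too small in principle.} Take $C=e^{3\beta}\cdot 20/\gamma$. To pull $\gamma^{\locality\tau}$ out of $(2\beta\lambda)^\tau/\tau!$, you must pay $\sup_\tau (2\beta\lambda\gamma^{-\locality})^\tau/\tau!$. This is $e^{2\beta\gamma^{-\locality}\lambda}$ up to a polynomial factor, attained near $\tau\approx 2\beta\lambda\gamma^{-\locality}$. Near $\lambda=m$, the $t$-sum supplies only a $\beta$-independent decay rate per site, since its base is about $8\locality/c_\gamma$. Hence the $\lambda$-decay must have rate of order $\beta\gamma^{-\locality}$, which forces $\log C$ to be of that order. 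Without this, your fallback fails for the same reason: comparing $(2e\beta\lambda/\tau)^\tau$ against ``the $t$-sum decay'' breaks down in the range $\lambda<\tau\le O(\beta\lambda\gamma^{-\locality})$.

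\textbf{The paper supplies the two missing devices.}
\begin{itemize}
\item \emph{Second piece.} The paper multiplies and divides by $(\gamma/4)^{\locality\tau}$ and uses $x^\tau/\tau!\le e^{x}$ with $x=2\beta\lambda(4/\gamma)^{\locality}$. This is your $C^{\lambda}$ device, but with $C=\exp(2\beta(4/\gamma)^\locality)$. \cref{thm:combinatorics} still applies because $\log(m/\locality)\ge 10\locality\beta(4/\gamma)^\locality$. The $t$-sum then has base at most $(\gamma/2)^\locality$ starting at $t=\lambda/\locality$. The summands become $2(\gamma/4)^{\locality\tau}(\gamma/2)^{\ell-\locality\tau}=2(\gamma/2)^{\ell}2^{-\locality\tau}$, which are summable in $\tau$.
\item \emph{First piece.} The paper pairs $(10\beta\locality/\log(\tau+\tfrac{1}{2}))^\tau$ with the leading $t$-term $(10\beta\locality/\log(\ell/\locality-\tau))^{\ell/\locality-\tau}$. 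It rewrites the product as $\bigl(\log(\ell/\locality-\tau)/\log(\tau+\tfrac{1}{2})\bigr)^\tau$ times the tiny base raised to the full power $\ell/\locality$. The ratio of logarithms is then summed by \cref{lem:boundonquotientoflogs}. Alternatively, within your framework, you could extract $(\gamma/2)^{\locality\tau}$ first and apply \cref{lem:suprema-of-concave-functions} with $\Delta=8\beta\locality(2/\gamma)^\locality$. The resulting $\exp\exp(\Delta)$ is absorbed into the smallness of the $t$-sum, which is affordable since $(20/\gamma)^\locality>8\locality(2/\gamma)^\locality$.
\end{itemize}

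A minor slip: absorbing $\Lambda$ only needs $c_\gamma>10\locality$, not $c_\gamma\ge e^{10\locality}$. The latter fails, for example, for $\gamma=1/56$.
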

\begin{proof}
Recall, from \cref{lem:truncating-dyson}, it follows that $$E =  \sum_{t = m/\locality }^{\infty}  {\frac{\beta^t}{t!}} \sum_{\lambda = m}^{\qubits} E_{t,\lambda}' = \sum_{\lambda = m}^{\qubits}   \sum_{t=\lambda/\locality}^{\infty}  {\frac{\beta^t}{t!}} E_{t,\lambda}' ,$$
where the last equality follows from noting that $P_{t,\lambda}=0$ unless $t\geq \lambda/\locality$, and $\Norm{E'_{t,\lambda}} \leq \Lambda \cdot  2^t P_{t,\lambda}$ with $\Lambda =  \exp\Paren{\exp\Paren{10 \beta \locality} }$. 

By the Hadamard formula $e^{-xB} A e^{xB} =\sum_{t=0}^\infty \frac{x^t}{t!} [A,B]_t$, we have
\begin{align*}
        e^{\beta (H-H_1) } E e^{-\beta (H-H_1)} & = \sum_{\tau = 0}^{\infty}\;   {\frac{\beta^\tau}{\tau!}} \;  \sum_{\lambda = m}^{\infty} \; \sum_{t = \lambda/\locality }^{\infty}  \; {\frac{\beta^t}{t!}}    [H-H_1, E_{t,\lambda}']_\tau    \\
        & =  \sum_{\ell = m}^{\infty} \;  \underbrace{ \sum_{\tau =0 }^{\frac{\ell -m}{\locality}} \frac{\beta^\tau}{\tau!} \sum_{t = \ell/\locality -\tau }^{\infty}\frac{\beta^t}{t!}  [H-H_1, E_{t, \ell-\tau \locality }']_\tau  }_{F_\ell}  \,, 
\end{align*}
where the second equality reindexes the sum with  $\lambda=\ell-\tau \locality$. 
Applying \cref{lem:nested-commutator-locality-and-norm} with $X= E_{t,\ell - \tau \locality}'$, with $\norm{X}\leq  \Lambda \cdot  2^t P_{t,\ell-\tau\locality}$, we have
\begin{equation}
\label{eqn:bound-t-nested-commutator-gb}
    \Norm{ [H-H_1,  E_{t,\ell - \tau \locality}' ]_\tau } \leq \Paren{ 2^\tau \sum_{u=1}^{\infty} P_{\tau,u} + (2(\ell-\tau \locality))^\tau }\Lambda\; 2^t P_{t,\ell-\tau\locality},
\end{equation}
and $\supp\Paren{[H-H_1,  E_{t,\ell - \tau \locality}' ]_\tau} \subseteq [\ell ]$, and thus $\supp(F_\ell) \subseteq [\ell ]$. 
We now bound the operator norm of $F_\ell$ as follows: 
\begin{equation}
\label{eqn:norm-ft-split}
    \Norm{F_\ell} \leq \underbrace{ \sum_{\tau =0 }^{\frac{\ell-m}{\locality}} \frac{(2\beta)^\tau}{\tau!}  {  \sum_{\lambda=1}^{\infty} P_{\tau,\lambda}  } \; \sum_{t= \ell/\locality-\tau}^{\infty} \; \frac{(2\beta)^t}{t!}  \; \Lambda  \; P_{t,\ell-\tau\locality} }_{\eqref{eqn:norm-ft-split}.(1) }  + \underbrace{    \sum_{\tau =0 }^{\frac{\ell-m}{\locality}}  \; \frac{(2\beta (\ell-\tau \locality))^\tau}{\tau!} \sum_{\quad\mathclap{t=  \ell/\locality-\tau}\quad}^{\infty} \; \frac{(2\beta)^t}{t!} \; \Lambda  \; P_{t,\ell-\tau\locality}}_{\eqref{eqn:norm-ft-split}.(2)} \,,
\end{equation}
where the inequality follows from \cref{eqn:bound-t-nested-commutator-gb}.

Focusing on the first term, note that $P_{\tau,\lambda}=0$ for $\tau =0$ and for $\tau\geq 1$, we have $\sum_{\lambda=1}^{\infty} P_{\tau,\lambda} \leq  \tau! \Paren{ \frac{4 \locality  }{\log(\tau+1/2)} }^\tau$ (see \cref{thm:combinatorics}). Hence,
\begin{align*}
    \eqref{eqn:norm-ft-split}.(1) &  \leq \sum_{\tau =1}^{\frac{\ell-m}{\locality}} \Paren{ \frac{10 \beta \locality }{ \log(\tau+1/2) } }^{\tau}  \sum_{t= \ell/\locality-\tau}^{\infty} \; \frac{(2\beta)^t}{t!} \;  \Lambda \; P_{t, \ell-\tau\locality} \\
    & \leq   \sum_{\tau =1 }^{\frac{\ell-m}{\locality}} \Paren{ \frac{10 \beta \locality }{ \log(\tau+1/2) } }^{\tau} \sum_{t= \ell/\locality-\tau}^{\infty} \Paren{\frac{8 \beta \locality (\Lambda^{1/t})}{\log(t)}}^{t} \,.
\end{align*}
Note that, from assumption, $m \geq\exp(\beta (20/\gamma)^{\locality}) \geq \exp( 10 \locality \beta (4/\gamma)^{\locality})\locality$. Thus, we have 
\begin{equation}
    \log(t)  \geq \log(m/\locality) \geq 10\locality\beta (4/\gamma)^{\locality}. \label{eq:boundonlogmoverk}
\end{equation}
     Further, 
\begin{equation*}
    \Lambda^{1/t} \leq \exp\Paren{ \frac{\locality}{m} \cdot \log(\Lambda) }\leq \exp\exp\Paren{10\beta  \locality(1- (4/\gamma)^{\locality})}  \leq \exp\exp\Paren{-10\beta  \locality  }\leq 1.001   \, .
\end{equation*}
Therefore ${8 \beta \locality (\Lambda^{1/t})}/{\log(t)}\leq 1/2$, so the last sum is bounded by twice its first term
\begin{align*}
        \eqref{eqn:norm-ft-split}.(1)&  \leq \sum_{\tau =1 }^{\frac{\ell-m}{\locality}} \Paren{ \frac{10\beta \locality }{ \log(\tau+1/2) } }^{\tau} \cdot 2 \cdot \Paren{ \frac{10 \beta \locality}{\log( \ell/\locality- \tau )} }^{ \ell/\locality - \tau  } \\ &  \leq 2\sum_{\tau =1 }^{\frac{\ell-m}{\locality}} \Paren{ \frac{10\beta \locality }{ \log(\tau+1/2) } }^{\tau} \cdot \Paren{ \frac{10 \beta \locality}{\log( \ell/\locality- \tau )} }^{ \ell/\locality - \tau  } \\
        & \leq 2\sum_{\tau =1 }^{\frac{\ell-m}{\locality}} \Paren{\frac{ \log( \ell/\locality -  \tau ) }{\log(\tau+1/2) } }^{\tau} \cdot \Paren{ \frac{10\beta \locality }{ \log( m/\locality ) } }^{\ell/\locality}\\
        & \leq 2\exp\Paren{ \ell/\locality } \cdot (\gamma/4)^{\ell} 
         \leq \frac{1}{2}\gamma^{\ell} \,,
\end{align*}
where the third inequality uses \cref{lem:boundonquotientoflogs} with $\Delta=\ell/\locality$ and Eq.~\eqref{eq:boundonlogmoverk}. 

It remains to bound \eqref{eqn:norm-ft-split}.(2). 
\begin{align*}
        \eqref{eqn:norm-ft-split}.(2) & \leq  \sum_{\tau =0 }^{\frac{\ell-m}{\locality}}  \; \Paren{\frac{\gamma}{4}}^{\locality\tau}\frac{\Paren{2 \beta (\ell-\tau \locality)(\gamma/4)^{-\locality} }^\tau}{\tau!}  \sum_{t= \ell/\locality-\tau}^{\infty} \; \frac{(2\beta)^t}{t!} \;  \cdot \Lambda \cdot P_{t,\ell-\tau\locality}\\
        & \leq  \sum_{\tau =0 }^{\frac{\ell-m}{\locality}}\Paren{\frac{\gamma}{4}}^{\locality\tau}  \; \exp(2\beta(\ell-\tau \locality) (\gamma/4)^{-\locality})  \sum_{t= \ell/\locality-\tau}^{\infty} \; \frac{(2\beta)^t}{t!} \;  \cdot \Lambda \cdot P_{t,\ell-\tau\locality}\\
        & \leq \sum_{\tau =0 }^{\frac{\ell-m}{\locality}}\; \Paren{\frac{\gamma}{4}}^{\locality\tau}\sum_{t=\ell/\locality -\tau}^{\infty} \; \frac{(2\beta)^t}{t!} \cdot \Lambda \cdot (\exp(2\beta (\gamma/4)^{-\locality}))^{\ell-\tau \locality} \cdot P_{t,\ell-\tau \locality}   \;  \\
        & \leq \sum_{\tau =0 }^{\frac{\ell-m}{\locality}}\; \Paren{\frac{\gamma}{4}}^{\locality\tau} \sum_{t= \ell/\locality -\tau}^{\infty}  \Paren{  \frac{8\beta \locality\,  \Lambda^{1/t} }{ \log(t) - 2 \beta\locality  (\gamma/4)^{-\locality} } }^t,
\end{align*}
where the first inequality is obtained  by multiplying and diving by $(\gamma/4)^{\locality}$, the second follows by keeping only the $\tau$-th term in the series expansion of $\exp(x)$, at $x=2\beta(\ell-\tau\locality)(\gamma/4)^{-\locality} \geq 0$, and the last one follows from applying \cref{thm:combinatorics} with $C= \exp(2\beta (\gamma/4)^{-\locality})$.
Since $t\geq m/\locality$, the base of the first term inside the sum is upper bounded by
\begin{align*}
    \frac{8\,  \beta \locality \Lambda^{\locality/m}}{\log(m/\locality)-2\beta\locality(\gamma/4)^{-\locality}} \leq \frac{8 \beta \locality  \cdot 1.001 }{(\gamma/4)^{-\locality}\, \beta \locality(10-2)} \leq \Paren{\frac{\gamma}{2}}^{\locality},
\end{align*}
so the series is dominated by the first term at $t=\ell/\locality-\tau$,
\begin{align*}
        \eqref{eqn:norm-ft-split}.(2)  & \leq  \sum_{\tau =0 }^{\frac{\ell-m}{\locality}}\; \Paren{\frac{\gamma}{4}}^{\locality\tau}     \cdot 2\cdot \Paren{\frac{\gamma}{2}}^{\ell-\locality\tau}  \leq\Paren{\frac{\gamma}{2}}^\ell   \sum_{\tau =0 }^{\frac{\ell-m}{\locality}}\; \frac{2}{2^{\locality\tau}} 
         \leq \frac{1}{2} \gamma^{\ell},
\end{align*}
as desired. Plugging these bounds back into \cref{eqn:norm-ft-split}, we can conclude that $\Norm{F_\ell} \leq \gamma^\ell$, which concludes the proof. 
\end{proof}

%% file: prep_algorithm.tex
\section{Sampling algorithm}
\label{Sec:samplingalgo}
We outline the algorithm stated in Theorem~\ref{thm:efficientAlgorithm}. Each step of the algorithm is labeled by an index $t\in \{0,1,\dots, \qubits-m+1\}$ and three parameters: A Pauli string $ B^{(t)}$ supported on $[t+1,t+b_t]$, a real number $c^{(t)}$ with $|c^{(t)}|\leq \delta^{b_t}$, and a stabilizer product state on the first $t$ qubits $\ket{s^{(1)},s^{(2)},\cdots s^{(t)}}$. 
 \subsection{Normalized distribution}

 Let us apply the (symmetrized) entanglement bulk decomposition  together with \cref{lem:separabilityconjbyquasilocalperturbations} once, i.e., we pin the first qubit. We obtain
\begin{align*}
    \exp(-\beta H)&=M_1 \cdot e^{-\beta H_{[2,\qubits]}/2} \cdot X_1X_1^\dagger \cdot e^{-\beta H_{[2,\qubits]}/2} \cdot M_1^\dagger\\&=\sum_j w_j^{(1)} \underbrace{M_1 \dyad{s^{(1)}_{j_1}}\otimes e^{-\beta H_{[2,\qubits]}/2}(I+c^{(1)}B_{j_1}^{(1)})e^{-\beta H_{[2,\qubits]}/2}M_1^\dagger}_{G^{(1)}_{j_1}},
\end{align*}
where $H_{[a,b]}$ denotes the Hamiltonian with all terms with support outside of $[a,b]$ removed. We want to write this expression as a proper probability mixture of quantum states, so we normalize all operators: 
 \begin{equation}
     g_\beta = \sum_{j_1} p^{(1)}_{j_1}\rho^{(1)}_{j_1},\quad\text{ where } \quad p^{(1)}_{j_1}=w^{(1)}_{j_1}\frac{\tr(G^{(1)}_{j_1})}{\tr(e^{-\beta H})}\quad\text{ and }\quad\rho^{(1)}_{j_1}=\frac{G^{(1)}_{j_1}}{\tr(G^{(1)}_{j_1})}.
 \end{equation} We move to the second qubit.  We again apply the entanglement bulk decomposition and pin the second qubit to obtain $g_\beta =\sum_{j_1, j_2} p^{(1)}_{j_1} p^{(2)}_{j_2|j_1} \rho_{j_1,j_2}$,
 where $p^{(2)}_{j_2|j_1}=w^{(2)}_{j_2|j_1}\tr(G^{(2)}_{j_1,j_2})/\tr(G^{(1)}_{j_1})$, is the  conditional probability of sampling $j_2$, given we sampled $j_1$, and $\rho^{(2)}_{j_1,j_2}=G^{(2)}_{j_1,j_2}/\tr(G^{(2)}_{j_1,j_2})$, where
 \begin{equation}
     G_{j_1,j_2}=M_1 M_2 \dyad{s_{j_1}^{(1)}s_{j_2}^{(2)}} \otimes e^{-\beta H_{[3,\qubits]}/2}(I+c_{j_1,j_2}^{(2)}B_{j_1,j_2}^{(2)})e^{-\beta H_{[3,\qubits]}/2}M_2^\dagger M_1^\dagger.
 \end{equation}

 More generally, after $t$ steps, we can write
 \begin{equation}
     g_\beta = \sum_{j_1,j_2,\dots ,j_t} p^{(1)}_{j_1} p^{(2)}_{j_2|j_1}\dots p^{(t)}_{j_t|j_1,\dots,j_{t-1}} \,\,\rho_{j_1,\dots,j_t},
 \end{equation}
where 
\begin{align}
\label{eq:probsweightsandstates}
    p^{(t)}_{j_t|j_1,\dots,j_{t-1}}=w^{(t)}_{j_t|j_1,\dots,j_{t-1}}\frac{\tr(G^{(t)}_{j_1,\dots,j_{t}})}{\tr(G^{(t-1)}_{j_1,\dots,j_{t-1}})} &&\text{ and }&&\rho^{(t)}_{j_1,\dots,j_t}=\frac{G^{(t)}_{j_1,\dots,j_t}}{\tr(G^{(t)}_{j_1,\dots,j_t})}.
\end{align}
 Here,
 \begin{equation}
     G_{j_1,\dots,j_t}=M_1 \cdots M_t\left[ \sigma_{j_1,\dots,j_t} \otimes e^{-\beta H_{[t+1,\qubits]}/2}(I+c_{j_1,\cdots,j_t}^{(t)}B_{j_1,\cdots,j_t}^{(t)})e^{-\beta H_{[t+1,\qubits]}/2}\right]M_t^\dagger\cdots  M_1^\dagger,
 \end{equation}
 where $\sigma_{j_1,\dots,j_t}=\dyad{ s_{j_1}^{(1)}\cdots s_{j_t}^{(t)}}$ and $G^{(0)}=\exp(-\beta H)$. Once we reach the end of the chain, the resulting  $\rho_{j_1,\cdots,j_n}$ is a pure matrix product state, as desired. 
 
 Ideally, we would just need to sample $p^{\smash{(t)}}_{j_t|j_1,\dots,j_{t-1}}$ at each step $t$. However, we can only approximate $\tr(G^{\smash{(t)}}_{j_1,\dots,j_t})$, including the partition function $G^{\smash{(0)}}=\exp(-\beta H)$, up to constant multiplicative error. Thus, we leverage the weak approximate counting algorithms of Jerrum and Sinclair \cite{Sinclair1989}, which construct a random walk which may backtrack (at any step $t$ we may go back to step $t-1$), and allow us to ultimately sample from the distribution $p_{j_1,\dots,j_t}=p^{\smash{(1)}}_{j_1} p^{\smash{(2)}}_{j_2|j_1}\dots p^{\smash{(t)}}_{j_t|j_1,\dots,j_{t-1}}$ at the end of the chain $t=\tilde{\qubits}\coloneqq \qubits - m+1$ with only ever needing to estimate $p_{j_1,\dots,j_t}$ up to a multiplicative constant error for any intermediate step $t$.

\subsection{Efficiently sampling the distribution}

Theorem 5.8 of Ref.~\cite{bakshi2024high} shows that\footnote{The constants in this theorem can be straightforwardly modified to apply to our current setting.} we can sample efficiently from the {\it leaf} distribution $p_{j_1,\dots,j_{\tilde \qubits}}$ if we are able to respond to three types of queries: 
\begin{enumerate}
    \item Estimate the {\it distortion} $\tr(G^{(t)}_{j_1,\dots,j_{t}})$ up to a constant multiplicative error at any {\it node} $t<\tilde{\qubits}$.
    \item Compute the leaf weight $\tr(G^{(\tilde{\qubits})}_{j_1,\dots,j_{\tilde \tilde{\qubits}}}) \prod_{j=1}^{\tilde{\qubits} }w^{(t)}_{j_t|j_1,\dots,j_{t-1}}$ exactly.
    \item Sample according to the weights $w^{(t)}_{j_t|j_1,\dots,j_{t-1}}$. 
\end{enumerate}

For queries of type 3, we observe that the weights $w^{(t)}_{j_t|j_1,\dots,j_{t-1}}$ correspond to the product of the exponential distributions over $\ell$ and $\ell'$ in Eq.~\eqref{eq:firstdecompXicAX}, and  uniform distributions over $P,P'$ in the same equation, the seven terms from Eq.~\eqref{eq:lemm_full_expression_IpOj}  and two signs in Eq.~\eqref{lemm:plusminuspauli}. We introduce a truncation of $\ell,\ell'$ up to $\ell_{\max}= O(\log(\qubits /\varepsilon))$. The truncated terms have norm bounded by $\gamma^{\ell_{\max}-1}= 1/\text{poly}(\qubits/\varepsilon)$, and the finite truncation has $T=O(\exp(\ell_{\max})^2)=\text{poly}(\qubits/\varepsilon)$ terms (coming from the Pauli strings $P,P'$ which have support of size $|\ell|,|\ell'|\leq \ell_{\mathrm{max}}$), for which one can efficiently compute the weights $w^{\smash{(t)}}_{j_t|j_1,\dots,j_{t-1}}$ and thus sample from the resulting distribution. 

For queries of type 2, since $\tr (G^{(\smash{\tilde{\qubits})}}_{j_1,\dots,j_{\tilde{\qubits}}})$ is simply the normalization of the MPS in Eq.~\eqref{eq:normfactor}, we can  compute it efficiently.

Finally, we just need to explain how to answer queries of type 1. Let us fix $t$ and $j_1,\dots,j_t$, and we will omit the subscripts $j_1,\dots,j_t$ and superscripts $(t)$ henceforth. We need to compute $\tr(G)$ to constant multiplicative error, meaning that we are required to output an estimate $\widetilde{W}$ such that $\tr(G)c\leq \widetilde{W}\leq C\tr(G)$ for some positive constants $c,C$.  Note that
\begin{equation}
    \tr(G)=\tr(O e^{-\beta H_{[t+1,\qubits]}/2}(I+cB)e^{-\beta H_{[t+1,\qubits]}/2}),
\end{equation}
with 
\begin{equation}
    O=\tr_{[t]}\Big(M^\dagger_t \cdots M_1^\dagger M_1\cdots M_t (\sigma\otimes I_{[t+1,m]})  \Big),
\end{equation}
where $\tr_{[t]}$ denotes the partial trace over the first $t$ qubits.
The operator $O$ is positive semidefinite and is supported between qubits $[t+1,t+m]$, since the operators $M_1,\dots,M_t$ act between qubits $[1,t+m]$. Further, $O$ can be computed efficiently, since it is a contraction of a matrix product operator with the product state $\sigma$. Now, we leverage the following:
\begin{fact}If $J$ and $K$ are positive semidefinite matrices, and $J$ is invertible, then
\begin{equation}
    \frac{1}{\norm{J^{-1}}} \tr(K)\leq \tr(JK) \leq \norm{J} \tr(K).
\end{equation}
\end{fact}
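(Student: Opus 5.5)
The plan is to reduce to the Hermitian positive semidefinite case and use the cyclicity of the trace. Since $J \succeq 0$ is invertible, it has a unique positive definite square root $J^{1/2}$, and $\tr(JK) = \tr(J^{1/2} K J^{1/2})$.

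The operator $K' \coloneqq J^{1/2} K J^{1/2}$ is again positive semidefinite. The spectral bounds $\lambda_{\min}(J)\, I \preceq J \preceq \lambda_{\max}(J)\, I$ hold with $\lambda_{\max}(J) = \norm{J}$ and $\lambda_{\min}(J) = 1/\norm{J^{-1}}$. So the task is to sandwich $\tr(K')$ between these two multiples of $\tr(K)$.

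The cleanest route is the reverse conjugation, writing $\tr(JK) = \tr(K^{1/2} J K^{1/2})$ with $K^{1/2}$ the positive square root of $K$. Conjugation preserves the Loewner order, so from $\lambda_{\min}(J) I \preceq J \preceq \lambda_{\max}(J) I$ we obtain
\begin{equation*}
\lambda_{\min}(J)\, K \preceq K^{1/2} J K^{1/2} \preceq \lambda_{\max}(J)\, K .
\end{equation*}
Taking traces, which is monotone on the positive semidefinite cone, gives $\frac{1}{\norm{J^{-1}}}\tr(K) \leq \tr(JK) \leq \norm{J}\tr(K)$, as claimed.

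An alternative is to diagonalize $J = \sum_i \mu_i \dyad{v_i}$ and write $\tr(JK) = \sum_i \mu_i \bra{v_i}K\ket{v_i}$. Since each $\bra{v_i}K\ket{v_i} \geq 0$ and these sum to $\tr(K)$, bounding each $\mu_i$ between $\min_i \mu_i$ and $\max_i \mu_i$ gives the same two inequalities.

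There is no real obstacle here. The only points needing care are that positive semidefiniteness of $K$ is what makes the diagonal entries $\bra{v_i}K\ket{v_i}$ nonnegative, and that invertibility of $J$ guarantees $\min_i \mu_i = 1/\norm{J^{-1}} > 0$. In the intended application, we would take $J = O$ (or rather a positive definite operator built from the $M_i$, whose inverse norm is controlled by $\Lambda$ via \cref{lem:truncating-dyson}) and $K$ the remaining positive factor. This reduces estimating $\tr(G)$ to estimating the trace of a Gibbs-like operator up to the constant factor $\Lambda^{O(1)}$.
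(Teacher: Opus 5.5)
Your proof is correct. Conjugating $\lambda_{\min}(J)\,I \preceq J \preceq \lambda_{\max}(J)\,I$ by $K^{1/2}$ and taking traces (or equivalently expanding $\tr(JK)=\sum_i \mu_i \bra{v_i}K\ket{v_i}$ in an eigenbasis of $J$) is the standard argument. The paper does not prove this fact at all; it simply quotes it. The detour through $J^{1/2}KJ^{1/2}$ in your opening is never used and can be dropped.

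Your closing remark about the intended application, which is not part of the proof, does not match the paper. The paper takes $J = I + cB$, with $B$ a Pauli string and $|c|\le 1/7$, so that $\norm{J}\le 8/7$ and $\norm{J^{-1}}\le 7/6$. It takes $K = e^{-\beta H_{[t+1,\qubits]}/2}\, O\, e^{-\beta H_{[t+1,\qubits]}/2}$. This choice is what produces the constant factors $6/7$ and $8/7$ relating $\tr(G)$ to $W=\tr(K)$.

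Placing the $M_i$-dependent operator $O$ into $J$ instead would give factors governed by $\norm{M_1\cdots M_t}$ and $\norm{(M_1\cdots M_t)^{-1}}$. These are a priori only bounded by roughly $\Lambda^{t}$, so they are not constants and not $\Lambda^{O(1)}$.
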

Taking $K= e^{-\beta H_{[t+1,\qubits]}/2}Oe^{-\beta H_{[t+1,\qubits]}/2}$ and $J=I+cB$ and using that $|c|\leq 1/7$ in the nontrivial case $B\neq I$, we have that
\begin{equation}
    \frac{6}{7} W\leq \tr(G)\leq \frac{8}{7}  \, W,
    \label{eq:approxofGwithW}
\end{equation}
with $W=\tr(O \exp(-\beta H_{[t+1,\qubits]})).$ 

We reduced the problem to computing $W$ to constant multiplicative error. We may use standard MPO approximations for thermal Gibbs states in 1D to estimate $W$. For example, Molnar, Schuch, Verstraete and Cirac~\cite{Molnar2015} construct an MPO $\tilde{\rho}_\beta$ with bond dimension $((\qubits-t)/\delta)^{\bigO{\beta}}$ such that the trace norm $\norm{\exp(-\beta H_{[t+1,\qubits]}) - \tilde{\rho}}_1\leq \delta Z_{[t+1,\qubits]}$ with $Z_{[t+1,\qubits]}=\tr({\exp(-\beta H_{[t+1,\qubits]})})$. From here, we can estimate the partition function to multiplicative error
\begin{equation}
   \tr(\tilde{\rho})(1-\delta)\leq  Z_{[t+1,\qubits]}\leq \tr(\tilde{\rho})(1+\delta)
\end{equation}
and further $W$ to additive error
\begin{equation}
    W=\tr(\tilde{\rho}O)\pm \delta Z_{[t+1,\qubits]} \norm{O}.
\end{equation}
The additive error on $W$ can be converted into a  constant multiplicative error by making $\delta$ sufficiently small (inverse polynomial in $\qubits/\varepsilon$) and disregarding samples which are close to zero while only incurring an error $\bigO{\varepsilon}$ in trace distance.